\newcommand{\vect}[1]{\boldsymbol{#1}}
\newcommand{\p}[0]{\mathcal{P}}
\newcommand{\lbi}[3]{\mathsf{let}\;#1\;\mathsf{be}\;#2\;\mathsf{in}\;#3}
\long\def\symbolfootnote[#1]#2{\begingroup%
\def\thefootnote{\fnsymbol{footnote}}\footnote[#1]{#2}\endgroup}
\newcommand{\exeq}[0]{\stackrel{!}{=}}
\newcommand{\ra}[1]{\stackrel{#1}{\longrightarrow}}
\newcommand{\sem}[2][M\!,g]{ [\![ #2 ]\!]^{}}
\newcommand{\ct}[1]{\underline{#1}}
\newcommand{\return}[0]{\mathsf{return}\;}
\newcommand{\sipm}[4]{\mathsf{pm}\;#1\;\mathsf{as}\;\langle #2,#3\rangle\;\mathsf{in}\;#4}
\newcommand{\upm}[2]{\mathsf{pm}\;#1\;\mathsf{as}\;\langle \rangle\;\mathsf{in}\;#2}
\newcommand{\force}[0]{\mathsf{force}\;}
\newcommand{\thunk}[0]{\mathsf{thunk}\;}
\newcommand{\toin}[3]{#1\;\mathsf{to}\;#2\;\mathsf{in}\;#3}
\newcommand{\Bcat}[0]{\mathcal{B}}
\newcommand{\Ccat}[0]{\mathcal{C}}
\newcommand{\Dcat}[0]{\mathcal{D}}
\newcommand{\proj}[2]{\mathbf{p}_{#1,#2}}
\newcommand{\diag}[2]{\mathsf{diag}_{#1,#2}}
\newcommand{\qu}[2]{\mathbf{q}_{#1,#2}}
\newcommand{\Set}[0]{\mathsf{Set}}
\newcommand{\diagv}[2]{\mathbf{v}_{#1,#2}}
\newcommand{\id}[0]{\mathsf{id}}
\newcommand{\Cat}[0]{\mathsf{Cat}}
\newcommand{\Id}[0]{\mathsf{Id}}
\newcommand{\self}[0]{\mathsf{self}}
\newcommand{\inl}[0]{\mathsf{inl}}
\newcommand{\inr}[0]{\mathsf{inr}}
\newcommand{\refl}[0]{\mathsf{refl}\;}
\newcommand{\type}[0]{\;\;\mathsf{type}}
\newcommand{\vtype}[0]{\;\;\mathsf{vtype}}
\newcommand{\ctype}[0]{\;\;\mathsf{ctype}}
\newcommand{\idpm}[3]{\mathsf{pm}\;#1\;\mathsf{as}\;(\refl #2)\; \mathsf{in}\;#3}
\newcommand{\Fam}[0]{\mathsf{Fam}}
\newcommand{\nil}[0]{\mathsf{nil\;}}
\newcommand{\tr}[0]{\mathsf{tr}\;}
\newcommand{\ctxt}[0]{\;\;\mathsf{ctxt}}
\newcommand{\print}[1]{\mathsf{print}\; #1\;.\; }
\newcommand{\diverge}[0]{\mathsf{diverge}\;}
\newcommand{\nondet}[2]{\mathsf{choose}_{#1}(#2)}
\newcommand{\readcell}[2]{\mathsf{readto}_{#1}(#2)}
\newcommand{\writecell}[1]{\mathsf{write}\;#1\;.\;}
\newcommand{\error}[1]{\mathsf{error\;#1\;}}
\theoremstyle{plain}
\newtheorem{theorem}{Theorem}[section]
\newtheorem{definition}[theorem]{Definition}
\newtheorem{remark}[theorem]{Remark}
\newtheorem*{claim*}{Claim}
\begin{document}
\title{\textbf{A Framework for  Dependent Types and Effects}}
\author{Matthijs V\'ak\'ar}
\date{Oxford, UK, \today}

\maketitle

\begin{abstract}We extend Levy's call-by-push-value (CBPV) analysis from simple to dependent type theory, to gain a better understanding of how to combine dependent types with effects. We define a dependently typed extension of CBPV, dCBPV-, and show that it has a very natural small-step operational semantics, which satisfies subject reduction and (depending on the effects present) determinism and strong normalization, and an elegant categorical semantics, which - surprisingly - is no more complicated than the simply typed semantics.

We have full and faithful translations from a dependently typed version of Moggi's monadic metalanguage and of a call-by-name (CBN) dependent type theory into dCBPV- which give rise to the expected operational behaviour. However, it turns out that dCBPV- does not suffice to encode call-by-value (CBV) dependent type theory or the strong (dependent) elimination rules for positive connectives in CBN-dependent type theory.

To mend this problem, we discuss a second, more expressive system dCBPV+, which additionally has a principle of Kleisli extension for dependent functions. We obtain the desired CBV- and CBN-translations of dependent type theory into dCBPV+. It too has a natural categorical semantics and operational semantics. However, depending on the effects we consider, we may lose uniqueness of typing, as the type of a computation may become more specified as certain effects are executed. This idea can be neatly formalized using a notion of subtyping.

This work arose as a generalisation from our previous work on linear dependent type theory to more general, possibly non-commutative effects. In this vein, we shall end with a brief discussion of a dependently typed version of the enriched effect calculus and various other extensions of dCBPV.

We hope that the theoretical framework of this paper on the one hand provides at least a partial answer to the fundamental type theoretic question of how one can understand the relationship between computational effects and dependent types. On the other hand, we hope it can contribute a small-step towards the ultimate goal of an elegant  fully fledged language for certified effectful programming.
\end{abstract}

\clearpage
\tableofcontents

\clearpage
\section{Introduction}
Dependent types \cite{martin1998intuitionistic,hofmann1997syntax} are slowly being taking up by the functional programming community and are in the transition from a quirky academic hobby to a practical approach to building certified software. Purely functional dependently typed languages like Coq \cite{Coq:manual} and Agda \cite{norell2007towards} have existed for a long time. If the technology is to become more widely used in practice, however, it is crucial that dependent types can be smoothly combined with the wide range of effects that programmers make use of in their day to day work, like non-termination and recursion, mutable state, input and output, non-determinism, probability and non-local control.

Although some languages exist which combine dependent types and effects, e.g. Cayenne \cite{augustsson1998cayenne}, $\Pi\Sigma$ \cite{altenkirch2010pisigma}, Zombie \cite{casinghino2014combining}, Idris \cite{brady2013idris}, Dependent ML \cite{xi1999dependent}, F* \cite{swamy2015dependent}, there have always been some strict limitations. For instance, the first four only combine dependent types with first class recursion (although Idris has good support for emulating other effects), Dependent ML constrains types to depend only on static natural numbers and F* does not allow types to depend on effectful terms at all (including non-termination). The sentiment of most papers discussing the marriage of these ideas seems to be that dependent types and effects form a difficult though not impossible combination. However, as far as we are aware, no general theoretical analysis has been given which discusses on a conceptual level the possibilities, difficulties and impossibilities of combining effects and dependent types.\\
\\
In a somewhat different vein, there has long been an interest in combining linearity and dependent types. This was first done in Cervesato and Pfenning's LLF \cite{cervesato1996linear}. Recently, a semantic analysis of LLF was given by the author in \cite{vakar2014syntax,vakar2015syntax} which has  proved important e.g. in the development of a game semantics for dependent type theory. One aspect that this abstract semantics as well as the study of particular models highlight is - more so than in the simply typed case - the added  insight and flexibility obtained by decomposing the $!$-comonad into an adjunction\footnote{Indeed, connectives seem to be most naturally formulated one either the linear or cartesian side: $\Sigma$- and $\Id$-constructors operate on cartesian types while $\Pi$-constructors operate on linear types.}. At the syntactic level this corresponds to working with dependently typed version of Benton's LNL-calculus \cite{benton1995mixed} rather than Barber and Plotkin's DILL \cite{barber1996dual}, as was done in \cite{krishnaswami2015integrating}.

Similarly, it has proved problematic to give a dependently typed version of Moggi's monadic metalanguage \cite{moggi1991notions}. We hope that this paper illustrates that also in this case a decomposed adjunction perspective, like that of call-by-push-value \cite{levy2012call}, is more flexible than a monadic perspective. Recall from \cite{benton1996linear} that if we decompose both linear logic and the monadic metalanguage into an adjunction, we can see the former to be a restricted case of the latter which only describes (certain) commutative effects.\\
\\
We show that the analysis of linear dependent type theory of \cite{vakar2015syntax,krishnaswami2015integrating,vakar2014syntax} generalises straightforwardly to general (non-commutative) effects to give a dependently typed call-by-push-value calculus that we call dCBPV-, which allows types to depend on values but which lacks a Kleisli extension (or sequencing) principle for dependent functions. This calculus is closely related to Harper and Licata's dependently typed polarized intuitionistic logic \cite{licata2009positively}. Its categorical semantics is obtained from that for linear dependent types, by relaxing a condition on the adjunction which would normally imply, among other things, the commutativity of the effects described. It straightforwardly generalises Levy's adjunction models for call-by-push-value \cite{levy2005adjunction} (from locally indexed categories to more general comprehension categories \cite{jacobs1993comprehension}) and, in a way, simplifies Moggi's strong monad models for the monadic metalanguage \cite{moggi1991notions}, as was already anticipated by Plotkin in the late 80s: in a dependently typed setting the monad strength follows straightforwardly from the natural demand that its adjunction is compatible with substitution and, similarly, the distributivity of coproducts follows from their compatibility with substitution. In fact, we believe call-by-push-value is most naturally formulated in a dependently typed setting, especially from a categorical semantic point of view. The small-step operational semantics for CBPV of \cite{levy2012call} transfers to dCBPV- without any difficulties.

When formulating the obvious candidate CBV- and CBN-translations of dependent type theory into dCBPV-, it becomes apparent that the CBN-translation is only well-defined if we work with the weak (non-dependent) elimination rules for positive connectives, while the CBV-translation is ill-defined altogether. To obtain a CBV-translation and the CBN-translation in all its glory, we have to add a principle of Kleisli extensions (or sequencing) for dependent functions to dCBPV-. We call the resulting calculus dCBPV+, to which we can easily extend our categorical and operational semantics. Normalization and determinacy results for the operational semantics remain the same. However, depending on the effects we consider, we may need to add extra coercion rules to the calculus to salvage subject reduction. These mechanisms of subtyping \cite{aspinall1996subtyping,luo1999coercive} embody the idea that a computation can obtain a more precise type after certain effects, like non-deterministic branching, are executed. In particular, they break uniqueness of typing. However, instead, we would expect a notion of minimal typing.

Before we conclude, we include a discussion of the possibility of adding some additional connectives and the matter of a dependently typed enriched effect calculus \cite{egger2009enriching}. Finally, we discuss the possible uses of the dCBPV- and dCBPV+ as a basis for effectful certified programming and indicate some of the obvious next steps in this line of work.\\
\\
We hope this analysis gives a helpful theoretical framework to study various combinations of dependent types and effects. In particular, it should give a robust motivation for the equations we should expect to hold in both CBV- and CBN-versions of effectful dependent type theory, through their translations into dependently typed CBPV. Moreover, it explains why combinations of dependent types and effects are slightly more straightforward in CBN than in CBV and highlights the need for a form of subtyping if one wants to combine certain effects with dependent types in a CBV-fashion.

Furthermore, we note that an expressive system like CBPV or a monadic language, where one has control over where effects occur, is even more important in presence of dependent types as it allows us to simultaneously use the language as an effectful programming language and a pure logic. We hope that dCBPV can be expanded in future to yield a language that can serve both as an effectful programming language and as a logic for reasoning about effectful programs. This would be similar to Hoare Type Theory \cite{nanevski2006polymorphism} in its aim, but different in its implementation.

This work again emphasizes the phenomenon that we can straightforwardly let types depend on values, but not necessarily computations, as type dependency on values is almost forced on us by both the syntax and categorical semantics of CBPV. Perhaps the analogy in CBPV is that functions, like dependent types, always take values as input. The only way to have a function take a computation as input is to input a thunk of a computation. It is no different with dependent types. This should be compared to the situation in linear logic where types are only allowed to depend on cartesian terms and not linear ones and the situation in polarised logic where types are only allowed to depend on terms of positive types and not negative types. Generally, type dependency on codata seems to be a subtle issue.

\subsubsection*{Acknowledgements}
I thank Tim Zakian, Alex Kavvos, Sam Staton and Mario Alvarez-Picallo for many interesting discussions and Paul Blain Levy for his patient explanations. I am grateful to Samson Abramsky for all his support. The author is supported by the EPSRC and the Clarendon Fund.

\clearpage
\section{Simply Typed Call-by-Push-Value, A Recap}
We briefly recall the syntax and categorical semantics of a variant of simply typed call-by-push-value  \cite{levy2012call}. All the material on the syntax of simple CBPV in section \ref{sec:scbpvsyn} is due to Levy. In section \ref{sec:scbpvsem}, we discuss a modified but equivalent (in the categorical sense) presentation of Levy's categorical semantics of simple CBPV that makes the transition to dependent types more natural, after which we give a few examples of models in section \ref{sec:simplmod}. Next, we briefly discuss the small-step operational semantics for CBPV in section \ref{sec:simplop}. Finally, in section \ref{sec:simpleff}, we briefly mention how one proceeds to add effects to the pure CBPV calculus we had discussed so far, which was after all the point of the whole endeavour. For this, we mostly take an operational point of view.

\subsection{Syntax} \label{sec:scbpvsyn}

We encourage the reader to look at the syntax of call-by-push-value (CBPV) in the following slightly simplified way: as providing an adjunction decomposition of the monadic metalanguage, similar (dual) to the one that Benton's LNL-calculus gives of (the comonadic) DILL, but in the more general setting of possibly non-commutative effects.

CBPV distinguishes between two classes of types: those of values and computations. These can be similarly read as, respectively, positive and negative types or as types of data and codata. The linear types of the LNL-calculus should be thought to arise as a special case of computation types, while its cartesian types live on the value side. The idea is that in natural deduction, for some connectives, the positive/value connectives, the introduction rule involves a choice, while the elimination rule is invertible (works through pattern matching) and for others, the negative/computation connectives, the opposite is true. As a rule of thumb, connectives that operate on computation types arise as left adjoint functors in the categorical semantics, while connectives that operate on value types are right adjoint functors.

Call-by-push-value (and polarised logic) chooses to keep the classes of types formed from both classes of connectives separate and adds two extra connectives $F$, which turns a value type into a type of computations that produce a result of the original value type, and $U$, which turns a computation type into a value type of thunks of computations of the original computation type. This allows us to use the full $\beta\eta$-equational theory for all connectives, even in the presence of effects. Importantly, we have call-by-value and call-by-name embeddings into CBPV, that give rise to their usual equational theories.\\
\\
CBPV has two classes of types - we sometimes underline types to emphasize that we mean a computation type:
\begin{align*}&\textnormal{value types \quad}A \qquad\qquad\qquad\textnormal{computation types \quad}\ct{B}.\end{align*}
In this paper, simple value and computation types are formed using the connectives of figure \ref{fig:vctypes}, excluding general inductive and coinductive types. Here, $1,\times$ will denote pattern-matching products, while $\top,\&$ are projection products\footnote{Note that these correspond to the two ways of defining products in the categorical semantics: as left adjoints to the internal hom or as right adjoints to the diagonal functor.}. More generally, following Levy, we include primitives $\Pi_{1\leq i\leq n}\ct{B_i}$ for $n$-ary projection products and $\Sigma_{1\leq i\leq n}A_i$ for $n$-ary sum (we write nullary and binary sum as $0$ and $A+A'$). We do this to emphasize their similarity to $\Pi$- and $\Sigma$-types in the dependently typed version of CBPV.

\begin{figure}[!ht]
\begin{tabular}{c|c}
\textbf{value/positive types} $A$& \textbf{computation/negative types} $\ct{B}$\\
\hline
$0$, $A + A'$, $\Sigma_{1\leq i\leq n}A_i$ & $A \Rightarrow \ct{B}$\\
$1$, $A \times A'$ & $\top$, $\ct{B} \& \ct{B'}$, $\Pi_{1\leq i\leq n}\ct{B_i}$\\
$U\ct{B}$ & $FA$\\
(inductive types) & (coinductive types)
\end{tabular}
\caption{\label{fig:vctypes} An overview of the simple value and computation types we consider with exception of general inductive and coinductive types which we shall not attempt to incorporate.}
\end{figure}

Similarly, CBPV has separate typing judgements for terms representing values and computations, respectively,
\begin{align*}
\Gamma\vdash^v a:A\qquad\qquad\qquad
\Gamma\vdash^c b:\ct{B}.
\end{align*}
Here, $\Gamma$ is a context, or list $x_1:A_1,\ldots,x_n:A_n$ of declarations of distinct identifiers $x_i$ of value type $A_i$. As usual, we distinguish between free and bound (i.e. non-free) identifiers and consider terms up to $\alpha$-equivalence, or permutation of their bound identifiers. The rules of the type theory force the free identifiers of a well-typed term to be declared in the context. For notational convenience, we treat indices $i$ of (terms of) a sum $\Sigma_{1\leq i\leq n}A_i$ or product $\Pi_{1\leq i\leq n}\ct{B_i}$ similarly to bound identifiers. A proper formal treatment would involve including the indices and their range in the context, to distinguish between bound and free indices and to consider freshness of the appropriate indices in various $\eta$-rules. We prefer to avoid this extra formality and keep their treatment informal as we are convinced the intended meaning will be clear to the reader and anyone so inclined can fill in the technical details. 

These typing judgements obey the rules of figure   \ref{fig:vcterms}.\vspace{-5pt}
\begin{figure}[!ht]
\centering
\fbox{\parbox{\linewidth}{
\begin{tabular}{ll}
\AxiomC{}
\UnaryInfC{$\Gamma,x:A,\Gamma'\vdash^v x:A$}
\DisplayProof\hspace{50pt} & 
\AxiomC{$\Gamma\vdash^v V:A$}
\AxiomC{$\Gamma,x:A,\Gamma'\vdash^{v/c} R:{B}$}
\BinaryInfC{$\Gamma,\Gamma'\vdash^{v/c} \lbi{x}{V}{R} :{B}$}
\DisplayProof\\
&\\
\AxiomC{$\Gamma\vdash^v V:A$}
\UnaryInfC{$\Gamma\vdash^c \return\; V:FA$}
\DisplayProof &
\AxiomC{$\Gamma\vdash^c M:FA$}
\AxiomC{$\Gamma,x:A,\Gamma'\vdash^c N:\ct{B}$}
\BinaryInfC{$\Gamma ,\Gamma'\vdash^c \toin{M}{x}{N}:\ct{B}$}
\DisplayProof\\
&\\
\AxiomC{$\Gamma\vdash^c M:\ct{B}$}
\UnaryInfC{$\Gamma\vdash^v \thunk M:U\ct{B}$}
\DisplayProof
&
\AxiomC{$\Gamma\vdash^v V: U\ct{B}$}
\UnaryInfC{$\Gamma\vdash^c \force V: \ct{B}$}
\DisplayProof\\
&\\
\AxiomC{$\Gamma\vdash^v V_i: A_i$}
\UnaryInfC{$\Gamma\vdash^v \langle i,V_i\rangle : \Sigma_{1\leq i\leq n}A_i$}
\DisplayProof
&
\AxiomC{$\Gamma\vdash^v V: \Sigma_{1\leq i\leq n}A_i$}
\AxiomC{$\{\Gamma,x:A_i\vdash^{v/c} R_i : {B}\}_{1\leq i\leq n}$}
\BinaryInfC{$\Gamma\vdash^{v/c} \sipm{V}{i}{x}{R_i} : {B}$}
\DisplayProof\\
&\\
\AxiomC{}
\UnaryInfC{$\Gamma\vdash^v\langle\rangle :1$}
\DisplayProof&
\AxiomC{$\Gamma\vdash^v V:1$}
\AxiomC{$\Gamma\vdash^{v/c} R:{B}$}
\BinaryInfC{$\Gamma\vdash^{v/c} \upm{V}{R}:{B}$}
\DisplayProof
\\
&\\
\AxiomC{$\Gamma\vdash^v V_1:A_1$}
\AxiomC{$\Gamma\vdash^v V_2:A_2$}
\BinaryInfC{$\Gamma\vdash^v \langle V_1,V_2\rangle :A_1\times A_2$}
\DisplayProof
&
\AxiomC{$\Gamma\vdash^v V: A_1\times A_2$}
\AxiomC{$\Gamma,x:A_1,y:A_2\vdash^{v/c} R:{B}$}
\BinaryInfC{$\Gamma\vdash^{v/c} \sipm{V}{x}{y}{R}:{B}$}
\DisplayProof\\
&\\
\AxiomC{$\{\Gamma\vdash^c M_i :\ct{B_i}\}_{1\leq i\leq n}$}
\UnaryInfC{$\Gamma\vdash^c \lambda_i M_i : \Pi_{1\leq i\leq n}\ct{B_i}$}
\DisplayProof
&
\AxiomC{$\Gamma\vdash^c M: \Pi_{1\leq i\leq n}\ct{B_i}$}
\UnaryInfC{$\Gamma\vdash^c i\textquoteleft M : \ct{B_i}$}
\DisplayProof\\
&\\
\AxiomC{$\Gamma,x:A\vdash^c M:\ct{B}$}
\UnaryInfC{$\Gamma\vdash^c \lambda_xM:A\Rightarrow\ct{B}$}
\DisplayProof
&
\AxiomC{$\Gamma\vdash^v V:A$}
\AxiomC{$\Gamma\vdash^c M:A\Rightarrow\ct{B}$}
\BinaryInfC{$\Gamma\vdash^c V\textquoteleft M : \ct{B}$}
\DisplayProof
\end{tabular}
}
}
\caption{\label{fig:vcterms} Values and computations of simple CBPV. A rule involving $\vdash^{v/c}$ should be interpreted as a shorthand for two rules: one with $\vdash^v$ and one with $\vdash^c$ in both the hypothesis and conclusion.}
\end{figure}\quad\\
We consider these terms up to $\alpha$-equivalence and define an operational semantics of them as such in section \ref{sec:simplop}. We frequently also consider the terms up to the equational theory of figure \ref{fig:vceqs} together with the rules which state that all term formers respect equality and that equality is an equivalence relation, where we write $M[V/x]$ for the syntactic meta-operation of capture avoiding substitution of $V$ in $M$. We shall see that this equational theory naturally arises from the categorical semantics of simple CBPV.
\begin{figure}[!ht]
\fbox{
\parbox{\linewidth}
{
\begin{tabular}{ll}
$\lbi{x}{V}{R}=R[V/x]$ &\\
$\toin{(\return V)}{x}{M} = M[V/x]$ & $M =\toin{M}{x}{\return x}$\\
$\force\thunk M=M$ & $V=\thunk\force V$\\
$\sipm{\langle i, V\rangle }{i}{x}{R_i}=R_i[V/x]$ & $R[V/z]\stackrel{\#x}{=}\sipm{V}{i}{x}{R[\langle i,x\rangle/z]}$\\
$\upm{\langle\rangle}{R}=R$ & $R[V/z]=\upm{V}{R[\langle\rangle/z]}$\\
$\sipm{\langle V,V'\rangle }{x}{y}{R}=R[V/x,V'/y]$ & $R[V/z]\stackrel{\#x,y}{=}\sipm{V}{x}{y}{R[\langle x,y\rangle/z]}$\\
$i\textquoteleft \lambda_j M_j=M_i$ & $M=\lambda_i i\textquoteleft M$\\
$V\textquoteleft \lambda_x M=M[V/x]$ & $M\stackrel{\#x}{=}\lambda_x x\textquoteleft M$\\
$\toin{(\toin{M}{x}{N} )}{y}{N'}\stackrel{\#x}{=}\toin{M}{x}{(\toin{N}{y}{N'})}$ &\\
$\toin{M}{x}{\lambda_i N_i}=\lambda_i (\toin{M}{x}{N_i})$ & \\
$\toin{M}{x}{\lambda_y N}\stackrel{\#y}{=}\lambda_y(\toin{M}{x}{N})$ & \\
\end{tabular}
}
}
\caption{\label{fig:vceqs} Equations of simple CBPV. These should be read as equations of typed terms: we impose them if we can derive that both sides of the equation are well-typed terms of the same type in the same context. We write $\stackrel{\#x_1,\ldots,x_n}{=}$ to indicate that for the equation to hold, the identifiers $x_1,\ldots,x_n$ should, in both terms being equated, be replaced by fresh identifiers, in order to avoid unwanted variable bindings.}
\end{figure}\quad\\
\\
Recall that a call-by-value (CBV) and call-by-name (CBN) evaluation strategy on the $\lambda$-calculus generally give rise to different equational theories (in presence of effects) \cite{plotkin1975call}. For instance, the $\eta$-rule for function types typically fails in the former and that for sum types in the latter. CBPV gives rise to both of these equational theories by embedding the $\lambda$-calculus either with a CBV- or with a CBN-translation. 

In presence of effects, the usual pure connectives of products, coproducts and function types bifurcate into many variants due to the distinction of versions of different arities and the distinction between projection and pattern matching products. These are nicely and uniformly treated in Levy's Jumbo $\lambda$-calculus \cite{levy2006jumbo}. There are fully faithful translations $(-)^v$ and $(-)^n$, respectively, from CBV- and CBN-versions of this whole calculus into CBPV \cite{levy2012call} and, in fact, the same is true if we consider arbitrary theories rather than the pure calculi. To convey the intuition without getting stuck on technicalities, we present some special cases of the translations in figures \ref{fig:cbvtrans} and \ref{fig:cbntrans}.

\begin{figure}[!ht]
\fbox{
\parbox{\linewidth}{
\begin{tabular}{l|l||l|l}
\textbf{CBV type}  & \textbf{CBPV type} & \textbf{CBV term } & \textbf{CBPV term}\\
\hline
$A$ & $A^v$ & $x_1:A_1,\ldots,x_m:A_m\vdash M:A$ & $x_1:A_1^v,\ldots,x_m:A_m^v\vdash^c M^v:F(A^v)$\\
 && $x$& $\return x$\\
  &&$\lbi{x}{M}{N}$ & $\toin{M^v}{x}{N^v}$ \\
$\Sigma_{1\leq i\leq n }A_i$ & $\Sigma_{1\leq i\leq n }A_i^v$& $\langle i,M\rangle $&$\toin{M^v}{x}{\return \langle i, x\rangle }$ \\
 &&$\sipm{M}{i}{x}{N_i}$& $\toin{M^v}{z}{(\sipm{z}{i}{x}{N_i^v})}$\\
$\Pi_{1\leq i\leq n}A_i $ & $U\Pi_{1\leq i \leq n} FA_i^v$ &$\lambda_iM_i$ &$\return \thunk (\lambda_i  M_i^v)$\\
&&$i\textquoteleft  N $&$\toin{N^v}{z}{(i\textquoteleft \force z)}$\\
$A\Rightarrow A'$ & $U(A^v \Rightarrow F A'^v)$ & $\lambda_x M$&$\return \thunk \lambda_x M^v$\\
&&$M\textquoteleft N$ &$\toin{M^v}{x}{(\toin{N^v}{z}{(x\textquoteleft \force z)})}$\\
$1$ & $1$ & $\langle\rangle$ & $ \return \langle\rangle$  \\
&&$\upm{M}{N}$&$\toin{M^v}{z}{(\upm{z}{N^v})}$\\
$A \times A'$ & $A^v \times A'^v$ & $ \langle M, N\rangle $  & $\toin{M^v}{x}{(\toin{N^v}{y}{\return \langle x,y\rangle})}$\\
&&$\sipm{M}{x}{y}{N}$&$\toin{M^v}{z}{(\sipm{z}{x}{y}{N^v})}$\\
\end{tabular}
}
}
\caption{\label{fig:cbvtrans} A CBV-translation of a simple $\lambda$-calculus into CBPV.}
\end{figure}

\begin{figure}[!ht]
\fbox{
\parbox{\linewidth}{
\begin{tabular}{l|l||l|l}
\textbf{CBN type}  & \textbf{CBPV type} & \textbf{CBN term } & \textbf{CBPV term }\\
\hline
$\ct{B}$ &  $\ct{B}^n$& $x_1:\ct{B}_1,\ldots,x_m:\ct{B}_m\vdash M:\ct{B}$& $x_1:U\ct{B}_1^n,\ldots,x_m:U\ct{B}_m^n\vdash^c M^n:\ct{B}^n$ \\
 && $x$& $\force x$\\
  & & $\lbi{x}{M}{N}$ & $\lbi{x}{(\thunk M^n)}{N^n}$ \\
$\Sigma_{1\leq i\leq n }\ct{B}_i$ & $F\Sigma_{1\leq i\leq n }U\ct{B}_i^n$& $\langle i,M\rangle $&$\return \langle i,\thunk M^n\rangle $ \\
 & &$\sipm{M}{i}{x}{N_i}$&$\toin{M^n}{z}{(\sipm{z}{i}{x}{N_i^n})}$ \\
$\Pi_{1\leq i\leq n}\ct{B}_i $ & $\Pi_{1\leq i \leq n} \ct{B}_i^n$ & $\lambda_iM_i$& $\lambda_iM_i^n$\\
 && $i\textquoteleft M$ & $i\textquoteleft M^n$\\
$\ct{B}\Rightarrow \ct{B'}$ & $(U\ct{B}^n)\Rightarrow \ct{B'}^n$ & $\lambda_x M $& $\lambda_xM^n$\\
 &&$N\textquoteleft M$ & $(\thunk N^n) \textquoteleft M^n$ \\
$1$ & $F1$ & $\langle\rangle$ & $\return \langle\rangle$  \\
&&$\upm{M}{N}$&$\toin{M^n}{z}{(\upm{z}{N^n})}$\\
$\ct{B} \times \ct{B'}$ & $F(U\ct{B}^n \times U\ct{B'}^n)$ & $\langle M, N\rangle $  & $\return \langle \thunk M^n,\thunk N^n\rangle$\\
&& $\sipm{M}{x}{y}{N}$& $\toin{M^n}{z}{(\sipm{z}{x}{y}{N^n})}$ \\
\end{tabular}
}
}
\caption{\label{fig:cbntrans} A CBN-translation of a simple $\lambda$-calculus into CBPV.}
\end{figure}

\clearpage
\subsection{Categorical Semantics}
\label{sec:scbpvsem}
CBPV admits a simple notion of a categorical model. We present a variation of that of \cite{levy2005adjunction} to allow a smooth transition to dependent types.

\subsubsection{Intermezzo: Categorical Semantics of Pure Dependent Type Theory}
The reader might wonder why we pause to recall the semantics of pure dependent type theory, before discussing the semantics of simple CBPV. The trick is that the latter is treated most easily in terms of concepts developed for the former. This reflects one of the points this paper is trying to make: paradoxically, the semantics of dependently typed CBPV is easier to describe - in a way - than the simply typed version.

\begin{definition}[Comprehension Axiom] Let $\Bcat^{op}\ra{\Ccat}\Cat$ be a strict\footnote{For brevity, from now on we shall drop the modifier "strict". For instance, if we mention an indexed honey badger, we shall really mean a strict indexed honey badger.} indexed category (writing $\Cat$ for  the category of small categories and functors). Given $B'\ra{f} B$ in $\Bcat$, let us write $-\{f\}$ for the change of base functor $\Ccat(B)\ra{}\Ccat(B')$. Recall that $\Ccat$ is said to satisfy the \emph{comprehension axiom} if
\begin{itemize}
\item $\Bcat$ has a terminal object $\cdot$;
\item all fibres $\Ccat(B)$ have terminal objects $1_B$ which are stable under change of base (for which we just write $1$);
\item the presheaves (writing $\Set$ for  the category of small sets and functions)
\begin{diagram}
(\Bcat/B)^{op} & \rTo & \Set\\
(B'\ra{f}B) & \rMapsto  & \Ccat(B')(1,C\{f\})
 \end{diagram}
are representable. That is, we have representing objects $B.C\ra{\proj{B}{C}}B$ and natural bijections
\begin{diagram}
\Ccat(B')(1,C\{f\})& \rTo^{\cong} &\Bcat/B(f,\proj{B}{C})\\
c & \rMapsto & \langle f,c\rangle .
\end{diagram}
\end{itemize}
We write $\diagv{B}{C}$ for the element of $\Ccat(B.C)(1,C\{\proj{B}{C}\})$ corresponding to $\id_{\proj{B}{C}}$ (the universal elements of the representation). We define the morphisms \mbox{\begin{diagram} B.C & \rTo^{\diag{B}{C}:=\langle\id_{B.C},\diagv{B}{C} \rangle} & B.C.C\{\proj{B}{C}\}\end{diagram}} and \mbox{\begin{diagram}B'.C\{f\} & \rTo^{\qu{f}{C}:=\langle\proj{B'}{C\{f\}};f,\diagv{B'}{C\{f\}}\rangle} & B.C\end{diagram}}. We have maps
\begin{diagram}
\Ccat(B)(C',C)& \rTo^{\proj{B}{-}} &\Bcat/B(\proj{B}{C'},\proj{B}{C})\\
c & \rMapsto & \langle \proj{B}{C'},\diagv{B}{C'};c\{\proj{B}{C'}\}\rangle .
\end{diagram}
When these are full and faithful, we call the comprehension \emph{full and faithful}, respectively. When it induces an equivalence $\Ccat(\cdot)\cong \Bcat$, we call the comprehension \emph{democratic}\footnote{This corresponds to the syntactic condition that all contexts are formed from the empty context by adjoining types.}.
\end{definition}
\begin{remark}
The definition of an indexed category with comprehension is easily seen to be equivalent to Jacobs' notion of a split comprehension category with unit. We prefer this formulation in terms of indexed categories as strictness is important in computer science, in which case the fibrational perspective is needlessly abstract. Jacobs' notion of fullness of a comprehension category corresponds - paradoxically - to our demand of the comprehension both being full and faithful. We believe it is useful to use this more fine-grained terminology.
\end{remark}
Recall that these categories are a standard notion of  model of pure dependent type theory~\cite{hofmann1997syntax}.
\begin{theorem}[Pure DTT Semantics] 
We have a sound interpretation of pure dependent type theory with $1$-types in any  indexed category $\Bcat^{op}\ra{\Ccat}\Cat$ with full and faithful comprehension. We list necessary and sufficient conditions for the model to support various type formers\footnote{We have chosen to model the strong (or dependent) elimination rules for $0,+,\Sigma,\Id$-types, as their expressive power is important in practical dependently typed programming. We can also model the less expressive weak (or non-dependent) elimination rules in this framework as the existence of certain left adjoint functors that satisfy Beck-Chevalley conditions. Similarly, we could easily generalise this to intensional $\Id$-types. The reader can find an extensive treatment in \cite{jacobs1999categorical,hofmann1997syntax}.}:
\begin{itemize}
\item $0,+$-types - finite indexed coproducts (i.e. finite coproducts in all fibres that are stable under change of base) such that the following canonical morphisms are bijections
$$\Ccat(C.\Sigma_{1\leq i\leq n}C_i)(C',C'')\ra{}\Pi_{1\leq i\leq n}\Ccat(C.C_i)(C'\{\proj{C}{\langle i,\id_{C_i}\rangle }\},C''\{\proj{C}{\langle i,\id_{C_i}\rangle }\});
$$
\item $\Pi$-types - right adjoint functors $ -\{\proj{B}{C}\}\dashv \Pi_C$ satisfying the right Beck-Chevalley condition\footnote{This is a technical condition which corresponds to compatibility between $\Pi$-type formers and substitution, see \cite{jacobs1999categorical,vakar2014syntax}.};
\item $\Sigma$-types - objects $\Sigma_CD$ of $\Ccat(B)$ such that $\proj{B}{\Sigma_CD}=\proj{B.C}{D};\proj{B}{C}$;
\item (extensional) $\Id$-types - objects $\Id_C$ of $\Ccat(B.C.C)$ such that $\proj{B.C.C}{\Id_C}=\diag{B}{C}$.
\end{itemize}
In fact, the interpretation in such categories is complete in the sense that an equality holds in all interpretations iff it is provable in the syntax of dependent type theory.  
\end{theorem}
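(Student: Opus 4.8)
The plan is to follow the now-standard recipe for the categorical semantics of dependent type theory (see \cite{hofmann1997syntax,jacobs1999categorical}), splitting the work into a soundness part — define the interpretation and verify every rule — and a completeness part — build a term model and observe that the interpretation of the syntax into it is essentially the identity.

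For soundness, I would define the interpretation by mutual recursion over the raw syntax, simultaneously proving it total on derivable judgements. A context $\Gamma = x_1:A_1,\ldots,x_n:A_n$ is sent to an object $\sem{\Gamma}$ of $\Bcat$ built by iterated comprehension, with $\sem{\Gamma,x{:}A} := \sem{\Gamma}.\sem{A}$ and $\sem{\cdot}$ the terminal object; a type judgement $\Gamma\vdash A\type$ to an object $\sem{A}\in\Ccat(\sem{\Gamma})$; a term judgement $\Gamma\vdash a:A$ to an element of $\Ccat(\sem{\Gamma})(1,\sem{A})$, equivalently — via the representability bijection of the comprehension axiom — to a section of $\proj{\sem{\Gamma}}{\sem{A}}$; and a substitution $\Delta\vdash\sigma:\Gamma$ to a morphism $\sem{\sigma}:\sem{\Delta}\to\sem{\Gamma}$ in $\Bcat$, assembled from the $\qu{-}{-}$ morphisms and such sections. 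The variable rule is interpreted using the universal elements $\diagv{B}{C}$ together with the weakening maps $\proj{B}{-}$; this, and the $\eta$-law for the $1$-type, is exactly where full-and-faithfulness of the comprehension is used: it is what makes the passage between ``term of $A$ in context $\Gamma$'' and ``section of $\proj{\sem{\Gamma}}{\sem{A}}$'' a bijection compatible with weakening.

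The crux is the \emph{substitution lemma}: for $\Delta\vdash\sigma:\Gamma$ one has $\sem{A[\sigma]} = \sem{A}\{\sem{\sigma}\}$ and $\sem{a[\sigma]} = \sem{a}\{\sem{\sigma}\}$, proved by a simultaneous induction on $A$ and $a$; strictness (splitness) of $\Ccat$ is precisely what lets this hold on the nose rather than up to coherent isomorphism, sidestepping the usual coherence problems. Granting it, each remaining typing rule and each equation of the theory reduces to a routine check against the categorical structure: the $\beta\eta$-laws for $1$ follow from the fibrewise terminal objects being stable under change of base; and for the optional type formers one invokes exactly the listed data. For $\Sigma$- and $\Id$-types the defining equations on the displayed projections yield formation, introduction, the dependent eliminators and their computation and uniqueness laws by transport along the representability bijections. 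For $\Pi$-types the adjunction $-\{\proj{B}{C}\}\dashv\Pi_C$ supplies abstraction and application with their $\beta\eta$-laws, while the Beck--Chevalley condition is precisely the assertion that $\sem{\Pi}$ commutes with substitution, which is what the substitution lemma needs in the $\Pi$-case. For $0,+$ (and $\Sigma$) the asserted bijectivity of the canonical comparison maps is a direct repackaging of the strong (dependent) elimination rule together with its commuting conversions and $\eta$-law — hence both sufficient and, reading the bijection backwards, necessary.

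For completeness I would build the term model: let $\Bcat$ have as objects the contexts modulo definitional equality and $\alpha$-renaming and as morphisms the substitutions; let $\Ccat(\Gamma)$ have as objects the types $\Gamma\vdash A\type$ modulo equality and as morphisms $A\to A'$ the terms $\Gamma,x{:}A\vdash t:A'$ modulo equality, with change of base given by capture-avoiding substitution. One checks this is a strict indexed category with full and faithful — indeed democratic — comprehension, namely context extension, carrying whichever of the optional structures were included in the syntax. The canonical interpretation of the syntax into this model is, up to the obvious bookkeeping, the identity, so two terms identified in every model are identified here, i.e. provably equal; the converse is soundness. The main obstacle throughout is twofold: managing strictness in the substitution lemma so that no coherence isomorphisms intrude, and verifying that the conditions imposed on the positive connectives are strong enough to recover the full dependent eliminators with all their equations — everything else is bookkeeping.
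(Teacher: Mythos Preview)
Your proposal is correct and follows the standard recipe from \cite{hofmann1997syntax,jacobs1999categorical}, which is exactly what the paper does: it states this theorem as a recalled result (``Recall that these categories are a standard notion of model\ldots''), gives no proof of its own, and defers entirely to those references. There is nothing to compare against.
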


In particular, we can use such categories to model pure simple type theory with $0,+,1,\times,\Rightarrow$-types as a special case, rather than using the usual notion of model of a bicartesian closed category $\Ccat$. Indeed, starting from such a bicartesian closed category $\Ccat$, we can produce an indexed category $\Ccat^{op}\ra{\self(\Ccat)}\Cat$ where $\self(\Ccat)(A)$ has the same objects as $\Ccat$ and $\self(\Ccat)(A)(B,C)=\Ccat(A\times B,C)$ with the obvious identities and composition. With the change of base functors defined to be the identity on objects and to act on morphisms in the obvious way through precomposition. 

\begin{theorem}
For a bicartesian closed category $\Ccat$, $\self(\Ccat)$ is an indexed category with full and faithful democratic comprehension, which supports $0$-, $+$-, $\Sigma$- and $\Pi$-types. In this case, $\proj{A}{B}$ is the usual product projection from $A\times B\ra{}A$. It does not usually support extensional $\Id$-types as these correspond to objects $1/A$ such that $1/A\times A\cong 1$.
\end{theorem}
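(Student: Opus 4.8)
The plan is to verify each clause of the theorem by transporting the relevant universal property across the self-indexing, using throughout the explicit guesses $B.C:=B\times C$, $\proj{B}{C}:=\pi_1\colon B\times C\to B$, $\Sigma_CD:=C\times D$, $\Pi_CE:=C\Rightarrow E$, and fibrewise initial object and coproducts inherited from $\Ccat$. The key organisational fact, which I would record first, is that by definition every change-of-base functor of $\self(\Ccat)$ is the identity on objects and sends a morphism $h\colon B\times C'\to C$ of $\self(\Ccat)(B)(C',C)$ to $h\circ(f\times\id_{C'})$; so reindexing never touches objects, which trivialises most stability and Beck--Chevalley conditions. Strictness of $\self(\Ccat)$ as an indexed category is then immediate from strict functoriality of the product functors $(-)\times C'$ of $\Ccat$.

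Next I would treat the comprehension. With $B.C:=B\times C$ and $\proj{B}{C}:=\pi_1$, I would use the terminal object of $\Ccat$ as the (stably) fibrewise terminal object and exhibit representability as the chain of natural bijections
\[
\self(\Ccat)(B')(1,C\{f\})=\Ccat(B'\times 1,C)\cong\Ccat(B',C)\cong\Ccat/B(f,\pi_1),
\]
the last sending $g$ over $B$ to $\pi_2\circ g$. From this I would read off that $\diagv{B}{C}$ is (the transpose of) $\pi_2$, that $\diag{B}{C}=\langle\id_{B\times C},\pi_2\rangle$, and that $\qu{f}{C}=f\times\id_C$. Fullness and faithfulness of the comprehension then reduces to the natural bijection $\Ccat/B(\pi_1,\pi_1)\cong\Ccat(B\times C',C)=\self(\Ccat)(B)(C',C)$ together with the observation that the comparison map $\proj{B}{-}$, which sends $c$ to $\langle\pi_1,c\rangle$, is exactly this bijection; democracy is $\self(\Ccat)(1)(B,C)=\Ccat(1\times B,C)\cong\Ccat(B,C)$.

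Then I would handle the type formers. For $\Sigma$, with $D\in\self(\Ccat)(B.C)$ just an object of $\Ccat$, the object $C\times D$ works, since $B.(C\times D)=B\times(C\times D)$ and its projection to $B$ is, up to the associativity isomorphism, the composite $\proj{B.C}{D};\proj{B}{C}$. For $\Pi$, since $-\{\proj{B}{C}\}$ is the identity on objects, a right adjoint $\Pi_C$ comes from $\Ccat((B\times C)\times D,E)\cong\Ccat((B\times D)\times C,E)\cong\Ccat(B\times D,C\Rightarrow E)$, giving $\Pi_CE=C\Rightarrow E$; Beck--Chevalley is trivial because $C\{f\}=C$ and the functors and objects involved are unchanged by reindexing up to canonical coherence. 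For $0$ and $+$, the fibrewise coproducts are those of $\Ccat$, via $\self(\Ccat)(B)(\Sigma_iC_i,C)=\Ccat(B\times\Sigma_iC_i,C)\cong\Ccat(\Sigma_i(B\times C_i),C)\cong\Pi_i\self(\Ccat)(B)(C_i,C)$, the middle step being distributivity of $\times$ over $+$ in $\Ccat$ — valid because $(-)\times B$ has a right adjoint $B\Rightarrow(-)$ and hence preserves coproducts. Stability under reindexing is trivial, and the extensivity-style condition in the semantics theorem unwinds, via the same distributivity applied one level up to $(C\times\Sigma_iC_i)\times C'\cong\Sigma_i((C\times C_i)\times C')$, precisely to the universal property of the coproduct $\Sigma_i$, with the maps $\proj{C}{\langle i,\id_{C_i}\rangle}$ coming out as $\id_C\times\iota_i$. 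Apart from $\Pi$-types, this is the only point where bicartesian \emph{closedness}, rather than just bicartesianness, is used.

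Finally, for the failure of extensional $\Id$-types: suppose $\Id_C\in\self(\Ccat)(B.C.C)$ satisfies $\proj{B.C.C}{\Id_C}=\diag{B}{C}$. The left-hand side is, by construction, the projection $\pi_1\colon(B\times C\times C)\times\Id_C\to B\times C\times C$, whereas $\diag{B}{C}=\langle\id_{B\times C},\pi_2\rangle\colon B\times C\to B\times C\times C$ is the graph of $\pi_2$; matching these (up to the isomorphism witnessing the representing object) forces $(B\times C\times C)\times\Id_C\cong B\times C$ over $B\times C\times C$. Specialising $B$ to the terminal object and invoking democracy, this is exactly the demand that the object $\Id_A=:1/A$ satisfy $1/A\times A\cong 1$; for $\Ccat=\Set$ this forces $|A|=1$, so such a model supports extensional $\Id$-types only in the degenerate case that every type is terminal. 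I expect the two genuinely non-bookkeeping steps to be (i) lining up the extensivity condition with distributivity at the right level, and (ii) unwinding the $\Id$-type condition carefully enough to land on exactly the reciprocal-object condition $1/A\times A\cong 1$; everything else is routine transport across the self-indexing.
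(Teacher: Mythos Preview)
Your proof is correct and complete; the paper states this theorem without proof, so there is nothing to compare against, and your verification of each clause via transport along the self-indexing is the standard and expected argument. One minor point: your final step reading off $1/A\times A\cong 1$ from $(A\times A)\times\Id_A\cong A$ over $A\times A$ implicitly uses a cancellation of $A$ that is not valid in general (and indeed the two conditions differ already in $\Set$ when $A=0$), but the paper's own phrasing here is deliberately heuristic, and your $\Set$ counterexample already establishes the substantive claim that extensional $\Id$-types are not generally supported.
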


\subsubsection{Categorical Semantics of Simple CBPV}
We can now state what a categorical model of CBPV is. The philosophy is to add an extra (locally) indexed category $\Dcat$ to model computations (and, more generally, stacks\footnote{For elegance of presentation, we somewhat naughtily identify computations with certain stacks, rather than keeping them separate but isomorphic. One could also easily write down an equivalent (in the categorical sense) treatment \`a la Levy where the computations and stacks are really kept separate.}) separately from values and to demand all appropriate negative (right adjoint) connectives in $\Dcat$ and all positive (left adjoint) ones in $\self(\Ccat)$. The idea will be that values $\Gamma\vdash^v V:A$ denote elements of $\self(\Ccat)(\sem{\Gamma})(1,\sem{A})$ and that computations $\Gamma\vdash^c M:\ct{B}$  denote elements of $\Dcat(\sem{\Gamma})(F1,\sem{\ct{B}})$. We encourage the reader to think of the more general homsets $\Dcat(\sem{\Gamma})(\sem{B'},\sem{B})$ as the denotations of general stacks, which we treat in section \ref{sec:simplop}.

\begin{definition}[Simple CBPV Model] By a categorical model of simple CBPV, we shall mean the following data.
\begin{itemize}
\item A cartesian category $(\Ccat,1,\times)$ of \emph{values};\\
\item an indexed category $\Ccat^{op}\ra{\Dcat}\Cat$ of \emph{stacks} (and, in particular, \emph{computations}) where the change of base functors are a bijection on objects;\\
\item $0,+$-types in $\self(\Ccat)$\footnote{This amounts to having distributive finite coproducts in $\Ccat$.} such that, additionally, the following induced maps are bijections: $$\Dcat(C.\Sigma_{1\leq i\leq n} C_i )(\ct{D},\ct{D'})\ra{}\Pi_{1\leq i \leq n}\Dcat(C.C_i)(\ct{D},\ct{D'});$$
\item an indexed adjunction\footnote{As Plotkin pointed out at the time of Moggi's original work on the monadic metalanguage, this gives a strong monad $T=UF$ on $\Ccat$ \cite{moggi1988computational}.}\mbox{
\begin{diagram}
\Dcat & \pile{\lTo^F\\\bot\\\rTo_U} & \self(\Ccat);
\end{diagram}}
\item $\Pi$-types in $\Dcat$ in the sense of having right adjoint functors $-\{\proj{A}{B}\}\dashv \Pi_B:\Dcat(A)\ra{}\Dcat(A.B)$ satisfying the right Beck-Chevalley condition;
\item Finite indexed products $(\top,\&)$ in $\Dcat$;
\end{itemize}
Note that $\self(\Ccat)$ automatically has indexed terminal objects and $\Sigma$-types.
\end{definition}

\begin{theorem}[Simple CBPV Semantics] We have a sound interpretation of CBPV in a CBPV model:
\begin{align*}
\sem{\cdot} & = \cdot\\
\sem{\Gamma,x:A} &= \sem{\Gamma}.\sem{A}\\
\sem{\Gamma\vdash^v A}&=\self(\Ccat)(\sem{\Gamma})(1,\sem{A})\\
\sem{\Gamma\vdash^c \ct{B}} & = \Dcat(\sem{\Gamma})(F1,\sem{\ct{B}})\\
\sem{\Gamma;\ct{B}\vdash^k\ct{C}} & = \Dcat(\sem{\Gamma})(\sem{\ct{B}},\sem{\ct{C}})\\
\sem{1}&=1\\
\sem{A\times A'} &= \sem{A}\times \sem{A'}\cong\Sigma_{\sem{A}}\sem{A'}\{\proj{\sem{\Gamma}}{\sem{A}}\}\\
\sem{\Sigma_{1\leq i\leq n}A_i}&=(\cdots(\sem{A_1}+\sem{A_2})+\cdots)+\sem{A_n})\\
\sem{\Pi_{1\leq i\leq n}\ct{B}_i}&=(\cdots(\sem{\ct{B}_1}\&\sem{\ct{B}_2})\&\cdots)\&\sem{\ct{B}_n})\\
\sem{A\Rightarrow \ct{B}} & = \Pi_{\sem{A}}\sem{\ct{B}}\{\proj{\sem{\Gamma}}{\sem{A}}\}\\
\sem{FA}&=F\sem{A}\\
\sem{U\ct{B}} &= U\sem{\ct{B}},
\end{align*}
where we write $\Gamma;\ct{B}\vdash^k \ct{C}$ for the set of stacks of type $\ct{C}$ in context $\Gamma;\ct{B}$ (see section \ref{sec:simplop}), together with the obvious interpretation of terms. The interpretation in such categories is complete in the sense that an equality of values or computations holds in all interpretations iff it is provable in the syntax of CBPV. In fact, if we add the obvious admissible weakening and exchange rules to CBPV, we have an onto relationship between models and theories which satisfy mutual soundness and completeness results.
\end{theorem}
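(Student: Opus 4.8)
\emph{Overall strategy.} The statement has three parts --- soundness of the interpretation, completeness (an equation holds in all models iff it is derivable), and the ``onto'' correspondence between models and theories --- and I would attack them in that order, the third being essentially a repackaging of the first two. Soundness is proved by induction on typing derivations, supported by a substitution lemma; completeness by building the classifying (term) model and observing that interpretation in it is the identity up to provable equality.

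\emph{Soundness.} First I would turn the clauses in the statement into a total interpretation, defined by recursion on typing derivations, sending $\Gamma\vdash^v V:A$ to an element of $\self(\Ccat)(\sem{\Gamma})(1,\sem{A})$, $\Gamma\vdash^c M:\ct{B}$ to an element of $\Dcat(\sem{\Gamma})(F1,\sem{\ct{B}})$ and $\Gamma;\ct{B}\vdash^k K:\ct{C}$ to an element of $\Dcat(\sem{\Gamma})(\sem{\ct{B}},\sem{\ct{C}})$, each term former being interpreted via the evident universal property: $\lbi{x}{V}{R}$ by the representability bijection of the comprehension, $\return V$ and $\toin{M}{x}{N}$ by the unit and (indexed) counit of $F\dashv U$, $\thunk M$ and $\force V$ by the adjunction isomorphism on homsets, the eliminators for $\Sigma$-, $1$- and $\times$-types by the coproduct and representability universal properties (up to the canonical isomorphisms recorded in the statement), $\lambda_iM_i$ and $i\textquoteleft M$ by the finite indexed products, and $\lambda_xM$ and $V\textquoteleft M$ by the $\Pi$-adjunction. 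Two structural lemmas fall out of the same induction: a \emph{weakening and exchange lemma}, that the admissible structural operations are interpreted by reindexing along the evident morphisms of $\Ccat$ --- here one uses that change of base in $\Dcat$ is bijective on objects, so the chosen denotations of computation and stack types are not disturbed --- and a \emph{substitution lemma}, that $\sem{M[V/x]}$ is obtained from $\sem{M}$ by reindexing along the projection-type morphism determined by $\sem{V}$; strictness of the indexed categories is precisely what makes this identity hold on the nose rather than merely up to isomorphism. \emph{Coherence}, i.e.\ independence of $\sem{-}$ from the chosen derivation, then follows since typing is syntax-directed once the $\vdash^{v/c}$ ambiguity is pinned down by the conclusion. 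Finally I would run through every equation of Figure~\ref{fig:vceqs}: each $\beta$-rule collapses to a computation/triangle law of the relevant adjunction or universal property (invoking the substitution lemma wherever metasubstitution occurs), each $\eta$-rule to the corresponding uniqueness clause, and the associativity law for $\toin{M}{x}{N}$ and its commuting conversions past $\lambda_i$ and $\lambda_x$ to naturality of $F\dashv U$ and of the indexed products with respect to stack composition in $\Dcat$.

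\emph{Completeness and the model--theory correspondence.} I would construct the classifying CBPV model: let $\Ccat$ have contexts (modulo renaming) as objects and tuples of values modulo provable equality as morphisms, with $1$ the empty context and $\times$ concatenation; let $\Dcat(\Gamma)$ have computation types as objects and stacks $\Gamma;\ct{B}\vdash^k\ct{C}$ modulo provable equality as morphisms, with change of base given by substitution. The $\beta\eta$-theory of Figures~\ref{fig:vcterms}--\ref{fig:vceqs} is exactly what is needed to verify the model axioms: the adjunction $F\dashv U$ from the type formers $F$ and $U$, with unit and counit read off from $\return$/$\toin{M}{x}{N}$ and $\thunk$/$\force$; the distributive finite coproducts and the stack-cotupling bijection from the $\Sigma$-equations; $\Pi$-types from $A\Rightarrow\ct{B}$ and its equations; finite indexed products $(\top,\&)$ from $\Pi_{1\leq i\leq n}\ct{B_i}$. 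One then checks by induction that the interpretation of a (value, computation or stack) term in this model is its own equivalence class, so any equation holding in all models holds in particular here, hence is provable; soundness supplies the converse and the biconditional follows. The ``onto'' relationship between models and theories is then the $2$-categorical upgrade of this: after adjoining the admissible weakening and exchange rules, so that the syntax is closed under the operations that arbitrary model morphisms induce, one organises theories (signatures of base value and computation types and typed constants, with sets of equational axioms) and models into $2$-categories, extends the classifying-model construction and the internal-language construction to $2$-functors between them, and checks they are pseudoinverse --- ``onto'' being the assertion that every model is equivalent to the classifying model of its internal language, and soundness--completeness of the pure syntax the assertion that the other composite is equivalent to the identity on the pure theory.

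\emph{Main obstacle.} The real work is the strictness/coherence bookkeeping: making the substitution lemma hold strictly (the reason one insists on \emph{split} indexed categories with a \emph{split} comprehension rather than working up to coherent isomorphism), and making the informal handling of the indices $i$ in $\Sigma_{1\leq i\leq n}$ and $\Pi_{1\leq i\leq n}$ precise enough that their $\eta$-rules and the associated commuting conversions are interpreted correctly. Among the equations, the $\toin{M}{x}{N}$-commuting conversions past $\lambda_i$ and $\lambda_x$ --- which express that sequencing a computation is a stack operation and therefore commutes with the negative connectives --- are the ones whose verification requires the most care.
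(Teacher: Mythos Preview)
Your overall strategy matches the paper's: soundness by induction with a substitution lemma, completeness via a syntactic (classifying) model. The paper does not give a detailed proof here but, for the dependently typed version, sketches exactly this route and refers back to the analogous result for linear dependent type theory.

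There is one genuine gap in your completeness argument. You take $\Dcat(\Gamma)$ to have as morphisms the stacks $\Gamma;\ct{B}\vdash^k K:\ct{C}$ modulo provable equality, but the stacks defined in the paper (Figure~\ref{fig:simplestacks}) are only the \emph{simple} stacks. These do not carry enough structure to verify the model axioms: in particular, the required bijection
\[
\Dcat(C.\Sigma_{1\leq i\leq n} C_i )(\ct{D},\ct{D'})\longrightarrow\Pi_{1\leq i \leq n}\Dcat(C.C_i)(\ct{D},\ct{D'})
\]
demands that one can pattern-match on a sum inside a stack, which simple stacks cannot do. The paper is explicit about this: one must first \emph{conservatively extend the calculus with complex stacks} (pattern matching and $\mathsf{let}$-binding into stacks, as in Levy's treatment) before constructing the syntactic category. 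Because this extension is conservative over values and computations, completeness for those two judgements --- which is all the theorem claims --- still follows; completeness for stacks themselves is explicitly noted \emph{not} to hold without the extension (see the Remark following the theorem). You should flag this step and argue conservativity, rather than silently assuming the raw stack judgement already yields a model.

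A smaller point: your ``onto'' paragraph upgrades the statement to a 2-categorical equivalence. The paper is more modest --- it only claims that every model arises from some theory satisfying mutual soundness and completeness, and reserves the 1-1 correspondence for the version with complex stacks. Your stronger reading is reasonable but goes beyond what is asserted.
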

\begin{remark}
Note that these models contain quite a bit of redundancy and that completeness currently does not hold for stacks. The idea is that one could add complex stacks to the syntax of CBPV to get a precise 1-1 correspondence between syntactic theories in CBPV and categorical models  \cite{levy2012call}.
\end{remark} 
Let us write $T$ for the indexed monad $UF$ on $\self(\Ccat)$ and $!$ for the indexed comonad $FU$ on $\Dcat$. We note that the translations from CBV and CBN into CBPV correspond to interpreting CBV and CBN in the Kleisli and coKleisli categories for $T$ and $!$ respectively. More generally, we can note that the translations 
of figures \ref{fig:cbvtrans} and \ref{fig:cbntrans} can be transformed into semantic translations which means that any CBPV model gives rise to models of the CBV- and CBN-$\lambda$-calculus.

\begin{theorem}[Simple CBV-Semantics] We obtain a sound interpretation of the CBV-$\lambda$-calculus with $1,\times,\Rightarrow,\Sigma_{1\leq i\leq n},\Pi_{1\leq i\leq n}$-types in the Kleisli category for $T$:
\begin{align*}
\sem{A_1,\cdots,A_n\vdash A}&=\Dcat(\sem{A_1}. \cdots.\sem{A_n})(F1,F\sem{A})\cong\self(\Ccat)_T(\sem{A_1}. \cdots.\sem{A_n})(1,\sem{A}).
\end{align*}
The interpretation is complete with respect to this class of models.
\end{theorem}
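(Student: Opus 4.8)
The plan is to obtain the interpretation by \emph{composing} the syntactic CBV-to-CBPV translation $(-)^v$ of Figure~\ref{fig:cbvtrans} with the interpretation of CBPV in the given CBPV model, and then to derive soundness and completeness from the corresponding properties of the CBPV interpretation together with the faithfulness of $(-)^v$. Concretely, a CBV term $x_1:A_1,\ldots,x_n:A_n\vdash M:A$ is sent by the translation to a computation $x_1:A_1^v,\ldots,x_n:A_n^v\vdash^c M^v:F(A^v)$, whose CBPV-denotation lives in $\Dcat(\sem{A_1^v}.\cdots.\sem{A_n^v})(F1,F\sem{A^v})$; setting $\sem{A_i}:=\sem{A_i^v}$ and $\sem{M}:=\sem{M^v}$ then gives the first displayed equality. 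The claimed isomorphism with the Kleisli homset is an instance of the indexed adjunction $F\dashv U$: for $C$ in $\Ccat$ and $D$ in $\self(\Ccat)(C)$ one has $\Dcat(C)(F1,FD)\cong\self(\Ccat)(C)(1,UFD)=\self(\Ccat)(C)(1,TD)$, and the right-hand side is by definition $\self(\Ccat)_T(C)(1,D)$. One should moreover check that this bijection is natural enough that sequencing of computations via $\toin{-}{-}{-}$ corresponds to composition in the Kleisli category, so that the identification is genuinely one of Kleisli structure and not merely of homsets; this is exactly the standard fact that the Kleisli category of $T=UF$ supports the CBV interpretation.

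For \textbf{soundness} I would proceed connective by connective, showing that each axiom of the CBV equational theory --- the $\beta$- and $\eta$-laws for $1,\times,\Rightarrow,\Sigma_{1\leq i\leq n},\Pi_{1\leq i\leq n}$, together with the monadic and commuting-conversion laws governing sequencing --- is carried by $(-)^v$ to an equation derivable from the CBPV equational theory of Figure~\ref{fig:vceqs}. Since the CBPV interpretation is already known to be sound, this immediately gives $\sem{M}=\sem{N}$ whenever $M=N$ is provable in CBV. The routine ingredient here is a substitution lemma: to validate the $\beta$-rules one needs that $(-)^v$ commutes with substitution up to provable CBPV-equality (so that $\sem{-}$ respects substitution), which is a structural induction using the $\mathsf{let}$- and $\mathsf{to}$-in rules and the commuting conversions. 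The genuinely delicate cases are the $\eta$-rule for $\Rightarrow$ and the pattern-matching $\eta$-rules for $\Sigma$ and $1$, where one must use both directions of the corresponding CBPV $\eta$-laws together with the commuting conversions of $\toin{-}{-}{-}$ with $\lambda$'s and $\mathsf{pm}$'s. I expect this to be the main obstacle: the failure of precisely these $\eta$-rules in a naive CBV-calculus is what the translation is designed to repair, so all the subtlety of call-by-push-value is concentrated in these verifications.

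For \textbf{completeness}, suppose a CBV-equation $M=N$ holds in the Kleisli category of \emph{every} CBPV model. Since the interpretation of $M$ in the Kleisli model associated with a CBPV model is by construction the CBPV-interpretation of $M^v$, the equation $M^v=N^v$ then holds in every CBPV model; by completeness of the CBPV interpretation it is provable in CBPV; and by \emph{faithfulness} of the translation $(-)^v$ it follows that $M=N$ is provable in CBV. Together with soundness this gives the completeness claim. In fact, running this argument against the syntactic (term) model of CBPV shows that a single Kleisli model --- the one built from the syntax of CBPV itself --- already suffices to refute any unprovable CBV-equation.
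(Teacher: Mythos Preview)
Your proposal is correct and matches the paper's own approach. The paper does not give a standalone proof of this theorem; it only remarks, in the paragraph preceding the statement, that the syntactic translations of Figure~\ref{fig:cbvtrans} ``can be transformed into semantic translations'' and that interpreting CBV in CBPV ``correspond[s] to interpreting CBV \ldots\ in the Kleisli \ldots\ categor[y] for $T$''. Your plan---compose $(-)^v$ with the CBPV interpretation, identify the resulting homset with the Kleisli homset via $F\dashv U$, and derive soundness and completeness from the full faithfulness of $(-)^v$ (which the paper cites from \cite{levy2012call}) together with the already-established soundness and completeness of the CBPV semantics---is precisely the intended argument spelled out in detail.

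One small clarification on your discussion of the $\eta$-rules: the CBV equational theory in question (Levy's) does \emph{not} include the full $\eta$-law for $\Rightarrow$ (the paper notes explicitly that ``the $\eta$-rule for function types typically fails'' in CBV), so for soundness there is nothing to verify there; the delicate point is rather on the completeness side, where full faithfulness of $(-)^v$ ensures that the failure of $\eta$ in CBV is reflected by a genuine inequality of the translated CBPV terms.
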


\begin{theorem}[Simple CBN-Semantics] We obtain a sound interpretation of the CBN-$\lambda$-calculus with $1,\times,\Rightarrow,\Sigma_{1\leq i\leq n},\Pi_{1\leq i\leq n}$-types in the coKleisli category for $!$:
\begin{align*}
\sem{\ct{B_1},\cdots,\ct{B_n}\vdash \ct{B}}&=\Dcat(U\sem{\ct{B_1}}.\cdots.U\sem{\ct{B_n}})(F1,\sem{\ct{B}})\cong\Dcat_{!}(U\sem{\ct{B_1}}.\cdots.U\sem{\ct{B_n}})(\top,\sem{\ct{B}}).
\end{align*}
The interpretation is complete with respect to this class of models.
\end{theorem}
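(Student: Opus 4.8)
The statement to prove is Simple CBN-Semantics: for any CBPV model we obtain a sound and complete interpretation of the CBN-$\lambda$-calculus (with $1,\times,\Rightarrow,\Sigma_{1\leq i\leq n},\Pi_{1\leq i\leq n}$-types) in the coKleisli category $\Dcat_{!}$ for the indexed comonad $!=FU$ on $\Dcat$, via $\sem{\ct{B_1},\ldots,\ct{B_n}\vdash\ct{B}}=\Dcat(U\sem{\ct{B_1}}.\cdots.U\sem{\ct{B_n}})(F1,\sem{\ct{B}})$.

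The plan is to obtain this theorem as a corollary of the Simple CBPV Semantics theorem via the call-by-name translation $(-)^n$ of Figure \ref{fig:cbntrans}, exactly as the preceding remark suggests: interpreting the CBN-$\lambda$-calculus in a CBPV model means interpreting $(-)^n$, and the resulting interpretation factors through the coKleisli category $\Dcat_!$ of the comonad $!=FU$ on $\Dcat$.

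First I would make the coKleisli interpretation and the displayed isomorphism precise. Fix a CBPV model. Over each object $\Gamma$ of $\Ccat$ the coKleisli category $\Dcat_!(\Gamma)$ has the same objects as $\Dcat(\Gamma)$ and hom-sets $\Dcat_!(\Gamma)(\ct{C},\ct{D})=\Dcat(\Gamma)(FU\ct{C},\ct{D})$, with composition built from the comonad comultiplication; the adjunction $F\dashv U$ supplies a natural bijection $\Dcat(\Gamma)(FU\ct{C},\ct{D})\cong\self(\Ccat)(\Gamma)(U\ct{C},U\ct{D})$ under which coKleisli composition becomes ordinary composition, so $\Dcat_!(\Gamma)$ is equivalent to the full subcategory of $\self(\Ccat)(\Gamma)$ spanned by the objects $U\ct{C}$. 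Since $U$, being a right adjoint, preserves terminal objects, $U\top\cong 1$, whence $\Dcat_!(\Gamma)(\top,\ct{B})=\Dcat(\Gamma)(FU\top,\ct{B})\cong\Dcat(\Gamma)(F1,\ct{B})$ — the asserted isomorphism — and a CBN context $\ct{B_1},\ldots,\ct{B_n}$ translates, by Figure \ref{fig:cbntrans}, to a CBPV context whose semantics is the object $U\sem{\ct{B_1}}.\cdots.U\sem{\ct{B_n}}$. One then checks that $\Dcat_!$ supports the CBN connectives in the shape dictated by the translation: the projection products $(\top,\&)$ of $\Dcat$ descend to finite products of $\Dcat_!$ on the same objects, modelling $1$ and $\Pi_{1\leq i\leq n}$; and the embedding $\Dcat_!(\Gamma)\hookrightarrow\self(\Ccat)(\Gamma)$, together with the $0,+,\Sigma,1,\times$-structure of $\self(\Ccat)$, the $\Pi$-types of $\Dcat$ and the adjunction $F\dashv U$, supplies the remaining connectives — the pattern-matching product $\ct{B}\times\ct{B'}$ as $F(U\sem{\ct{B}}\times U\sem{\ct{B'}})$, the function type $\ct{B}\Rightarrow\ct{B'}$ via the $\Pi$-type over $U\sem{\ct{B}}$, and the sum $\Sigma_{1\leq i\leq n}\ct{B_i}$ as $F\Sigma_{1\leq i\leq n}U\sem{\ct{B_i}}$.

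Granting this, soundness is immediate: $(-)^n$ sends provably equal CBN terms to provably equal CBPV terms (the soundness half of the full-and-faithfulness of $(-)^n$, due to Levy \cite{levy2012call}), and $\sem{-}$ is sound for CBPV, so the composite $\sem{(-)^n}$ is a sound interpretation of CBN, which the bookkeeping above identifies with the coKleisli interpretation. For completeness, suppose a CBN-equation $M=N$ holds under every interpretation of this form; applying it to the coKleisli model built from the classifying (syntactic) CBPV model — in which the denotation of a CBN term $M$ is the provable-equality class of $M^n$ — yields that $M^n=N^n$ is provable in CBPV, whence, by faithfulness of $(-)^n$, that $M=N$ is provable in CBN. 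The whole argument runs in exact parallel with the Simple CBV case, with $\Dcat_!$ replaced by the Kleisli category of $T=UF$ on $\self(\Ccat)$.

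The step I expect to be the real work, though it is bookkeeping rather than anything deep, is verifying that $\sem{(-)^n}$ literally coincides with the intended coKleisli interpretation: one must keep the two flavours of product apart (the projection $\Pi_{1\leq i\leq n}$, inherited directly from $\&$, versus the pattern-matching $\times$, encoded via $F$ and $U$) and track the $F/U$ shuffling in the clauses for function types and for sequencing. Each such verification reduces, through the adjunction $F\dashv U$ and the comonad laws, to an identity of morphisms in $\self(\Ccat)$ or $\Dcat$ that already holds in any CBPV model, so no new hypothesis on the model is needed.
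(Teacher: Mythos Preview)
Your proposal is correct and follows exactly the line the paper itself indicates. The paper does not give an explicit proof of this theorem; it merely remarks, just before stating the CBV- and CBN-semantics theorems, that ``the translations from CBV and CBN into CBPV correspond to interpreting CBV and CBN in the Kleisli and coKleisli categories for $T$ and $!$ respectively'' and that the syntactic translations of the relevant figures ``can be transformed into semantic translations.'' Your argument spells out precisely this: compose $(-)^n$ with the CBPV semantics, use $U\top\cong 1$ to obtain the displayed homset isomorphism, and derive soundness and completeness from the full-and-faithfulness of Levy's CBN translation together with the soundness and completeness of the CBPV interpretation (instantiated at the syntactic model for the latter). The bookkeeping you flag---checking that the coKleisli structure supports each CBN connective in the shape dictated by the translation---is exactly the content of the later Theorem (Dependent CBN-Categories), whose proof sketch in the paper is the dependently typed elaboration of the verifications you describe.
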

Again, both of these results could be strengthened to the statement that we have an onto relationship between models and theories which satisfy mutual soundness and completeness results.
\clearpage
\subsection{A Few Words about Models}\label{sec:simplmod}
An extensive discussion of particular models as well as comparisons between CBPV models and other notion of categorical models of effects can be found in \cite{levy2012call}. Here, we shall be very brief and just recall the following two results and provide some context for the relationship between effects and linear logic.

\begin{theorem}
Let $\Ccat\rightleftarrows \Dcat'$ be a Benton linear-non-linear adjunction model of intuitionistic exponential additive multiplicative linear logic \cite{benton1995mixed,mellies2009categorical} in the sense of an adjunction with strong monoidal left adjoint
\begin{diagram}
\Ccat & \pile{\rTo^F\\\bot\\\lTo_U} & \Dcat'
\end{diagram}
between a cartesian monoidal category $\Ccat$ and a symmetric monoidal closed category $\Dcat'$ with finite distributive coproducts in $\Ccat$ and finite products in $\Dcat'$. In that case, $\Ccat\rightleftarrows\Dcat'$ gives rise to a canonical model of simple CBPV where $UF$ is a commutative monad \cite{benton1996linear}.
\end{theorem}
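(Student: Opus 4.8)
The plan is to unfold the definition of a simple CBPV model and verify each bullet in turn, taking $\Ccat$ as the category of values (it is cartesian by hypothesis) and $\Dcat := \self(\Dcat')$ as the indexed category of stacks. First I would check that the change-of-base functors of $\self(\Dcat')$ are bijective on objects: this is immediate since $\self(\Dcat')(A)$ has the same objects as $\Dcat'$ for every $A$ and reindexing acts as the identity on objects, by construction of $\self$. Next, $0,+$-types in $\self(\Ccat)$: these amount to distributive finite coproducts in $\Ccat$, which we have assumed, so the first half holds; for the extra condition that the canonical maps $\Dcat(C.\Sigma_i C_i)(\ct D,\ct D')\to\prod_i\Dcat(C.C_i)(\ct D,\ct D')$ are bijective, I would compute the left-hand side as $\Dcat'((\Sigma_i\sem{C_i})\times\sem{C}\times\sem{\ct D},\sem{\ct D'})$ and use distributivity of $\times$ over $+$ in $\Ccat$ together with the universal property of the coproduct to rewrite it as $\prod_i\Dcat'(\sem{C_i}\times\sem{C}\times\sem{\ct D},\sem{\ct D'})$, which is the right-hand side. (Some care with the self-indexing bookkeeping is needed but it is routine.)

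The heart of the argument is producing the indexed adjunction $F\dashv U:\self(\Ccat)\leftrightarrows\Dcat$ and checking it is genuinely \emph{indexed} (i.e. compatible with reindexing). At the base, we already have $F\dashv U:\Ccat\leftrightarrows\Dcat'$; the point is to lift this to $\self$. For a value object $A$ and a context $C$, set the indexed $F$ and $U$ on objects to agree with the base $F,U$ (which is legitimate since $\self$ is identity-on-objects), and on a morphism in $\self(\Ccat)(C)(A,A')=\Ccat(C\times A,A')$ define its image under $F$ by applying the base $F$ and precomposing with the canonical map $FC\otimes FA\to F(C\times A)$ coming from the (op)lax monoidal structure of $F$ — here strong monoidality of $F$ is exactly what makes this an isomorphism, hence what makes $F$ act correctly on the self-indexed homsets and preserve composition; the unit and counit of the base adjunction then assemble, using strong monoidality again, into indexed natural transformations. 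I expect this verification — that the monoidal structure of the LNL adjunction is precisely the data needed to turn the base adjunction into an $\self$-indexed one satisfying the Beck–Chevalley-type naturality — to be the main obstacle, in the sense that it is where all the coherence bookkeeping lives; conceptually it is the observation (due to Benton) that underlies the whole correspondence.

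For the remaining bullets: $\Pi$-types in $\Dcat$ are right adjoints to $-\{\proj{A}{B}\}$; since in $\self(\Dcat')$ reindexing along $\proj{A}{B}$ is precomposition with $\pi_A\times\id$, its right adjoint is given by $\Dcat'(\sem{B}\times -,-)\mapsto\Dcat'(-,-)$ using the monoidal closed structure of $\Dcat'$, i.e. $\Pi_B\ct D = \sem{B}\multimap\ct D$, and the right Beck–Chevalley condition follows from naturality of the closed structure. Finite indexed products $(\top,\&)$ in $\Dcat$ are inherited fibrewise from the finite products of $\Dcat'$, stable under reindexing because reindexing is (on morphisms) precomposition, which preserves limits in the codomain variable. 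Finally I would remark, citing \cite{benton1996linear}, that $UF$ on $\Ccat$ is the monad $U(\FF(-))$ which is commutative precisely because $F$ is strong monoidal and $\Dcat'$ is symmetric monoidal — this is the promised last clause. Assembling these verified bullets gives the claimed CBPV model, completing the proof.
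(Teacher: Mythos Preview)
Your overall strategy is right and is the same as the paper's, but the construction as written does not type-check. You set $\Dcat:=\self(\Dcat')$, but $\self$ in the paper is only defined for a cartesian category and produces an indexed category over \emph{that same category}; $\Dcat'$ is merely symmetric monoidal closed, and even if one forms the evident monoidal variant with homsets $\Dcat'(X\otimes B,C)$, the result is indexed over $\Dcat'$, not over $\Ccat$. A CBPV model requires $\Dcat$ to be indexed over $\Ccat$. The fix is to precompose with $F$: define $\Dcat(A)$ to have the objects of $\Dcat'$ and $\Dcat(A)(B,C):=\Dcat'(FA\otimes B,C)$. By the $\otimes$/$\multimap$ adjunction this is exactly the paper's one-line definition $\Dcat(A)(B,C):=\Dcat'(B,FA\multimap C)$, so once corrected your approach and the paper's coincide.

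The missing $F$ recurs in your detailed checks. In the coproduct computation you write $\Dcat'\big((\Sigma_i\sem{C_i})\times\sem{C}\times\sem{\ct D},\sem{\ct D'}\big)$, mixing the cartesian product of $\Ccat$ with an object $\sem{\ct D}$ of $\Dcat'$; the well-typed expression is $\Dcat'\big(F((\Sigma_i\sem{C_i})\times\sem{C})\otimes\sem{\ct D},\sem{\ct D'}\big)$, and it is precisely the strong monoidality of $F$ (together with its preservation of the distributive coproducts, or equivalently the fact that $F(-)\otimes\sem{\ct D}$ is a left adjoint) that lets you split this into the required product over $i$. Likewise your $\Pi$-type formula should read $\Pi_B\ct D = F\sem{B}\multimap\ct D$, not $\sem{B}\multimap\ct D$: reindexing along $\proj{A}{B}$ in the corrected $\Dcat$ is $-\otimes FB$ (using $F(A\times B)\cong FA\otimes FB$), whose right adjoint is $FB\multimap-$. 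With these $F$'s inserted systematically, all of your verifications go through and amount to the paper's proof sketch.
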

\begin{proof}[Proof (sketch)] We define the indexed category $\Ccat^{op}\ra{\Dcat}\Cat$ as having the same objects as $\Dcat'$ in each fibre and morphisms $\Dcat(A)(B,C):=\Dcat'(B,FA\multimap  C)$.

\end{proof}
In this way, we can see that linear logic, in a way, only describes certain commutative effects. In fact, the following partial converse result was obtained by \cite{keigher1978symmetric}.
\begin{theorem}\label{thm:commtolinear}Let $\Ccat$ be a bicartesian closed category with a commutative monad $T$ and which additionally has equalizers and coequalizers. Then, the Eilenberg-Moore category $\Ccat^T$ is symmetric monoidal closed and the Eilenberg-Moore adjunction defines a linear-non-linear adjunction.
\end{theorem}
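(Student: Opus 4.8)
The plan is to equip the Eilenberg--Moore category $\Ccat^T$ with the tensor product of $T$-algebras (essentially Kock's construction; see also \cite{keigher1978symmetric}) and then to read off from the free/forgetful adjunction $F\dashv U:\Ccat^T\to\Ccat$ all the data of a Benton linear--non-linear adjunction. I would start from two standard observations: since $\Ccat$ is cartesian closed, $T$ is automatically a strong monad, the strengths $TX\times Y\to T(X\times Y)$ and $X\times TY\to T(X\times Y)$ being induced by the closed structure through the internal functoriality $[X,Y]\to[TX,TY]$ of $T$; and, by Kock's theorem, commutativity of $T$ is exactly the statement that $T$ is a symmetric monoidal monad for $(\Ccat,\times,1)$, so one has a symmetric, associative double strength $\mathrm{dst}_{X,Y}:TX\times TY\to T(X\times Y)$ compatible with $\eta$ and $\mu$. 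Throughout I call a map $f:UA\times UB\to UC$ between underlying objects of algebras $(A,\alpha),(B,\beta),(C,\gamma)$ a \emph{$T$-bimorphism} when $\gamma\circ Tf\circ\mathrm{dst}=f\circ(\alpha\times\beta)$ (equivalently: an algebra morphism separately in each variable), and write $\mathrm{Bilin}(A,B;C)$ for the set of these.

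First I would construct $A\otimes_T B$ as the object representing $\mathrm{Bilin}(A,B;-)$: it is the coequalizer in $\Ccat^T$ of the reflexive pair $F(TUA\times TUB)\rightrightarrows F(UA\times UB)$ whose legs are $F(\alpha\times\beta)$ and the adjunct under $F\dashv U$ of $\mathrm{dst}_{UA,UB}:TUA\times TUB\to T(UA\times UB)=UF(UA\times UB)$, with common section $F(\eta\times\eta)$ (reflexivity uses the algebra unit law and $\mathrm{dst}\circ(\eta\times\eta)=\eta$). These reflexive coequalizers exist in $\Ccat^T$ thanks to the coequalizers in $\Ccat$, so $A\otimes_T B$ exists and $\Ccat^T(A\otimes_T B,-)\cong\mathrm{Bilin}(A,B;-)$. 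I would then verify that $(\otimes_T,F1)$ is symmetric monoidal, reducing the coherence diagrams to those of $\times$ on $\Ccat$ together with the symmetric-monoidal-monad axioms, and that $F$ is strong symmetric monoidal, using that a bimorphism out of a pair of free algebras is determined by its restriction along $\eta\times\eta$, whence $\mathrm{Bilin}(FX,FY;C)\cong\Ccat(X\times Y,UC)$ and so $FX\otimes_T FY\cong F(X\times Y)$, with $F1$ the unit.

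Next I would build the internal hom. Let $B\multimap_T C$ have underlying object the equalizer in $\Ccat$ of the two maps $[UB,UC]\rightrightarrows[TUB,UC]$ given by precomposition with $\beta$ and by $[TUB,\gamma]$ after the internal functoriality $[UB,UC]\to[TUB,TUC]$ of $T$ -- the ``internal object of algebra maps $UB\to UC$'', which exists since $\Ccat$ has equalizers. Now $[UB,UC]$ carries the pointwise (cotensor) $T$-algebra structure available in any cartesian closed category, and the crucial point is that this action \emph{restricts} along the equalizer, endowing $B\multimap_T C$ with a $T$-algebra structure. I expect this to be the main obstacle: it is precisely the step where commutativity of $T$ is needed, and for a non-commutative monad the internal hom of two algebras genuinely fails to be an algebra. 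Granting it, I would transpose along $[-,-]$ and match the algebra-morphism conditions to get natural bijections $\Ccat^T(A,B\multimap_T C)\cong\{f:UA\times UB\to UC\mid f\text{ bilinear}\}=\mathrm{Bilin}(A,B;C)\cong\Ccat^T(A\otimes_T B,C)$, so that $(\Ccat^T,\otimes_T,F1,\multimap_T)$ is symmetric monoidal closed.

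Finally I would assemble the linear--non-linear adjunction. As a bicartesian closed category, $\Ccat$ is cartesian monoidal and has finite coproducts, which in a cartesian closed category are automatically distributive; $\Dcat':=\Ccat^T$ is symmetric monoidal closed by the previous step and has finite products, since $U$ creates limits ($1$ is a $T$-algebra via $T1\to 1$, and a product of algebras receives the action $T(A\times B)\xrightarrow{\langle T\pi_1,T\pi_2\rangle}TA\times TB\xrightarrow{\alpha\times\beta}A\times B$); and $F:\Ccat\to\Ccat^T$ is a strong monoidal left adjoint. These are exactly the ingredients of a Benton linear--non-linear adjunction model, so the Eilenberg--Moore adjunction is one. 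Apart from the algebra structure on $B\multimap_T C$ highlighted above, the remaining points -- existence of the coequalizers defining $\otimes_T$ and the monoidal coherence -- are routine, and they are the reason equalizers and coequalizers appear in the hypotheses.
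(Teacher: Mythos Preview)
The paper does not supply its own proof of this theorem: it merely attributes the result to Keigher (the citation \cite{keigher1978symmetric}) and states it. So there is nothing to compare against in the paper itself.

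Your proposal is essentially the standard Kock--Keigher construction and is correct in outline. You correctly identify Kock's equivalence between commutativity and the monad being symmetric monoidal, you build $\otimes_T$ as a reflexive coequalizer representing bimorphisms, you build $\multimap_T$ as the equalizer computing the internal object of algebra maps, and you correctly flag that the one genuinely nontrivial step is showing the pointwise $T$-algebra structure on $[UB,UC]$ restricts along that equalizer, which is exactly where commutativity enters. The assembly of the linear--non-linear data (strong monoidality of $F$, products in $\Ccat^T$ via creation of limits, distributive coproducts in $\Ccat$ from cartesian closure) is also right.

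One small point worth tightening: your sentence ``these reflexive coequalizers exist in $\Ccat^T$ thanks to the coequalizers in $\Ccat$'' glosses over a nontrivial fact. The forgetful functor creates limits but not colimits in general; that $\Ccat^T$ inherits (reflexive) coequalizers from $\Ccat$ is Linton's theorem, and depending on how one argues it one may need an iterated-coequalizer construction or an appeal to the fact that your pair is reflexive. It would be worth citing this explicitly rather than treating it as immediate. Apart from that, the argument is sound and matches the literature the paper defers to.
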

We can elaborate the above a bit by noting the following, using a neat little argument that I learnt from Sam Staton.
\begin{theorem}
Every cartesian closed category $\Ccat$ with a commutative monad $T$ embeds fully faithfully into a model of intuitionistic exponential additive multiplicative linear logic inducing the monad $T$.
\end{theorem}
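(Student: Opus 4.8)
The plan is to reduce to Theorem~\ref{thm:commtolinear} by embedding $\Ccat$ into a bicartesian closed category $\mathcal{E}$ which additionally has all equalizers and coequalizers and which carries a \emph{commutative} monad $T'$ restricting to $T$ along the embedding. For the embedding I would take the Yoneda embedding $y\colon\Ccat\to\mathcal{E}:=[\Ccat^{op},\Set]$, first replacing $\Ccat$ by an equivalent small category (or fixing a suitable universe) to make this legitimate. Then $\mathcal{E}$ is a presheaf topos, hence bicartesian closed and both complete and cocomplete --- in particular it has equalizers and coequalizers --- and $y$ is full, faithful and preserves all limits, so in particular finite products. It remains to produce $T'$; note that we do not need $y$ to preserve $\Rightarrow$ or coproducts, and indeed it will not.

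To extend $T$, set $T':=\mathrm{Lan}_y(y\circ T)\colon\mathcal{E}\to\mathcal{E}$. Since $y$ is dense, $T'$ is cocontinuous and $T'\circ y\cong y\circ T$ naturally. I would then invoke the universal property of presheaves as the free cocompletion: $(-)\circ y\colon\mathrm{Cocont}(\mathcal{E},\mathcal{E})\to\mathrm{Fun}(\Ccat,\mathcal{E})$ is an equivalence with pseudo-inverse $\mathrm{Lan}_y$, and it is strong monoidal when the source carries composition of endofunctors and the target the induced substitution product $F\star G:=\mathrm{Lan}_y(F)\circ G$, whose unit is $y$ itself. Under this equivalence the monad $(T,\eta^T,\mu^T)$ corresponds to the $\star$-monoid $(y\circ T,\,y\eta^T,\,y\mu^T)$ --- here one uses $T'\circ y\cong yT$ to identify $(yT)\star(yT)\cong y\circ T\circ T$ --- so $T'$ inherits a monad structure $(\eta',\mu')$ on $\mathcal{E}$ with $\eta',\mu'$ restricting along $y$ to $\eta^T,\mu^T$, the monad laws being obtained by transporting those of $T$ through $\mathrm{Lan}_y$. (Hands-on: put $\eta':=\mathrm{Lan}_y(y\eta^T)$ and, after rewriting $T'T'\cong\mathrm{Lan}_y(yTT)$ using cocontinuity of $T'$ and $T'y\cong yT$, put $\mu':=\mathrm{Lan}_y(y\mu^T)$, then check the three coherence diagrams.)

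Next I would promote $T'$ to a commutative monad for the cartesian product of $\mathcal{E}$. Its strength $t'_{X,Y}\colon T'X\times Y\to T'(X\times Y)$ is forced by density: for $Y=yB$ representable both sides are cocontinuous in $X$ and agree on representables $X=yA$ with value $y(t_{A,B})$, so $t'_{-,yB}$ is determined; and then for general $Y$ both sides are cocontinuous in $Y$ and agree on representables, so $t'$ is determined. The strength axioms, the compatibility of $t'$ with $\eta'$ and $\mu'$, and the commutativity equation are all equalities of natural transformations cocontinuous in each variable, hence --- by density of $\Ccat$ in $\mathcal{E}$ --- need only be checked on tuples of representables, where they reduce to the corresponding facts for the strength of $T$. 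Thus $(\mathcal{E},T')$ is a bicartesian closed category with equalizers, coequalizers and a commutative monad, and Theorem~\ref{thm:commtolinear} applies: $\mathcal{E}^{T'}$ is symmetric monoidal closed with finite products, and the Eilenberg--Moore adjunction $F^{T'}\dashv U^{T'}\colon\mathcal{E}^{T'}\to\mathcal{E}$ is a linear-non-linear adjunction inducing $U^{T'}F^{T'}=T'$ on $\mathcal{E}$. Composing with the full and faithful, finite-product-preserving $y$ exhibits $\Ccat$ inside the non-linear category of this model of intuitionistic exponential additive multiplicative linear logic, and since $T'$ restricts to $T$ along $y$ the model induces $T$ in the required sense.

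The only genuinely non-formal step is the construction of $T'$: verifying that $T'=\mathrm{Lan}_y(y\circ T)$ is really a monad and really commutative, compatibly with $y$. This is where the free-cocompletion and density bookkeeping is concentrated; the rest is citation (the topos structure of $[\Ccat^{op},\Set]$, Yoneda preserving limits, Theorem~\ref{thm:commtolinear}) or routine diagram transport.
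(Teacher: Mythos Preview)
Your argument is correct and follows the same overall route as the paper: embed $\Ccat$ via Yoneda into its presheaf category, extend $T$ to a commutative monad there, and invoke Theorem~\ref{thm:commtolinear}. The one place you diverge is in packaging the extension step. The paper observes that a commutative monad is exactly a monad in the 2-category of symmetric monoidal categories, lax symmetric monoidal functors and monoidal transformations, and that the free cocompletion (equipped with Day convolution, which for cartesian $\Ccat$ recovers the cartesian product on presheaves) is a 2-functor on this 2-category; since 2-functors send monads to monads, $\widehat{T}$ is commutative for free. Your version unpacks this into an explicit left Kan extension $T'=\mathrm{Lan}_y(yT)$ together with hand-built strength and pointwise verification on representables via density. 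Both are valid; the paper's argument trades explicitness for a one-line conceptual reason why commutativity is preserved, while yours makes the bookkeeping visible at the cost of having to chase several coherence diagrams through cocontinuity-and-density reductions.
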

\begin{proof}[Proof (sketch)]Let $\widehat{\Ccat}$ be the category of presheaves on $\Ccat$ (its cocompletion). We note that the Yoneda embedding defines a strict 2-functor from the 2-category of categories to the 2-category of cocomplete categories (computing its action on morphisms by taking Yoneda extensions). In fact, using the Day convolution \cite{day1970closed}, it defines a strict 2-functor from the 2-category of symmetric monoidal categories (with lax symmetric monoidal functors and monoidal natural transformations) to the (sub-) 2-category of cocomplete symmetric monoidal categories. Recalling that a commutative monad is precisely a monad in the 2-category of symmetric monoidal categories and lax symmetric monoidal functors and that 2-functors preserve monads, we get a commutative monad $\widehat{T}$ on $\widehat{\Ccat}$ which restricts to $T$ on $\Ccat$. Noting that $\widehat{\Ccat}$ is bicartesian closed with equalizers and coequalizers (a topos even), we can apply theorem \ref{thm:commtolinear} for the result that $\widehat{\Ccat}^{\widehat{T}}\leftrightarrows \widehat{\Ccat}$ defines a model of intuitionistic exponential additive multiplicative linear logic.\end{proof}

\begin{remark}We see that intuitionistic linear logic almost precisely describes all commutative effects, viewed from a CBN-point of view in case of dual intuitionistic linear logic \cite{barber1996dual} or from a CBPV point of view in case of linear-non-linear logic \cite{benton1995mixed}. Still, this point of view does not seem to be widely held. Perhaps this is due to the fact that in the (initial) syntactic model of linear logic, a so-called principle of uniformity of threads (called such because it implies the usual principle $\Dcat(!A,B)\cong \Dcat(!A,!B)$) holds: the unit of the adjunction $F\dashv U$ is an isomorphism $\id_\Ccat\cong UF=:T$. In this sense, the free linear logic model does not describe any effects from the monadic point of view. All its interesting information is contained in the comonad $!:=FU$ of the adjunction.
\end{remark}

CBPV models for possibly non-commutative effects can be obtained from any monad model \cite{moggi1991notions} of the monadic metalanguage \cite{levy2012call}.
\begin{theorem}Any bicartesian closed category $\Ccat$ with a strong monad $T$ gives rise to a  CBPV model.
\end{theorem}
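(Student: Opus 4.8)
The plan is to build the CBPV model by hand from $\Ccat$ and $T$, in close analogy with the linear--non-linear construction sketched above: the cartesian category of values is $\Ccat$ itself, while the indexed category of stacks is built from the Eilenberg--Moore category $\Ccat^T$, with the cotensor $\ct{C}^A$ of a $T$-algebra $\ct{C}$ by an object $A$ of $\Ccat$ now playing the role that $FA\multimap\ct{C}$ played in the commutative case. The one input that makes everything run is the folklore fact that, for a strong monad $T$ on a cartesian closed $\Ccat$, the exponential $(\ct{C}_0)^A$ of the carrier of an algebra $\ct{C}=(\ct{C}_0,\gamma)$ by an arbitrary object $A$ carries a canonical algebra structure --- the transpose of $T((\ct{C}_0)^A)\times A\to T((\ct{C}_0)^A\times A)\xrightarrow{T\,\mathrm{ev}}T\ct{C}_0\xrightarrow{\gamma}\ct{C}_0$, where the first arrow is (a twist of) the strength --- and that this exhibits $\Ccat^T$ as cotensored over $\Ccat$, with $\ct{C}\mapsto\ct{C}^A$ satisfying the expected natural isomorphisms of algebras $\ct{C}^1\cong\ct{C}$, $(\ct{C}^A)^B\cong\ct{C}^{A\times B}$, $(\ct{C}\times\ct{D})^A\cong\ct{C}^A\times\ct{D}^A$ and $\ct{C}^{A+A'}\cong\ct{C}^A\times\ct{C}^{A'}$. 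This uses only the strength of $T$ and cartesian closure of $\Ccat$; apart from the finite products $\Ccat^T$ inherits from $\Ccat$, no limits or colimits in $\Ccat^T$ are needed --- in particular we do not require the (co)equalizers of Theorem~\ref{thm:commtolinear}.

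With this in hand I would define a simple CBPV model as follows. The values category is $(\Ccat,1,\times)$. The indexed category of stacks has $\Dcat(A):=\Ccat^T$ in every fibre, $\Dcat(A)(\ct{B},\ct{C}):=\Ccat^T(\ct{B},\ct{C}^A)$, with identities and composition induced from $\Ccat^T$ via the diagonals and projections of $\Ccat$ --- this is just the standard self-indexing of the cotensored category $\Ccat^T$, so each fibre really is a category --- and reindexing along $f\colon A'\to A$ given by post-composition with $\ct{C}^f\colon\ct{C}^A\to\ct{C}^{A'}$; the latter is the identity on objects, hence a bijection on objects, and is manifestly functorial in $f$. The adjunction $F\dashv U\colon\Dcat\rightleftarrows\self(\Ccat)$ is defined fibrewise by $F_AX=TX$ (the free algebra) and $U_A(\ct{C}_0,\gamma)=\ct{C}_0$ on objects, with the evident actions on morphisms, so that $\Dcat(A)(F_AX,\ct{C})=\Ccat^T(TX,\ct{C}^A)\cong\Ccat(X,(\ct{C}_0)^A)\cong\Ccat(A\times X,\ct{C}_0)=\self(\Ccat)(A)(X,U_A\ct{C})$ naturally, and $UF$ is the indexed monad on $\self(\Ccat)$ restricting to $T$ over the terminal object. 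One takes $\Pi$-types in $\Dcat$ to be $\Pi_B\ct{C}:=\ct{C}^B$: the required bijection $\Dcat(A.B)(\ct{B}\{\proj{A}{B}\},\ct{C})\cong\Dcat(A)(\ct{B},\Pi_B\ct{C})$ unfolds (reindexing being the identity on objects, and $A.B=A\times B$) to $\Ccat^T(\ct{B},\ct{C}^{A\times B})\cong\Ccat^T(\ct{B},(\ct{C}^B)^A)$, which is $\ct{C}^{A\times B}\cong(\ct{C}^B)^A$. One takes $(\top,\&)$ to be the terminal algebra and binary products of algebras, which are indexed products since $(-)^A$ preserves finite products; $\self(\Ccat)$ already supports $0,+,\Sigma$-types and has indexed terminal objects; and the extra coproduct condition $\Dcat(C.\Sigma_iC_i)(\ct{D},\ct{D'})\cong\prod_i\Dcat(C.C_i)(\ct{D},\ct{D'})$ follows from $C\times\coprod_iC_i\cong\coprod_i(C\times C_i)$ (distributivity of $\Ccat$) together with $\ct{D'}^{\coprod_iX_i}\cong\prod_i\ct{D'}^{X_i}$.

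What remains is coherence bookkeeping: that $F_A,U_A$ are functorial and natural in $A$, so that $F\dashv U$ is a genuine indexed adjunction and $UF$ really is the expected indexed monad, and that the isomorphisms invoked above are stable under reindexing. The one point I expect to require real care is the right Beck--Chevalley condition for $\Pi_B$ --- that the comparison $\ct{C}^{A\times B}\cong(\ct{C}^B)^A$ is compatible with reindexing in $A$ and with pulling back the display map $\proj{A}{B}$. This reduces to naturality of exponential transposition and of the algebra structure on exponentials, which in turn comes down to the strength axioms; it is the same diagram chase that underlies Beck--Chevalley for $\Pi$-types in $\self(\Ccat)$, carried out one level up in $\Ccat^T$.
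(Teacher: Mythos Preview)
Your construction is correct and is essentially the same as the paper's: both take the Eilenberg--Moore category $\Ccat^T$ for the fibres of $\Dcat$, with hom-sets $\Dcat(A)(\ct{B},\ct{C})$ given by $\Ccat^T(\ct{B},\ct{C}^A)$ for the $\Ccat$-cotensor $\ct{C}^A$, and the Eilenberg--Moore adjunction for $F\dashv U$. If anything, your write-up is more careful than the paper's sketch: the paper phrases the hom as $\Ccat^T(k,\mu_A\Rightarrow l)$ after asserting that $\Ccat^T$ is cartesian closed, but for a merely strong (non-commutative) monad $\Ccat^T$ need not be cartesian closed, and even when it is the exponential $FA\Rightarrow l$ is not the object that makes the adjunction come out right---your explicit use of the cotensor (carrier $(\ct{C}_0)^A$ with structure map built from the strength) is the construction that actually works in the stated generality, and your verifications of the $\Pi$-types, indexed products, and the coproduct compatibility condition via $\ct{C}^{\coprod_iX_i}\cong\prod_i\ct{C}^{X_i}$ are exactly what is needed.
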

\begin{proof}[Proof (sketch)] Recall that in this setting the Eilenberg-Moore category $\Ccat^T$ is cartesian closed. We define the indexed category $\Ccat^{op}\ra{\Dcat}\Cat$ to have the same objects as the Eilenberg-Moore category $\Ccat^T$ in each fibre and morphisms $\Dcat(A)(k,l):=\Ccat^T( k,\mu_A\Rightarrow l)$, where we write $\mu$ for the multiplication of $T$. $F\dashv U$ is interpreted by the usual Eilenberg-Moore adjunction.
\end{proof}

\clearpage 

\subsection{Operational Semantics}\label{sec:simplop}
Importantly, CBPV admits a natural operational semantics that, for ground terms, reproduces the usual operational semantics of CBV and CBN under the specified translations into CBPV \cite{levy2012call} and that can easily be extended to incorporate various effects that we may choose to add to pure CBPV. We only very briefly discuss this for the convenience of the reader. All of the material in this section is due to Levy \cite{levy2012call}.\\
\\
First, we restrict the terms of our syntax a bit, by excluding so-called \emph{complex values} which unnecessarily complicate the presentation of the operational semantics. Complex values are defined to be values containing $\mathsf{pm}\;\;\mathsf{as}\;\;\mathsf{in}\;\;$- and $\lbi{}{}{}$-constructs. As the normalization of  values do not produce effects, all reasonable evaluation strategies for them are observationally indistinguishable and we could choose our favourite.

Luckily it turns out that excluding complex values is not a restriction at all, at least from the extensional point of view, as we have the following result \cite{levy2006call}.
\begin{theorem}[Redundancy of Complex Values for Computations] For any CBPV computation $\Gamma\vdash^c M:\ct{B}$, there is a computation $\Gamma\vdash^c \widetilde{M}:\ct{B}$ which does not contain complex values, such that $\Gamma\vdash^c M=\widetilde{M}:\ct{B}$. Moreover, both the CBV and CBN translations only produce complex-value-free computations.\end{theorem}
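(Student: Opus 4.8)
The plan is to define, by a single simultaneous structural recursion on values and on computations, two syntactic operations: one turning each value $\Gamma\vdash^v V:A$ into a computation $\Gamma\vdash^c V^{\flat}:FA$, and one turning each computation $\Gamma\vdash^c M:\ct{B}$ into the sought $\Gamma\vdash^c\widetilde{M}:\ct{B}$. The invariants to maintain throughout are that $V^{\flat}$ and $\widetilde{M}$ (a) are well-typed as displayed, (b) contain no complex values, and (c) satisfy $\Gamma\vdash^c V^{\flat}=\return V:FA$ and $\Gamma\vdash^c\widetilde{M}=M:\ct{B}$. The intuition is simply that the $\mathsf{let}$- and $\mathsf{pm}$-binders trapped inside a complex value carry no effects and may be hoisted out to the surrounding computation, where the analogous $\toin{}{}{}$- and $\mathsf{pm}$-binders are entirely legitimate and do not themselves count as complex values.

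On values, $(-)^{\flat}$ is given by: $x^{\flat}:=\return x$ and $\langle\rangle^{\flat}:=\return\langle\rangle$; $(\thunk M)^{\flat}:=\return\thunk\widetilde{M}$; $\langle i,V\rangle^{\flat}:=\toin{V^{\flat}}{x}{\return\langle i,x\rangle}$ and $\langle V_1,V_2\rangle^{\flat}:=\toin{V_1^{\flat}}{x_1}{(\toin{V_2^{\flat}}{x_2}{\return\langle x_1,x_2\rangle})}$; $(\lbi{y}{W}{V})^{\flat}:=\toin{W^{\flat}}{y}{V^{\flat}}$; $(\upm{W}{V})^{\flat}:=\toin{W^{\flat}}{z}{(\upm{z}{V^{\flat}})}$; and $(\sipm{W}{i}{x}{V_i})^{\flat}:=\toin{W^{\flat}}{z}{(\sipm{z}{i}{x}{V_i^{\flat}})}$, similarly for the pattern-matching product; here all newly bound identifiers are chosen fresh. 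On computations, $\widetilde{(-)}$ commutes with every computation former, recursing via $\widetilde{(-)}$ into computation arguments and hoisting each value argument $V$ out by an outer $\toin{V^{\flat}}{x}{-}$; for instance $\widetilde{\return V}:=V^{\flat}$, $\widetilde{\force V}:=\toin{V^{\flat}}{x}{\force x}$, $\widetilde{\lbi{x}{V}{M}}:=\toin{V^{\flat}}{x}{\widetilde{M}}$, $\widetilde{V\textquoteleft M}:=\toin{V^{\flat}}{x}{(x\textquoteleft\widetilde{M})}$, $\widetilde{\toin{M}{x}{N}}:=\toin{\widetilde{M}}{x}{\widetilde{N}}$, $\widetilde{\sipm{V}{i}{x}{M_i}}:=\toin{V^{\flat}}{z}{(\sipm{z}{i}{x}{\widetilde{M_i}})}$, and likewise for the remaining formers $\lambda_i(-)$, $i\textquoteleft(-)$, $\lambda_x(-)$, $\upm{V}{-}$, etc.

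The three invariants are then checked by a single induction over this recursion. Well-typedness is immediate from the rules of figure \ref{fig:vcterms}; complex-value-freeness is immediate once one observes that every $\mathsf{pm}$- or $\mathsf{let}$-construct emitted lies in computation position and that the only values produced are variables, $\langle\rangle$, injections and pairs of variables, and thunks of (inductively complex-value-free) computations. For (c), each clause of the shape $\toin{V^{\flat}}{x}{(\cdots x\cdots)}$ is discharged by the inductive hypothesis $V^{\flat}=\return V$ together with the $\beta$-rule $\toin{(\return V)}{x}{N}=N[V/x]$ and congruence; the only clauses needing slightly more are the $\upm{}{}$- and $\sipm{}{}{}{}$-clauses, where one additionally needs $\upm{W}{\return V}=\return(\upm{W}{V})$ and $\sipm{W}{i}{x}{\return V_i}=\return(\sipm{W}{i}{x}{V_i})$. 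These follow by applying the $\eta$-rules of figure \ref{fig:vceqs} to a well-chosen term: for the sum, put $R:=\return(\sipm{z}{i}{x}{V_i})$ with $z$ fresh and use $R[W/z]\stackrel{\#x}{=}\sipm{W}{i}{x}{R[\langle i,x\rangle/z]}$, simplifying $R[\langle i,x\rangle/z]$ by the $\beta$-rule for sums; the $1$-case is analogous and even simpler. I expect this freshness-and-$\eta$ bookkeeping — rather than the underlying idea — to be the only real obstacle, and it is routine.

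For the closing claim, one checks by an easy induction on the clauses of figures \ref{fig:cbvtrans} and \ref{fig:cbntrans} that the CBPV terms $M^{v}$ and $M^{n}$ are always complex-value-free: in both translations $\mathsf{pm}\;\mathsf{as}\;\mathsf{in}$ and $\lbi{}{}{}$ are only ever emitted in computation position — e.g. $\lbi{x}{M}{N}$ becomes $\toin{M^{v}}{x}{N^{v}}$ under CBV and $\lbi{x}{(\thunk M^{n})}{N^{n}}$ under CBN, with $\thunk M^{n}$ a complex-value-free value — and every value emitted is built only from variables, $\langle\rangle$, injections, pairs and thunks of translated computations. Hence $\widetilde{(-)}$ leaves translated computations unchanged up to the equational theory, matching the remark that complex values are never actually needed.
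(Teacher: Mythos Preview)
Your argument is correct and is precisely the standard construction: define $(-)^\flat$ on values and $\widetilde{(-)}$ on computations by mutual structural recursion, hoisting every value-level $\mathsf{let}/\mathsf{pm}$ into an enclosing computation-level $\toin{}{}{}/\mathsf{pm}$, and verify the three invariants by induction, with the $\eta$-rules for the positive connectives doing the work in the $\mathsf{pm}$-cases. The paper itself does not spell out a proof of this theorem at all; it simply cites the result as Proposition~14 of Levy's \emph{Call-by-push-value: Decomposing call-by-value and call-by-name} (2006), and later, in the dependently typed version, remarks that the same proof goes through once $\Id$-eliminators are treated like $\times$-eliminators. What you have written is exactly that cited proof, so there is nothing to compare: you have reconstructed the argument the paper is pointing to.
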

The reason we had not excluded complex values in the first place is that we would like to have them in our dependently typed generalization of CBPV, where, even though we again exclude them from the computations on which we define the operational semantics, they may occur in their types.
\\
\\
We present a small-step operational semantics for (complex value-free) CBPV computations in terms of a simple abstract machine that Levy calls the CK-machine. The configuration of such a machine consists of a pair $M,K$ where $\Gamma\vdash^c M:\ct{B}$ is a \emph{complex value-free computation} and $\Gamma;\ct{B}\vdash^k K:\ct{C}$ is a compatible (simple) \emph{stack}. We call $\ct{C}$ the type of the configuration. Stacks\footnote{To be precise, these are what Levy calls simple stacks, which are the only stacks necessary to formulate our operational semantics. Analogous to the case of values, one can also conservatively extend the calculus with so-called \emph{complex stacks}, for instance by allowing pattern matching into stacks. As their only purpose is to obtain a precise 1-1 correspondence between categorical models and syntactic theories, they go beyond the scope of this short paper. The reader can find a detailed discussion in \cite{levy2012call}.} are formed according to the rules of figure \ref{fig:simplestacks}.
\begin{figure}[!ht]
\fbox{
\parbox{\linewidth}{
\begin{tabular}{ll}
\AxiomC{}
\UnaryInfC{$\Gamma;\ct{C}\vdash^k \nil : \ct{C}$}
\DisplayProof
& 
\AxiomC{$\Gamma,x:A\vdash^c M:\ct{B}$}
\AxiomC{$\Gamma;\ct{B}\vdash^k K:\ct{C}$}
\BinaryInfC{$\Gamma;FA\vdash^k \toin{[\cdot]}{x}{M}::K:\ct{C}$}
\DisplayProof
\\
&\\
\AxiomC{$\Gamma;\ct{B_j}\vdash^k K:\ct{C}$}
\UnaryInfC{$\Gamma;\Pi_{1\leq i\leq n}\ct{B_i}\vdash^k j::K:\ct{C}$}
\DisplayProof\hspace{80pt}
&
\AxiomC{$\Gamma\vdash^v V:A$}
\AxiomC{$\Gamma;\ct{B}\vdash^k K:\ct{C}$}
\BinaryInfC{$\Gamma;A\Rightarrow \ct{B}\vdash^k V::K:\ct{C}$}
\DisplayProof
\end{tabular}
}
}
\caption{\label{fig:simplestacks} The rules for forming (simple) stacks.}
\end{figure}

The initial configurations, transitions (which embody directed versions of the $\beta$-rules of our equational theory) and terminal configurations in the evaluation of a computation $\Gamma\vdash^c M:\ct{C}$ on the CK-machine are specified by figure \ref{fig:ckmachine}.\\
\quad\\
We recall the following basic results about this operational semantics from \cite{levy2012call,levy2006call}.
\begin{theorem}[Determinism, Strong Normalization and Subject Reduction] For every configuration of the CK-machine, at most one transition applies. No transition applies precisely when the configuration is terminal. Every configuration of type $\ct{C}$ reduces, in a finite number of transitions, to a unique terminal configuration of type $\ct{C}$.
\end{theorem}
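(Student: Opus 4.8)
The plan is to prove the three assertions by the standard syntactic methods for abstract machines, treating each one separately but in a shared inductive setup. First I would set up the needed auxiliary lemma: a \emph{generation lemma} (inversion lemma) for stack typing, reading off from figure \ref{fig:simplestacks} the possible shapes of a stack $\Gamma;\ct{B}\vdash^k K:\ct{C}$ according to the head constructor of $\ct{B}$, together with the analogous inversion lemma for the computation typing judgements of figure \ref{fig:vcterms}. These are routine case analyses on the last rule applied, and I would state them without full proof.

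\textbf{Determinism.} For the first two sentences I would proceed by case analysis on the configuration $M,K$. The transition rules of figure \ref{fig:ckmachine} are syntax-directed: each one is triggered by a specific outermost constructor of $M$, or---in the cases where $M$ is already "terminal-looking" (a $\return V$, a $\lambda$-abstraction, a $\thunk$, etc.)---by a specific outermost constructor of $K$. I would verify that for any well-typed configuration exactly one of these patterns matches, which gives "at most one transition applies", and that no pattern matches precisely when the pair $(M,K)$ is one of the listed terminal configurations of figure \ref{fig:ckmachine}; here the typing hypothesis is used via the generation lemma to rule out ill-shaped combinations (e.g. $K=j::K'$ forces $\ct{B}$ to be a $\Pi$-type, so if $M=\return V$ then $\ct{B}=FA$, a contradiction, so that combination never arises).

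\textbf{Subject reduction.} I would show that if $M,K$ has type $\ct{C}$ and $M,K \to M',K'$, then $M',K'$ has type $\ct{C}$. This is one case per transition rule. In each case I apply the relevant generation lemma to the typing derivations of $M$ and of $K$ to extract typings of the subterms, then reassemble a typing derivation for $M',K'$; the substitution lemma (well-typedness is preserved by capture-avoiding substitution of a value for a variable, which I would also state as a standard lemma) handles the $\beta$-style transitions that fire a redex into a substitution instance. Since each transition is a directed instance of a $\beta$-rule of figure \ref{fig:vceqs}, one also checks (or simply cites the soundness/completeness of the equational theory) that the configuration's denotation, equivalently its equivalence class, is unchanged---but for subject reduction only the typing needs to be tracked.

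\textbf{Strong normalization (and uniqueness of the normal form).} Given determinism and the identification of terminal configurations, uniqueness of the terminal configuration is immediate once termination is known, so the real content is termination: every reduction sequence from a well-typed configuration is finite. For the pure calculus I would prove this by a logical-relations / reducibility argument: define, by induction on types, a family of sets of "reducible" closed values of each value type and "reducible" computations of each computation type (a computation being reducible when, plugged into any reducible stack, the resulting configuration terminates), prove the usual closure properties, and then prove the fundamental lemma---every well-typed term is reducible under reducible substitutions---by induction on typing derivations, from which termination of $M,\nil$ follows. The main obstacle is exactly this step: getting the reducibility predicate to interact correctly with the $U$/$F$ adjunction and with thunks (the clause for $U\ct{B}$ must say a thunk is a reducible value iff forcing it is a reducible computation, and one must check this is well-founded and that $\force\thunk$ transitions are absorbed), and with the indexed products $\&$ and $\Pi$. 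Since all of this is due to Levy, I would keep it to a sketch and refer to \cite{levy2012call,levy2006call} for the details, noting only that the argument is by the evident adaptation of Tait's method to the CK-machine and that, as flagged in the theorem's context, the unqualified strong-normalization claim is for pure CBPV and will have to be revisited once non-terminating effects such as recursion are added.
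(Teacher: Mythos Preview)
Your proposal is correct and considerably more detailed than what the paper itself offers: the paper does not prove this theorem at all but simply recalls it as a known result from Levy's work \cite{levy2012call,levy2006call}, with no proof sketch whatsoever. Your plan follows exactly the standard route one finds in those references---syntax-directed case analysis for determinism and the terminal-configuration characterisation, a substitution lemma plus inversion for subject reduction, and a Tait-style reducibility argument for strong normalization---and you rightly note that the details are ultimately due to Levy. There is nothing to correct; if anything, you could shorten your write-up to match the paper's treatment by simply citing Levy and moving on.
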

\pagebreak
\begin{figure}[t!]
\fbox{
\parbox{\linewidth}{
\textbf{Initial configuration}\\
\begin{tabular}{lll}
$M$ &,& $\nil$
\end{tabular}\\
\\
\textbf{Transitions}\\
\begin{tabular}{lllllll}
$\lbi{V}{x}{M}$ &,& $K$& $\leadsto\hspace{20pt}$ & $M[V/x]$ &,& $K$ \\
$ \toin{M}{x}{N} $ &,& $K$& $\leadsto$ & $ M$ &,& $\toin{[\cdot]}{x}{N}::K$ \\
$   \return V     $ \hspace{20pt}&,& $\toin{[\cdot]}{x}{N}::K$& $\leadsto$ & $   N[V/x]       $ &,& $K$ \\
$   \force\thunk M     $ &,& $K$& $\leadsto$ & $   M       $ &,& $K$ \\
$ \sipm{\langle j,V\rangle}{i}{x}{M_i}      $ &,& $K$& $\leadsto$ & $   M_j[V/x]       $ &,& $K$ \\
$ \upm{\langle\rangle}{M}       $ &,& $K$& $\leadsto$ & $ M         $ &,& $K$ \\
$  \sipm{\langle V,W\rangle }{x}{y}{M}      $ &,& $K$& $\leadsto$ & $  M[V/x,W/y]        $ &,& $K$ \\
$ j\textquoteleft M       $ &,& $K$& $\leadsto$ & $       M   $ &,& $j::K$ \\
$ \lambda_i M_i       $ &,& $j::K$& $\leadsto$ & $    M_j      $ &,& $K$ \\
$ V\textquoteleft M       $ &,& $K$& $\leadsto$ & $      M   $ &,& $V::K$ \\
$ \lambda_x M       $ &,& $V::K$& $\leadsto$ & $   M[V/x]       $ &,& $K$ 
\end{tabular}\\
\\
\textbf{Terminal Configurations}\\
\begin{tabular}{lll}
$\return V$ &,& $\nil$\\
$\lambda_i M_i$ &,& $\nil$\\
$\lambda_x M$ &,& $\nil$\\
$\force z$ &,& $ K$\\
$\sipm{z}{i}{x}{M_i}$ &,& $K$\\
$\upm{z}{M}$ &,& $K$\\
$\sipm{z}{x}{y}{M}$ &,& $K$
\end{tabular}
}
}
\caption{\label{fig:ckmachine} The behaviour of the CK-machine in the evaluation of a computation $\Gamma\vdash^c M:\ct{C}$. We leave out type annotations.\vspace{-15pt}}\vspace{-5pt}
\end{figure}\mbox{}
\clearpage
\subsection{Adding Effects}\label{sec:simpleff}
We recall by example how one adds effects to CBPV. Figure \ref{fig:effects} gives some examples of effects one could add to CBPV, from left to right, top to bottom:  divergence, recursion, printing an element $m$ of some monoid $\mathcal{M}$, erratic choice from finitely many alternatives, errors $e$ from some set $E$, writing a global state $s\in S$ and reading a global state to $s$. We note that the framework fits many more examples like probabilistic erratic choice, local references and control operators \cite{levy2012call}.
\begin{figure}[!ht]
\fbox{
\parbox{\linewidth}{
\begin{tabular}{llll}
\AxiomC{}
\UnaryInfC{$\Gamma\vdash^c \diverge :\ct{B}$}
\DisplayProof
&\AxiomC{$\Gamma,z:U\ct{B}\vdash^c M : \ct{B}$}
\UnaryInfC{$\Gamma\vdash^c \mu_z M : \ct{B}$}
\DisplayProof &

\AxiomC{$\Gamma\vdash^c M:\ct{B}$}
\UnaryInfC{$\Gamma\vdash^c \print{m}M:\ct{B}$}
\DisplayProof
&
\AxiomC{$\{\Gamma\vdash^c M_i:\ct{B}\}_{1\leq i\leq n}$}
\UnaryInfC{$\Gamma\vdash^c \nondet{i}{M_i}:\ct{B}$}
\DisplayProof  \\
&
&
&\\
&&&\\
\AxiomC{}
\UnaryInfC{$\Gamma\vdash^c \error{e} :\ct{B}$}
\DisplayProof
& &
\AxiomC{$\Gamma\vdash^c M:\ct{B}$}
\UnaryInfC{$\Gamma\vdash^c \writecell{s}M:\ct{B}$}
\DisplayProof
&
\AxiomC{$\{\Gamma\vdash^c M_s:\ct{B}\}_{s\in S}$}
\UnaryInfC{$\Gamma\vdash^c \readcell{s}{M_s}:\ct{B}$}
\DisplayProof

\end{tabular}
}
}
\caption{\label{fig:effects} Some examples of effects we could add to CBPV. $\mu_z$ is a name binding operation that binds the identifier $z$ and $\nondet{i}{}$ and $\readcell{s}{}$ bind the indices $i$ and $s$ respectively.}
\end{figure}\quad\\
The small-step semantics of divergence, recursion, erratic choice and errors can easily be explained on our CK-machine as it is. This is summed up in figure \ref{fig:opsemdivs}.\\

\begin{figure}[!ht]
\fbox{
\parbox{\linewidth}{
\textbf{Transitions}\\
\begin{tabular}{lllllll}
$\diverge$ &,& $K$& $\leadsto\hspace{20pt}$ & $\diverge$ &,& $K$ \\
$\mu_z M $ & , & $K$ & $ \leadsto\hspace{20pt}$ & $M[\thunk \mu_z M /z]$ & , & $K$\\
$\nondet{i}{M_i}$ & , & $K$ & $ \leadsto\hspace{20pt}$ & $M_j$ & , & $K$\\
\end{tabular}\\
\\
\textbf{Terminal Configurations}\\
\begin{tabular}{lll}
$\error e$ & , & $K$ 
\end{tabular}
}
}
\caption{\label{fig:opsemdivs} The operational semantics for divergence, recursion, erratic choice and errors.}
\end{figure}\quad\\
For the operational semantics of printing and state, we need to add some hardware to our machine. For that purpose, a configuration of our machine will now consist of a quadruple $M,K,m,s$ where $M,K$ are as before, $m$ is an element of our printing monoid $(\mathcal{M},\epsilon,*)$ which models some channel for output and $s$ is an element of our finite pointed set of states $(S,s_0)$ which is the current value of our storage cell. We lift the operational semantics of all existing language constructs to this setting by specifying that they do not modify $m$ and $s$, that terminal configurations can have any value of $m$ and $s$ and that initial configurations always have value $m=\epsilon$ and $s=s_0$ for the fixed initial state $s_0$. Printing and writing and reading the state can now be given the operational semantics of figure \ref{fig:opsemprint}.
 
\begin{figure}[!ht]
\fbox{
\parbox{\linewidth}{
\textbf{Transitions}\\
\begin{tabular}{lllllllllllllll}
$\print n M$ &,& $K$& ,& $m$& , & $s$ & $\leadsto\hspace{20pt}$ & $M$ &,& $K$ & ,& $m*n$& , & $s$\\
$\writecell {s'} M$ &,& $K$& ,& $m$& , & $s$ & $\leadsto\hspace{20pt}$ & $M$ &,& $K$ & ,& $m$& , & $s'$\\
$\readcell {s'} {M_{s'}}$ &,& $K$& ,& $m$& , & $s$ & $\leadsto\hspace{20pt}$ & $M_s$ &,& $K$ & ,& $m$& , & $s$\\
\end{tabular}
}
}
\caption{\label{fig:opsemprint} The operational semantics for printing and writing and reading global state.}
\end{figure} 
\quad\\
We can try to extend the results of the previous section to this effectful setting and indicate when they break \cite{levy2012call}.
\begin{theorem}[Determinism, Strong Normalization and Subject Reduction] Every transition respects the type of the configuration. No transition occurs precisely if we are in a terminal configuration. In absence of erratic choice, at most one transition applies to each configuration. In absence of divergence and recursion, every configuration reduces to a terminal configuration in a finite number of steps.
\end{theorem}

We can again translate effectful CBV- and CBN-$\lambda$-calculi into CBPV with the appropriate effects as is indicated in figure \ref{fig:transeff}.\\

\begin{figure}[!ht]
\fbox{
\parbox{\linewidth}{
\begin{tabular}{l|ll||ll|l}
\textbf{CBV Term} $M$ & \textbf{CBPV Term} $M^v$&\qquad\qquad\qquad && \textbf{CBN Term} $M$ & \textbf{CBPV Term} $M^n$\\
\hline
$\diverge $& $\diverge$ &&& $\diverge $& $\diverge$\\
$\mu_x M $ & $\mu_z (\toin{\force z}{x}{M^v})$ &&&$\mu_z M $ & $\mu_z M^n$  \\
$\nondet{i}{M_i}$ & $\nondet{i}{M_i^v}$ &&& $\nondet{i}{M_i}$ & $\nondet{i}{M_i^n}$\\
$\error{e} $ & $\error{e}$ &&& $\error{e} $ & $\error{e}$ \\
$\print{m}M$ & $\print{m}M^v$ &&&$\print{m}M$ & $\print{m}M^n$ \\
$\writecell{s}M$ & $\writecell{s}M^v$ &&& $\writecell{s}M$ & $\writecell{s}M^n$\\
$\readcell{s}{M_s}$& $\readcell{s}{M_s^v}$ &&& $\readcell{s}{M_s}$& $\readcell{s}{M_s^n}$
\end{tabular}}}
\caption{\label{fig:transeff} The CBV- and CBN-translations for effectful terms. $z$ is assumed to be fresh in the CBV-translation $\mu_x M$.}
\end{figure}

Let us write $M\Downarrow N,m,s$ for a closed term $\vdash^c M:\ct{B}$ if $M,\nil,\epsilon,s_0$ reduces to the terminal configuration $N,\nil,m,s$. We call this the \emph{big-step semantics} of CBPV. Recall that at least for terms of ground type CBPV induces the usual operational semantics via the CBV- and CBN-translations \cite{levy2006call}.
\begin{theorem}
The big-step semantics for CBPV induces the usual CBV- and CBN-big-step semantics for terms of ground  type, via the respective translations.\end{theorem}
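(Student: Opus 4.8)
The plan is to make the ``usual'' CBV- and CBN-big-step semantics explicit as inductively defined judgements $M \Downarrow_{\mathrm{cbv}} V, m, s$ and $M \Downarrow_{\mathrm{cbn}} V, m, s$ on closed source $\lambda$-terms, and then, for each evaluation strategy, to establish a bisimulation-style correspondence between one source evaluation and a block of CK-machine transitions on the translated term. Throughout, ``ground type'' means a value type built only from $0, 1, +, \times$ (no $\Rightarrow$ and no $U$), so that the CBV-translation $(-)^v$ acts as the identity on ground types while the CBN-translation $(-)^n$ sends a ground type to a computation type of the shape $F(\cdots)$; in either case the terminal CK-configurations of that type are precisely the translations of the source values of the corresponding source type.

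For the CBV case, I would prove the equivalence
\begin{align*}
M \Downarrow_{\mathrm{cbv}} V, m, s \qquad\Longleftrightarrow\qquad M^v \Downarrow \return V^v, m, s
\end{align*}
for every closed source term $\vdash M : A$ with $A$ ground. The left-to-right implication goes by induction on the derivation of $M \Downarrow_{\mathrm{cbv}} V, m, s$: for each source evaluation rule one checks that the CK-machine, started from $M^v, \nil$ with empty output and initial state $s_0$, performs a finite sequence of transitions from figures \ref{fig:ckmachine}, \ref{fig:opsemdivs} and \ref{fig:opsemprint} and reaches $\return V^v, \nil$ with accumulated output $m$ and final state $s$. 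The checks are routine but use three facts: the translation of a source value is a CBPV value, so $V^v, \nil$ is already terminal; the machine clause $\return V^v, \toin{[\cdot]}{x}{N}::K \leadsto N[V^v/x], K$ realises call-by-value substitution, so one needs the substitution lemma $N^v[V^v/x] = (N[V/x])^v$, which holds on the nose because the translation commutes with substitution of values; and nested sequencing is handled by pushing continuations onto the stack rather than by invoking the ``associativity of $\mathsf{to}$'' equation. The right-to-left implication is the complementary simulation: one shows that every maximal CK-transition sequence from $M^v, \nil$ decomposes into blocks, each block matching one source evaluation step, so a terminal configuration can only be $\return V^v, \nil$ for some $V$ with $M \Downarrow_{\mathrm{cbv}} V$; when erratic choice, divergence and recursion are all absent, this can alternatively be read off from determinism, strong normalization, and totality of $\Downarrow_{\mathrm{cbv}}$.

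For the CBN case, the analogous statement relates $M \Downarrow_{\mathrm{cbn}} V, m, s$ to $M^n \Downarrow W, m, s$, where $W$ is the translation-determined terminal configuration corresponding to the source value $V$ --- for instance $\return \langle i, \thunk(\cdots)\rangle$ at a ground sum type. The proof has the same shape, by induction on the CBN evaluation derivation in one direction and by the complementary simulation in the other, using that a variable translates to $\force x$, that a redex $(\lambda_x M)\, N$ translates so that the machine pushes $\thunk N^n$ onto the stack and later pops it, and the substitution lemma $M^n[\thunk N^n / x] = (M[N/x])^n$. The genuinely CBN-specific point is that an argument is passed as an unevaluated thunk and is evaluated only when forced, which is exactly the behaviour of the $\thunk / \force$ machine clauses and is mirrored by the fact that the source $\Downarrow_{\mathrm{cbn}}$ rules never evaluate an argument before the function has reduced to a $\lambda$-abstraction. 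Combining the two equivalences yields the theorem; this is in essence Levy's argument \cite{levy2006call}.

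I expect the main obstacle to be bookkeeping rather than insight: carefully pairing each source big-step rule with the precise sequence of CK-transitions it induces on the translated term, and threading the effect components $(m,s)$ of the configuration uniformly through every clause. The only genuinely delicate point is the right-to-left direction in the presence of divergence and recursion (and, separately, erratic choice), where strong normalization and determinism are unavailable, so one must argue directly that the translation neither introduces nor removes non-termination (respectively, neither adds nor deletes branches), and agrees on every finite prefix of the emitted output and on the reachable states.
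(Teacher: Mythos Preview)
The paper does not actually prove this theorem: it is stated as a recap of a known result and attributed to Levy via the citation \cite{levy2006call}, with no accompanying proof or proof sketch. Your proposal is therefore not competing with a proof in the paper but supplying one where the paper defers to the literature; and indeed you correctly identify at the end that the argument is ``in essence Levy's''.

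Your outline is the standard one and is sound in its overall shape. One small slip worth tightening: under the CBV translation of figure~\ref{fig:cbvtrans}, a source value $V$ does not translate to a CBPV \emph{value} but to a computation of the form $\return W$ for some CBPV value $W$ (e.g.\ $x \mapsto \return x$, $\langle\rangle \mapsto \return\langle\rangle$). Consequently the substitution lemma cannot literally be $N^v[V^v/x] = (N[V/x])^v$, since $V^v$ is a computation and $x$ a value identifier; the correct form is $(N[V/x])^v = N^v[W/x]$ where $V^v = \return W$, which is what Levy proves. This is a wording issue rather than a gap in the argument, but it is exactly the kind of thing that trips up the inductive bookkeeping if left imprecise.
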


Although one could write down an equational theory for these effects and a corresponding categorical semantics, in which case one would obtain soundness and completeness properties for the CBV- and CBN-translations, we will choose not to do so here for reasons of space. For this, we refer the reader for instance to \cite{plotkin2002notions,levy2012call}. Instead, we just list some basic equations we would typically demand for the effects we consider in figure \ref{fig:effeqn}. The specific equation for recursion will also turn out to be necessary for guaranteeing subject reduction in dCBPV+ which we consider in section \ref{sec:depcbpvklei}. The other equations are mostly relevant as they increase the power of the type checker in the dependently typed case (but complicate a type checking algorithm, of course!). In a practical implementation, one would have to decide which extra equations for the specific effects (like the lookup-update algebra equations for global state of Plotkin and Power \cite{plotkin2002notions}) to include such that one can still design a suitable type checker.

\begin{figure}
[!ht]
\fbox{
\parbox{\linewidth}{
\begin{tabular}{ll}\parbox{.5\linewidth}{
$\toin{\diverge}{x}{M}=\diverge$\\
$\toin{\error{e}}{x}{M}=\error{e}$\\
$\toin{\nondet{i}{N_i}}{x}{M}=\nondet{i}{\toin{N_i}{x}{M}}$\\
$\toin{\readcell{s}{N_s}}{x}{M}=\readcell{s}{\toin{N_s}{x}{M}}$\\
$\toin{(\print{m}{N})}{x}{M}=\print{m}{(\toin{N}{x}{M})}$\\
$\toin{(\writecell{s}{N})}{x}{M}=\writecell{s}{(\toin{N}{x}{M})}$\\
$\mu_z M = M[\thunk\mu_zM/z$} &
\parbox{.5\linewidth}{$\diverge=\lambda_\alpha\diverge$\\
$\error{e}=\lambda_\alpha\error{e}$\\
$\nondet{i}{\lambda_\alpha N_i}=\lambda_\alpha\nondet{i}{N_i}$\\
$\readcell{s}{\lambda_\alpha N_s}=\lambda_\alpha\readcell{s}{N_s}$\\
$\print{m}{\lambda_\alpha N}=\lambda_\alpha\print{m}{N}$\\
$\writecell{s}{\lambda_\alpha N}=\lambda_\alpha \writecell{s}{N}$\\
\;}
\end{tabular}}}
\caption{\label{fig:effeqn} Some basic equations for our example effects, where we use the shorthand notation $\alpha$ for both the case of an identifier $x$ and an index $i$.}
\end{figure}
\clearpage
\section{Dependently Typed Call-by-Push-Value without Dependent Kleisli Extensions}\label{sec:depcbpvwoklext}
	In this section, we sketch how the results in the previous section have an elegant dependently typed generalization, by allowing types to depend on values. We first consider a system in which we only allow sequencing $\toin{M}{x}{N}$ of a dependent function $N$ if its result type does not depend on the identifier $x$ that the result of $M$ is bound to. In other words, we do not include a Kleisli extension principle for dependent functions. The discussion of such a more complicated system where we do include those rules will be postponed to section \ref{sec:depcbpvklei}.
\vspace{-2pt}
\subsection{Syntax}\vspace{-2pt}
The syntax of CBPV generalises straightforwardly to dependent types. As anticipated already by Levy \cite{levy2012call}, we only need to take care in the rule for $\toin{M}{x}{N}$. He suggested that the type $\ct{B}$ should not depend on $x$ in this rule. We shall apply this restriction as well for the moment. We call the resulting system \emph{dependently typed call-by-push-value without dependent Kleisli extensions}, or dCBPV-. We shall later revisit this assumption and study a system in which we do allow such \emph{Kleisli extensions for dependent functions}.

The key feature of a dependent type system is that we allow types to refer contain free identifiers from the context. The reader may want to keep in mind the analogy that dependent types are to predicates what non-dependent types are to propositions. One consequence is that order in the context becomes important as all free variables in a type need to be declared in the context to its left. As types can depend on terms in a dependently typed system, we define both together in one big inductive definition. We refer the reader to \cite{hofmann1997syntax} for a discussion of the syntactic subtleties of a dependent type theory.  

We distinguish between the following objects: contexts $\Gamma;\Delta$, where $\Gamma$ is a region consisting of identifier declarations of value types and $\Delta$ is a region for declarations of computation type and where we write $\Gamma$ as a shorthand for $\Gamma;\cdot$, value types $A$, computation types $\ct{B}$, values $V$, computations $M$ and stacks $K$. The type theory talks about these objects according to the judgements of figure \ref{fig:judgements}.
\vspace{-4pt}
\begin{figure}[!ht]
\fbox{\parbox{\textwidth}{
\begin{tabular}{ll}
\textbf{Judgement} & \textbf{Intended meaning}\vspace{2pt}\\
$\vdash \Gamma;\Delta \;\ctxt$ & $\Gamma;\Delta$ is a valid context\\
$\Gamma \vdash A\;\vtype$ &  $A$ is a value type in context $\Gamma$\\
$\Gamma \vdash \ct{B}\;\ctype$ &  $\ct{B}$ is a computation type in context $\Gamma$\\
$\Gamma\vdash^v V:A$ & $V$ is a value of type $A$ in context $\Gamma$\\
$\Gamma\vdash^c M:\ct{B}$ & $M$ is a computation of type $\ct{B}$ in context $\Gamma$\\
$\Gamma;\Delta\vdash^k K:\ct{B}$ & $K$ is a stack of type $\ct{B}$ in context $\Gamma;\Delta$\\
$\vdash \Gamma;\Delta = \Gamma';\Delta'$\hspace{40pt} & $\Gamma;\Delta$ and $\Gamma';\Delta'$ are judgementally equal contexts\\
$\Gamma\vdash A=A'$ & $A$ and $A'$ are judgementally equal value types in context $\Gamma$\\
$\Gamma\vdash \ct{B}= \ct{B'}$ & $\ct{B}$ and $\ct{B'}$ are judgementally equal computation types in context $\Gamma$\\
$\Gamma\vdash^v V= V':A$ & $V$ and $V'$ are judgementally equal values of type $A$ in context $\Gamma$\\
$\Gamma\vdash^c M= M':\ct{B}$ & $M$ and $M'$ are judgementally equal computations of type $\ct{B}$ in context $\Gamma$\\
$\Gamma;\Delta\vdash^k K= K':\ct{B}$ & $K$ and $K'$ are judgementally equal stacks of type $\ct{B}$ in context $\Gamma;\Delta$
\end{tabular}}}
\normalsize
\caption{\label{fig:judgements} Judgements of dependently typed CBPV.\vspace{-10pt}}
\end{figure}

To start with, we have rules, which we shall not list, which state that all judgemental equalities are equivalence relations and that all term, type and context constructors as well as substitutions respect judgemental equality. In similar vein, we have conversion rules which state that we may swap contexts and types for judgementally equal ones in all judgements. To form contexts, we have the rules of figure~\ref{fig:ctxtrules}.\vspace{-4pt}\nopagebreak
\begin{figure}[!ht]
\fbox{
\parbox{\linewidth}{
\begin{tabular}{ll}
\AxiomC{}
\UnaryInfC{$\cdot;\cdot \ctxt$}
\DisplayProof
& \\
& \\
\AxiomC{$\vdash\Gamma;\Delta\ctxt$}
\AxiomC{$\Gamma\vdash A\;\vtype$}
\BinaryInfC{$\vdash \Gamma,x:A;\Delta \ctxt$}
\DisplayProof

&
\hspace{56pt}
\AxiomC{$\vdash \Gamma;\cdot\ctxt$}
\AxiomC{$\Gamma \vdash \ct{B}\ctype$}
\BinaryInfC{$\vdash \Gamma;\ct{B}\ctxt$}
\DisplayProof\\
\end{tabular}
}
}
\caption{\label{fig:ctxtrules} Rules for contexts, where $x$ is assumed to be a fresh identifier.\vspace{-40pt}\;}
\end{figure}
To form types, we have the rules of figure \ref{fig:depcbpvtypes}.
\begin{figure}[!ht]
\fbox{
\parbox{\linewidth}{
\begin{tabular}{ll}
\AxiomC{$\Gamma,x:A,\Gamma'\vdash A'\vtype$}
\AxiomC{$\Gamma\vdash^v V:A$}
\BinaryInfC{$\Gamma,\Gamma'[V/x]\vdash A'[V/x]\vtype$}
\DisplayProof\hspace{40pt}
&
\AxiomC{$\Gamma,x:A,\Gamma'\vdash \ct{B}\ctype$}
\AxiomC{$\Gamma\vdash^v V:A$}
\BinaryInfC{$\Gamma,\Gamma'[V/x]\vdash \ct{B}[V/x]\ctype$}
\DisplayProof\\
&\\
\AxiomC{$\Gamma\vdash \ct{B}\ctype$}
\UnaryInfC{$\Gamma\vdash U\ct{B}\vtype$}
\DisplayProof
& 
\AxiomC{$\Gamma\vdash A\vtype$}
\UnaryInfC{$\Gamma\vdash FA\ctype$}
\DisplayProof
\\
& \\
\AxiomC{$\{\Gamma\vdash A_i\vtype\}_{1\leq i \leq n}$}
\UnaryInfC{$\Gamma\vdash \Sigma_{1\leq i\leq n}A_i\vtype$}
\DisplayProof

&
\AxiomC{$\{\Gamma\vdash \ct{B_i}\ctype\}_{1\leq i \leq n}$}
\UnaryInfC{$\Gamma\vdash \Pi_{1\leq i\leq n}\ct{B_i}\ctype$}
\DisplayProof\\
&\\
\AxiomC{$\Gamma,x:A\vdash A'\vtype$}
\UnaryInfC{$\Gamma\vdash \Sigma_{x:A}A'\vtype$}
\DisplayProof

&
\AxiomC{$\Gamma,x:A\vdash \ct{B}\ctype$}
\UnaryInfC{$\Gamma\vdash \Pi_{x:A}\ct{B}\ctype$}
\DisplayProof\\
&\\
\AxiomC{$\vdash \Gamma\ctxt$}
\UnaryInfC{$\Gamma\vdash 1\vtype$}
\DisplayProof
&\\
&\\
\AxiomC{$\Gamma\vdash^v V:A$}
\AxiomC{$\Gamma\vdash^v V':A$}
\BinaryInfC{$\Gamma\vdash \Id_A(V,V')\vtype$}
\DisplayProof

\end{tabular}
}
}
\caption{\label{fig:depcbpvtypes} Rules for type formation.}
\end{figure}\\
For these types, we consider the values and computations formed using the rules of figure \ref{fig:vcdepterms}. We leave the discussion of stacks until section \ref{sec:depop}.\\
\begin{figure}[!ht]
\centering
\fbox{\parbox{\linewidth}{
\begin{tabular}{ll}
\AxiomC{$\vdash \Gamma,x:A,\Gamma'\ctxt$}
\UnaryInfC{$\Gamma,x:A,\Gamma'\vdash^v x:A$}
\DisplayProof\hspace{50pt} & 
\AxiomC{$\Gamma\vdash^v V:A$}
\AxiomC{$\Gamma,x:A,\Gamma'\vdash^{v/c} R:{B}$}
\BinaryInfC{$\Gamma,\Gamma'[V/x]\vdash^{v/c} \lbi{x}{V}{R} :{B}[V/x]$}
\DisplayProof\\
&\\
\AxiomC{$\Gamma\vdash^v V:A$}
\UnaryInfC{$\Gamma\vdash^c \return\; V:FA$}
\DisplayProof &
\AxiomC{$\Gamma\vdash^c M:FA$}
\AxiomC{$\Gamma,x:A,\Gamma'\vdash^c N:\ct{B}$}
\AxiomC{$x$ not free in $\Gamma';\ct{B}$}
\TrinaryInfC{$\Gamma,\Gamma' \vdash^c \toin{M}{x}{N}:\ct{B}$}
\DisplayProof\\
&\\
\AxiomC{$\Gamma\vdash^c M:\ct{B}$}
\UnaryInfC{$\Gamma\vdash^v \thunk M:U\ct{B}$}
\DisplayProof
&
\AxiomC{$\Gamma\vdash^v V: U\ct{B}$}
\UnaryInfC{$\Gamma\vdash^c \force V: \ct{B}$}
\DisplayProof\\
&\\
\AxiomC{$\Gamma\vdash^v V_i: A_i$}
\UnaryInfC{$\Gamma\vdash^v \langle i,V_i\rangle : \Sigma_{1\leq i\leq n}A_i$}
\DisplayProof
&
\AxiomC{$\Gamma\vdash^v V: \Sigma_{1\leq i\leq n}A_i$}
\AxiomC{$\{\Gamma,x:A_i\vdash^{v/c} R_i : {B}[\langle i,x\rangle/z]\}_{1\leq i\leq n}$}
\BinaryInfC{$\Gamma\vdash^{v/c} \sipm{V}{i}{x}{R_i} : {B}[V/z]$}
\DisplayProof\\
&\\
\AxiomC{$\vdash \Gamma\ctxt$}
\UnaryInfC{$\Gamma\vdash^v\langle\rangle :1$}
\DisplayProof&
\AxiomC{$\Gamma\vdash^v V:1$}
\AxiomC{$\Gamma\vdash^{v/c} R:{B}[\langle\rangle/z]$}
\BinaryInfC{$\Gamma\vdash^{v/c} \upm{V}{R}:{B}[V/z]$}
\DisplayProof
\\
&\\
\AxiomC{$\Gamma\vdash^v V_1:A_1$}
\AxiomC{$\Gamma\vdash^v V_2:A_2[V_1/x]$}
\BinaryInfC{$\Gamma\vdash^v \langle V_1,V_2\rangle :\Sigma_{x:A_1} A_2$}
\DisplayProof
&
\AxiomC{$\Gamma\vdash^v V: \Sigma_{x:A_1} A_2$}
\AxiomC{$\Gamma,x:A_1,y:A_2\vdash^{v/c} R:{B}[\langle x,y\rangle/z]$}
\BinaryInfC{$\Gamma\vdash^{v/c} \sipm{V}{x}{y}{R}:{B}[V/z]$}
\DisplayProof\\
&\\
\AxiomC{$\Gamma\vdash^v V:A$}
\UnaryInfC{$\Gamma\vdash^v \refl{V}:\Id_A(V,V)$}
\DisplayProof &
\AxiomC{$\Gamma\vdash^v V:\Id_A(V_1,V_2)$}
\AxiomC{$ \Gamma,x:A\vdash^{v/c} R :B[x/x',\refl{x}/p]$}
\BinaryInfC{$\Gamma\vdash^{v/c} \idpm{V}{x}{R}:B[V_1/x,V_2/x',V/p]$}
\DisplayProof \\
&\\
\AxiomC{$\{\Gamma\vdash^c M_i :\ct{B_i}\}_{1\leq i\leq n}$}
\UnaryInfC{$\Gamma\vdash^c \lambda_i M_i : \Pi_{1\leq i\leq n}\ct{B_i}$}
\DisplayProof
&
\AxiomC{$\Gamma\vdash^c M: \Pi_{1\leq i\leq n}\ct{B_i}$}
\UnaryInfC{$\Gamma\vdash^c i\textquoteleft M : \ct{B_i}$}
\DisplayProof\\
&\\
\AxiomC{$\Gamma,x:A\vdash^c M:\ct{B}$}
\UnaryInfC{$\Gamma\vdash^c \lambda_xM:\Pi_{x:A}\ct{B}$}
\DisplayProof
&
\AxiomC{$\Gamma\vdash^v V:A$}
\AxiomC{$\Gamma\vdash^c M:\Pi_{x:A}\ct{B}$}
\BinaryInfC{$\Gamma\vdash^c V\textquoteleft M : \ct{B}[V/x]$}
\DisplayProof
\end{tabular}
}
}
\caption{\label{fig:vcdepterms} Values and computations of dependently typed CBPV.}
\end{figure}

We generate judgemental equalities for values and computations through the rules of figure \ref{fig:vceqs} and \ref{fig:vcdepeqs}. Note that we are using extensional $\Id$-types, in the sense of $\Id$-types with an $\eta$-rule. This is only done for the aesthetics of the categorical semantics. They may not be suitable for an implementation, however, as they make type checking undecidable for the usual reasons \cite{hofmann1997syntax}. The syntax and semantics can just as easily be adapted to intensional $\Id$-types, which are the obvious choice to for an implementation.

\begin{figure}[!ht]
\fbox{
\parbox{\linewidth}
{\resizebox{1.02\linewidth}{!}{
\begin{tabular}{ll}
\hspace{-6pt}$\idpm{(\refl V)}{x}{R} = R[V/x]$ \hspace{0pt}& $R[V_1/x,V_2/y,V/z] \stackrel{\#w}{=} \idpm{V}{w}{R[w/x,w/y,(\refl w)/z]}$
\end{tabular}
}}
}
\caption{\label{fig:vcdepeqs} Equations for computations involving reflexivity witnesses. Again, these rules should be read as equations of typed terms in context: they are assumed to hold if we can derive that both sides of the equation are terms of the same type in the same context.}
\end{figure}

\vspace{-5pt}
Figures \ref{fig:depcbvtrans} and \ref{fig:depcbntrans} indicate the natural candidate CBV- and CBN-translations of DTT into dCBPV.

\begin{figure}[!ht]
\fbox{
\parbox{\linewidth}{
\begin{tabular}{l|l||l|l}
\textbf{CBV type}  & \textbf{CBPV type} & \textbf{CBV term } & \textbf{CBPV term}\\
\hline
$\Gamma\vdash A\type$ & $\vect{UF}\Gamma^v\vdash A^v\vtype$ & $x_1:A_1,\ldots,x_m:A_m$ & $x_1:A_1^v,\ldots,x_m:A_m^v[\ldots \tr x_{i}/z_{i}\ldots] $\\
 & & $\vdash M:A$ & $\vdash^c M^v:F(A^v[\tr x_1/z_1,\ldots,\tr x_n/z_n])$\\
 && $x$& $\return x$\\
  &&$\lbi{x}{M}{N}$ & $\toin{M^v}{x}{N^v}$ \\
$\Sigma_{1\leq i\leq n }A_i$ & $\Sigma_{1\leq i\leq n }A_i^v$& $\langle i,M\rangle $&$\toin{M^v}{x}{\return \langle i, x\rangle }$ \\
 &&$\sipm{M}{i}{x}{N_i}$& $\toin{M^v}{z}{(\sipm{z}{i}{x}{N_i^v})}$\\
$\Pi_{1\leq i\leq n}A_i $ & $U\Pi_{1\leq i \leq n} FA_i^v$ &$\lambda_iM_i$ &$\return \thunk (\lambda_i  M_i^v)$\\
&&$i\textquoteleft  N $&$\toin{N^v}{z}{(i\textquoteleft \force z)}$\\
$\Pi_{x:A} A'$ & $U(\Pi_{x:A^v} F A'^v[\tr x/z])$ & $\lambda_x M$&$\return \thunk \lambda_x M^v$\\
&&$M\textquoteleft N$ &$\toin{M^v}{x}{(\toin{N^v}{z}{(x\textquoteleft \force z)})}$\\
$1$ & $1$ & $\langle\rangle$ & $ \return \langle\rangle$  \\
&&$\upm{M}{N}$&$\toin{M^v}{z}{(\upm{z}{N^v})}$\\
$\Sigma_{x:A}  A'$ & $\Sigma_{x:A^v} A'^v[\tr x/z]$ & $ \langle M, N\rangle $  & $\toin{M^v}{x}{(\toin{N^v}{y}{\return \langle x,y\rangle})}$\\
&&$\sipm{M}{x}{y}{N}$&$\toin{M^v}{z}{(\sipm{z}{x}{y}{N^v})}$\\
$\Id_A(M,N)$& {$\Id_{ UF A^v}( \thunk M^v$,} &$\refl M$& $\toin{M^v}{z}{\return \refl \tr z} $ \\
&{$ \thunk N^v)$}&$\idpm{M}{x}{N}$& $\toin{M^v}{z}{(\idpm{z}{y}{}}$\\
&&& $(\toin{\force y}{x}{N^v}))$
\end{tabular}}
}
\caption{\label{fig:depcbvtrans} A translation of dependently typed CBV into dCBPV. We write $\tr$ as an abbreviation for $\thunk\return$ and $\vect{UF}\Gamma:=z_1:UF A_1,\ldots,z_n:UFA_n$ for a context $\Gamma=x_1:A_1,\ldots,x_n:A_n$.}
\end{figure}
\vspace{-15pt}
\begin{figure}[!ht]
\fbox{
\parbox{\linewidth}{
\begin{tabular}{l|l||l|l}
\textbf{CBN type}  & \textbf{CBPV type} & \textbf{CBN term } & \textbf{CBPV term }\\
\hline
${\Gamma}\vdash\ct{B}\type$ &  $\vect{U}{\Gamma^n}\vdash\ct{B}^n\ctype$& $x_1:\ct{B}_1,\ldots,x_m:\ct{B}_m\vdash M:\ct{B}$& $x_1:U\ct{B}_1^n,\ldots,x_m:U\ct{B}_m^n\vdash^c M^n:\ct{B}^n$ \\
 && $x$& $\force x$\\
  & & $\lbi{x}{M}{N}$ & $\lbi{x}{(\thunk M^n)}{N^n}$ \\
$\Sigma_{1\leq i\leq n }\ct{B}_i$ & $F\Sigma_{1\leq i\leq n }U\ct{B}_i^n$& $\langle i,M\rangle $&$\return \langle i,\thunk M^n\rangle $ \\
 & &$\sipm{M}{i}{x}{N_i}$&$\toin{M^n}{z}{(\sipm{z}{i}{x}{N_i^n})}$ \\
$\Pi_{1\leq i\leq n}\ct{B}_i $ & $\Pi_{1\leq i \leq n} \ct{B}_i^n$ & $\lambda_iM_i$& $\lambda_iM_i^n$\\
 && $i\textquoteleft M$ & $i\textquoteleft M^n$\\
$\Pi_{x:\ct{B}} \ct{B'}$ & $\Pi_{x:U\ct{B}^n} \ct{B'}^n$ & $\lambda_x M $& $\lambda_xM^n$\\
 &&$N\textquoteleft M$ & $(\thunk N^n) \textquoteleft M^n$ \\
$1$ & $F1$ & $\langle\rangle$ & $\return \langle\rangle$  \\
&&$\upm{M}{N}$&$\toin{M^n}{z}{(\upm{z}{N^n})}$\\
$\Sigma_{x:\ct{B}} \ct{B'}$ & $F(\Sigma_{x:U\ct{B}^n}  U\ct{B'}^n)$ & $\langle M, N\rangle $  & $\return \langle \thunk M^n,\thunk N^n\rangle$\\
&& $\sipm{M}{x}{y}{N}$& $\toin{M^n}{z}{(\sipm{z}{x}{y}{N^n})}$ \\
$\Id_{\ct{B}}(M,M')$&$F(\Id_{U\ct{B}}(\thunk M^n$ & $\refl M$& $\return \refl \thunk M^n$\\
&$,\thunk M'{}^n))$&$\idpm{M}{x}{N}$& $\toin{M^n}{z}{(\idpm{z}{x}{N^n})}$
\end{tabular}
}
}
\caption{\label{fig:depcbntrans} A translation of dependently typed CBN into dCBPV. We write $\vect{U}\Gamma:=x_1:U A_1,\ldots,x_n:UA_n$ for a context $\Gamma=x_1:A_1,\ldots,x_n:A_n$.}
\end{figure}\quad\\
However, it turns out that without dependent Kleisli extensions, the CBV-translation is not well-defined as it results in untypable terms. The CBN-translation is, but only if we restrict to the weak (non-dependent) elimination rules for $\Sigma_{1\leq i\leq n}$, $1$-, $\Sigma$- and $\Id$-types, meaning that the type we are eliminating into does not depend on the type being eliminated from. For an alternative to the CBV-translation, we would expect the CBV-translation to factorise as a translation into a dependently typed equivalent of Moggi's' monadic metalanguage, followed by a translation from this monadic language into dCBPV. It is, in fact, the former that is ill-defined if we do not have a principle of Kleisli extensions in our monadic language (or, correspondingly, in dCBPV). What we can define  is a translation from a dependently typed monadic language (without dependent Kleisli extensions) into dCBPV-. In this case, we can, in fact, use the strong (dependent) elimination rules for all positive connectives. Perhaps this is a (partial) explanation of why all (CBV) dependently typed languages with effects have encapsulated the effects in a monad. The exceptions are non-termination and recursion. As we shall see in section \ref{sec:depcbpvklei}, dependent Kleisli extensions are entirely unproblematic in that case, which means we can treat these examples as first class effects in a dependently typed language and we do not have to encapsulate them in a modality.

By analogy with the simply typed scenario, it seems very likely that one would be able to state soundness and completeness results for these translations, if one used the canonical equational theories for CBV- and CBN-dependent type theory. As we are not aware of any such equational theories being described in literature, we propose to \emph{define} the CBV- and CBN-equational theory on dependent type theories through their translations into CBPV.

\clearpage
\subsection{Categorical Semantics}
We have now reached the point in the story that was our initial motivation to study dependently typed CBPV: its very natural categorical semantics. Note that we have the following elegant generalization of our reformulated notion of categorical model for simple CBPV.
\begin{definition}[dCBPV- Model] By a categorical model of dCBPV-, we shall mean the following data.
\begin{itemize}
\item an indexed category $\Bcat^{op}\ra{\Ccat}\Cat$ of \emph{values} with full and faithful democratic comprehension;
\item an indexed category $\Bcat^{op}\ra{\Dcat}\Cat$ of \emph{stacks};\\
\item $0,+$-types in $\Ccat$ such that, additionally, the following induced maps are bijections:
$$\Dcat(C.\Sigma_{1\leq i\leq n} C_i )(\ct{D},\ct{D'})\ra{}\Pi_{1\leq i \leq n}\Dcat(C.C_i)(\ct{D}\{\proj{C}{\langle i,\id_{C_i}\rangle }\},\ct{D'}\{\proj{C}{\langle i,\id_{C_i}\rangle }\});$$
\item an indexed adjunction \mbox{
\begin{diagram}
\Dcat & \pile{\lTo^F\\\bot\\\rTo_U} &\Ccat;
\end{diagram}}
\item $\Pi$-types in $\Dcat$ in the sense of having right adjoint functors $-\{\proj{A}{B}\}\dashv \Pi_B:\Dcat(A)\ra{}\Dcat(A.B)$ satisfying the right Beck-Chevalley condition;
\item Finite indexed products $(\top,\&)$ in $\Dcat$;
\item $\Sigma$-types in $\Ccat$;
\item $\Id$-types in $\Ccat$\footnote{In case we work with intensional $\Id$-types, we probably want to add the additional condition, which corresponds to complex stacks, that says that the canonical map $\Dcat(A.A'.A'.\Id_{A'})(\ct{B},\ct{B'})\ra{}\Dcat(A.A')(\ct{B}\{\langle \diag{A}{A'},\mathsf{refl}_{A'}\rangle\},\ct{B'}\{\langle \diag{A}{A'},\mathsf{refl}_{A'}\rangle\})$ is a retraction. This map is automatically an isomorphism in our case of extensional $\Id$-types.}.
\end{itemize}
\end{definition}
Again, this semantics is sound and complete and we could in fact strengthen the completeness theorem below, if we include complex stacks in the syntax, to a 1-1 theory-model correspondence as is done in \cite{levy2005adjunction}.
\begin{theorem}[dCBPV- Semantics] We have a sound interpretation of dCBPV- in a dCBPV- model:
\begin{align*}
\sem{\cdot} & = \cdot\\
\sem{\Gamma,x:A} &= \sem{\Gamma}.\sem{A}\\
\sem{\Gamma\vdash^v A}&=\Ccat(\sem{\Gamma})(1,\sem{A})\\
\sem{\Gamma\vdash^c \ct{B}} & = \Dcat(\sem{\Gamma})(F1,\sem{\ct{B}})\\
\sem{\Gamma;\ct{B}\vdash^k \ct{C}} & = \Dcat(\sem{\Gamma})(\sem{\ct{B}},\sem{\ct{C}})\\
\sem{A[V/x]} & = \sem{A}\{\langle \langle \id_{\sem{\Gamma}},\sem{V}\rangle,\id_{\sem{\Gamma'[V/x]}}\rangle\}\\
\sem{\ct{B}[V/x]} &= \sem{\ct{B}}\{\langle \langle \id_{\sem{\Gamma}},\sem{V}\rangle,\id_{\sem{\Gamma'[V/x]}}\rangle \}\\
\sem{1}&=1\\
\sem{\Sigma_{x:A} A'} &= \Sigma_{\sem{A}} \sem{A'}\\
\sem{\Sigma_{1\leq i\leq n}A_i}&=(\cdots(\sem{A_1}+\sem{A_2})+\cdots)+\sem{A_n})\\
\sem{\Id_A(V,V')} & = \Id_{\sem{A}}\{\langle\langle  \id_{\sem{\Gamma}} , \sem{V}\rangle ,\sem{V'}\rangle \}\\
\sem{\Pi_{1\leq i\leq n}\ct{B}_i}&=(\cdots(\sem{\ct{B}_1}\&\sem{\ct{B}_2})\&\cdots)\&\sem{\ct{B}_n})\\
\sem{\Pi_{x:A} \ct{B}} & = \Pi_{\sem{A}}\sem{\ct{B}}\\
\sem{FA}&=F\sem{A}\\
\sem{U\ct{B}} &= U\sem{\ct{B}},
\end{align*}
where we also give the interpretation of the stack judgement (see section \ref{sec:depop}), together with the obvious interpretation of terms. The interpretation in such categories is complete in the sense that an equality of values or computations holds in all interpretations iff it is provable in the syntax of dCBPV-. In fact, if we add the obvious admissible weakening and exchange rules to dCBPV-, the interpretation defines an onto correspondence between categorical models and syntactic theories in dCBPV- which satisfy mutual soundness and completeness results. This correspondence becomes 1-1 and we obtain completeness for the stack judgement if we include complex stacks.
\end{theorem}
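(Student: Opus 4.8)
The plan is to follow the standard recipe for soundness and completeness of a categorical semantics, using the proof of its simply typed analogue as a template and grafting on the dependently typed machinery recalled in the intermezzo on pure DTT. \textbf{Soundness.} First I would define $\sem{-}$ by simultaneous recursion on derivations of the judgements $\vdash\Gamma;\Delta\ctxt$, $\Gamma\vdash A\vtype$, $\Gamma\vdash\ct{B}\ctype$, $\Gamma\vdash^v V:A$, $\Gamma\vdash^c M:\ct{B}$ and $\Gamma;\Delta\vdash^k K:\ct{B}$, by the clauses displayed in the statement. The crux is a \emph{substitution lemma}: for $\Gamma\vdash^v V:A$ and any judgement in context $\Gamma,x:A,\Gamma'$, substituting $V$ for $x$ syntactically is modelled by reindexing along $\langle\langle\id_{\sem{\Gamma}},\sem{V}\rangle,\id\rangle$; this is proved by induction over the judgement, and it is precisely here that splitness of $\Ccat$ and $\Dcat$ and the right Beck--Chevalley conditions for $\Pi$-types, for the coproducts and for the adjunction $F\dashv U$ make the inductive step hold on the nose, with no coherence isomorphisms. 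With this in hand each typing rule is checked to yield a well-defined element of the claimed homset; the only non-routine case is $\toin{M}{x}{N}$, where the side condition that $x$ is not free in $\ct{B}$ lets us reindex $\sem{\ct{B}}$ along the projection $\sem{\Gamma}.\sem{A}\to\sem{\Gamma}$ and then apply the indexed adjunction $F\dashv U$ exactly as in the simply typed proof -- this is the semantic trace of the absence of a dependent Kleisli extension. Finally each equation of figures \ref{fig:vceqs} and \ref{fig:vcdepeqs}, together with the congruence and conversion rules, is verified: the $\beta$-laws reduce to the triangle identities and to the universal properties of the representing objects of the comprehension, and the $\eta$-laws to their converses, just as for pure DTT.

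\textbf{Completeness.} Next I would construct the \emph{syntactic (classifying) model}. Let $\Bcat$ have value contexts as objects and substitutions (tuples of values) as morphisms, both modulo judgemental equality, with the empty context as terminal object. Put $\mathrm{ob}\,\Ccat(\Gamma)=\{A\mid\Gamma\vdash A\vtype\}/{=}$ with $\Ccat(\Gamma)(A,A')=\{V\mid\Gamma,x:A\vdash^v V:A'\}/{=}$, reindexing by substitution; comprehension is context extension $\Gamma.A:=\Gamma,x:A$, and democracy, fullness and faithfulness follow from the context-formation, variable and $\mathsf{let}$ rules. Take $\Dcat(\Gamma)$ to have the computation types as objects and stacks $\Gamma;\ct{B}\vdash^k K:\ct{C}$, modulo equality, as morphisms; $F$, $U$, $\Sigma$, $\Id$, $\Pi$, $(\top,\&)$ are read off from the respective term- and type-formers, and each clause of the definition of a dCBPV- model becomes a derivable rule or a provable equation (the coproduct-stability and $\Sigma_{1\leq i\leq n}$-stability bijections are exactly the strong elimination rules). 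Now apply soundness to this model: if $\Gamma\vdash^v V=V':A$ (or the analogous computation equation) holds in \emph{every} dCBPV- model, then in particular the classes of $V$ and $V'$ coincide in the syntactic model, which by construction says the equation is provable -- completeness.

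\textbf{The onto theory--model correspondence.} For the final sentence I would, conversely, assign to any dCBPV- model $\mathcal{M}$ its \emph{internal language} $\mathrm{Th}(\mathcal{M})$: a dCBPV- theory whose value and computation types are the objects of the fibres of $\Ccat$ and $\Dcat$, whose values and computations are the corresponding elements, and whose axioms are exactly the equations valid in $\mathcal{M}$. One checks that the syntactic model of $\mathrm{Th}(\mathcal{M})$ is equivalent, as a dCBPV- model, to $\mathcal{M}$, so every model arises (up to equivalence) from a theory; with the first two parts this yields the advertised mutually-inverse passages between theories and models satisfying soundness and completeness. To upgrade ``onto'' to a genuine $1$--$1$ correspondence and to get completeness for the stack judgement, one first conservatively extends the syntax with \emph{complex stacks} (e.g. pattern-matching into stacks) and their $\beta\eta$-theory, so that $\Dcat(\Gamma)(\ct{B},\ct{C})$ becomes a full category of stacks and the argument above applies verbatim; I would follow \cite{levy2005adjunction} for this bookkeeping.

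\textbf{Main obstacle.} The genuinely delicate point is coherence: organising the interpretation so that syntactic substitution is modelled \emph{strictly} by reindexing, which forces one to work with split indexed categories and to verify the Beck--Chevalley conditions for every type-former, and to push the type/term mutual recursion (types mentioning values) through the substitution lemma without circularity. Everything else is a long but essentially mechanical rule-by-rule and equation-by-equation verification, directly paralleling the simply typed case and the semantics of pure DTT.
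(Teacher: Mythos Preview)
Your approach is essentially the same as the paper's: soundness by induction on derivations (with a substitution lemma handling coherence via splitness and Beck--Chevalley), completeness via a syntactic/classifying model, and the theory--model correspondence via internal languages, following \cite{levy2005adjunction} for the complex-stack bookkeeping. The paper points to its earlier linear dependent type theory proof \cite{vakar2014syntax} as the template rather than the simply typed one, but the content is the same.

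One point of divergence worth flagging: you introduce complex stacks only at the very end, to upgrade ``onto'' to ``1--1'' and to get completeness for stacks. The paper instead \emph{first} conservatively extends the calculus with complex stacks and only then builds the syntactic category, even for the basic completeness result for values and computations. The reason is that your proposed $\Dcat(\Gamma)$ with \emph{simple} stacks as morphisms will not verify all the dCBPV- model axioms: simple stacks do not give you the universal properties you need (e.g.\ that $\Pi_{1\leq i\leq n}$ is a genuine product in $\Dcat(\Gamma)$, or the bijection for $\Sigma_{1\leq i\leq n}$ on the $\Dcat$ side), nor the pattern-matching eliminators into stacks. Since the complex-stack extension is conservative over values and computations, this does not affect \emph{which} value/computation equations are provable, so your completeness argument still goes through once you make this adjustment --- but as written your syntactic $\Dcat$ is not yet a dCBPV- model.
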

\begin{proof}[Proof (sketch)] The proof goes almost entirely along the lines of the soundness and completeness proofs for linear dependent type theory in \cite{vakar2014syntax}. Nothing surprising happens in the soundness proof. For the completeness result, we build a syntactic category. The thing to note here is that we first conservatively extend our syntax with complex stacks before doing so, as is done in \cite{levy2005adjunction}. \end{proof}
This leads to an induced notion of model for CBN-dependent type theory.
\begin{theorem}[Dependent CBN-Semantics {1}] The (semantic equivalent of the) CBN-translation of DTT with $\Sigma_{1\leq i\leq n}$-, $1$-, $\Sigma$-, $\Id$-, $\Pi_{1\leq i\leq n}$-, $\Pi$-types, where we use the weak (non-dependent) elimination rules for all positive connectives, into dCBPV-, lets us construct a categorical model of CBN-dependent type theory with the connectives above out of any model of dCBPV- by taking the coKleisli (indexed) category for $!:=FU$. The interpretation of CBN-dependent type theory is sound and complete for the equational theory induced from dCBPV-:
\begin{align*}
\sem{\ct{B_1},\cdots,\ct{B_n}\vdash \ct{B}}&=\Dcat(U\sem{\ct{B_1}}.\cdots.U\sem{\ct{B_n}})(F1,\sem{\ct{B}})\cong\Dcat_{!}(U\sem{\ct{B_1}}.\cdots.U\sem{\ct{B_n}})(\top,\sem{\ct{B}}).
\end{align*}
\end{theorem}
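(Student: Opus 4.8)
The plan is to build the CBN-model directly as the coKleisli indexed category of the comonad $!=FU$ on stacks, and then to check that the interpretation it induces agrees with the dCBPV- interpretation of the CBN-translation of figure~\ref{fig:depcbntrans}, so that soundness and completeness transfer along the translation. Concretely, starting from a dCBPV- model with indexed categories $\Bcat^{op}\ra{\Ccat}\Cat$, $\Bcat^{op}\ra{\Dcat}\Cat$ and indexed adjunction $F\dashv U$, I would define $\Bcat^{op}\ra{\Dcat_!}\Cat$ by letting $\Dcat_!(A)$ be the coKleisli category of $!_A=F_AU_A$ on $\Dcat(A)$, with change of base inherited from $\Dcat$; this is well defined because $F$, $U$ and hence $!$ commute with change of base, the adjunction being an \emph{indexed} one. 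The comprehension then comes for free: the adjunction gives natural bijections $\Dcat_!(A)(\ct{B},\ct{B'})=\Dcat(A)(FU\ct{B},\ct{B'})\cong\Ccat(A)(U\ct{B},U\ct{B'})$, so $U$ upgrades to a fully faithful indexed functor $\Dcat_!\ra{}\Ccat$, along which I would transport the full and faithful democratic comprehension of $\Ccat$, making the comprehension of $\ct{B}\in\Dcat_!(A)$ equal to $A.U\ct{B}\ra{\proj{A}{U\ct{B}}}A$; democracy survives because CBN-contexts, as always, are formed by iteratively adjoining types.

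For the type formers I would take the interpretations forced by figure~\ref{fig:depcbntrans}: $\sem{\Pi_{x:\ct{B}}\ct{B'}}=\Pi_{U\ct{B}}\sem{\ct{B'}}$ and $\sem{\Pi_{1\leq i\leq n}\ct{B}_i}=\sem{\ct{B}_1}\&\cdots\&\sem{\ct{B}_n}$ from the $\Pi$-types and finite indexed products of $\Dcat$, and $1$, $\Sigma_{x:\ct{B}}\ct{B'}$, $\Id_{\ct{B}}$, $\Sigma_{1\leq i\leq n}\ct{B}_i$ as $F$ applied to the value-side objects $1$, $\Sigma_{U\ct{B}}U\ct{B'}$, $\Id_{U\ct{B}}$, $\Sigma_{1\leq i\leq n}U\ct{B}_i$. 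That the former are genuine $\Pi$-types and products of $\Dcat_!$ is a short computation from the coKleisli homset formula, the right Beck--Chevalley condition, and the commutation of $F$ and $U$ with change of base; that the latter model exactly the \emph{weak} elimination rules follows from the transported comprehension, the adjunction $F\dashv U$, and — for the sums — the distributivity bijections $\Dcat(C.\Sigma_i C_i)(\ct{D},\ct{D'})\ra{}\Pi_i\Dcat(C.C_i)(\ct{D}\{\cdots\},\ct{D'}\{\cdots\})$ that are part of the definition of a dCBPV- model. The hard part will be this bundle of checks: none of them is individually deep, but there are many and one must track the substitution and Beck--Chevalley coherences carefully, and it is here that one sees why only weak elimination is available — strong $\Sigma$-elimination would require the comprehension $A.UF\Sigma_{U\ct{B}}U\ct{B'}$ of the $F$-type to coincide with the iterated comprehension $A.U\ct{B}.U\ct{B'}$, which it does not (the monad $UF$ intervenes), matching the syntactic restriction already seen for the translation.

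I would then prove by induction on derivations that the direct interpretation $\sem{-}$ in $\Dcat_!$ coincides with $\sem{(-)^n}$ in the dCBPV- model — the clauses above are the inductive steps for types, and the term rows of figure~\ref{fig:depcbntrans} are checked against the dCBPV- term semantics. Given this agreement, soundness of the CBN-interpretation is immediate from soundness of dCBPV- semantics together with type-correctness of the CBN-translation, both already established (with the weak-elimination caveat). For completeness, recall that the CBN-equational theory was \emph{defined} by declaring an equation provable exactly when its $(-)^n$-translation is provable in dCBPV-; so an equation is CBN-provable iff its translation is provable in dCBPV- iff, by completeness of dCBPV- semantics, the translation holds in every dCBPV- model iff, by the agreement just proved, the equation holds in every coKleisli model $\Dcat_!$. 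Since the induced notion of CBN-model is by construction the class of such $\Dcat_!$, this gives the asserted soundness and completeness; as in the simply typed case, adjoining complex stacks would strengthen it to an onto correspondence between theories and models.
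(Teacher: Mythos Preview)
Your approach is essentially the one the paper takes: the paper does not prove this theorem in isolation but immediately follows it with the theorem ``Dependent CBN-Categories'', whose proof sketch verifies exactly the ingredients you list --- comprehension via $\Dcat_!(\Gamma')(\top,B\{f\})\cong\Ccat(\Gamma')(1,UB\{f\})\cong\Bcat/\Gamma(f,\proj{\Gamma}{UB})$, $\Pi$-types by unwinding the coKleisli homsets, and the weak $\Sigma$-, $\Sigma_{1\leq i\leq n}$- and $\Id$-types via the unit/counit maps with a triangle identity supplying the $\beta$-law. Your explicit induction showing $\sem{-}$ in $\Dcat_!$ agrees with $\sem{(-)^n}$ in the dCBPV- model, together with the observation that the CBN-equational theory is \emph{defined} as the pullback along $(-)^n$, makes the soundness/completeness claim more transparent than the paper, which leaves that step implicit.

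There is one slip: your claim that ``democracy survives'' is not right. Democracy requires that every object of $\Bcat$ arise as an iterated comprehension starting from $\cdot$, and in $\Dcat_!$ the comprehensions are of the form $\cdot.U\ct{B_1}.\cdots.U\ct{B_n}$; there is no reason every object of $\Bcat$ should be of this shape, since not every value type lies in the image of $U$. The paper is explicit about this, stating in the ``Dependent CBN-Categories'' theorem that the comprehension on $\Dcat_!$ is ``full and faithful (possibly undemocratic)''. This does not affect the soundness and completeness claims --- those only concern CBN-contexts, which by construction \emph{are} of the required form --- but you should not assert democracy of the resulting indexed category.
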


We note that this coKleisli category, our notion of a model of CBN-dependent type theory, is very close to the usual notion of a model of pure DTT.
\begin{theorem}[Dependent CBN-Categories] The coKleisli category $\Dcat_{!}$ is an indexed category with full and faithful (possibly undemocratic) comprehension with fibred finite products $\Pi_{1\leq i\leq n}$ as well as $\Pi$-types. It supports weak $\Sigma_{1\leq i\leq n}$-, $\Sigma$- and $\Id$-types (non-dependent elimination rules, no $\eta$-rules).
\end{theorem}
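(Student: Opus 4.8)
The plan is to reduce everything to one structural fact: the fibrewise comparison functor $\Phi\colon\Dcat_{!}\ra{}\Ccat$, acting on objects by $\ct{C}\mapsto U\ct{C}$ and on a coKleisli morphism $f\in\Dcat(B)(!\ct{C}',\ct{C})$ by transposition $f\mapsto\bar f\in\Ccat(B)(U\ct{C}',U\ct{C})$ under $F\dashv U$, is a \emph{fully faithful indexed functor} over $\Bcat$. First I would check that $\Phi$ is well defined and functorial (routine coKleisli bookkeeping), faithful (injectivity of transposition), full (the transpose of $\varepsilon_{\ct{C}}\circ Fh$ is $h$, by the triangle identities), and that it commutes with change of base because $F\dashv U$ is an indexed adjunction; I would also record that $U$, a right adjoint in each fibre, preserves the stable terminal objects, so $U\top_B=1$ and $\Dcat_{!}(B)(\top_B,\ct{C})=\Dcat(B)(F1,\ct{C})$, which is exactly the set of $\ct{C}$-valued terms in the CBN-semantics theorem.

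Next I would obtain the comprehension by transporting that of $\Ccat$ along $\Phi$. The base $\Bcat$ has a terminal object; each $\Dcat_{!}(B)$ has the stable terminal object $\top_B$; and the representability clause holds because the presheaf $(B'\ra{f}B)\mapsto\Dcat_{!}(B')(\top_{B'},\ct{C}\{f\})$ is carried by $\Phi$ to $(B'\ra{f}B)\mapsto\Ccat(B')(1,(U\ct{C})\{f\})$, which the comprehension of $\Ccat$ represents by $\proj{B}{U\ct{C}}$. So I would set $B.\ct{C}:=B.U\ct{C}$ and $\proj{B}{\ct{C}}:=\proj{B}{U\ct{C}}$, and transport the universal element $\diagv{B}{\ct{C}}$ and the derived maps $\diag{B}{\ct{C}}$, $\qu{f}{\ct{C}}$ along $\Phi$. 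Full-and-faithfulness then factors as $\Dcat_{!}(B)(\ct{C}',\ct{C})\ra{\cong}\Ccat(B)(U\ct{C}',U\ct{C})\ra{\cong}\Bcat/B(\proj{B}{U\ct{C}'},\proj{B}{U\ct{C}})$, the first isomorphism being full-faithfulness of $\Phi$ and the second that of the comprehension on $\Ccat$. Democracy need not transfer, since $\Dcat_{!}(\cdot)$ is equivalent only to the full subcategory of $\Ccat(\cdot)\cong\Bcat$ on the objects of the form $U\ct{C}$.

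For the negative connectives I would note that a coKleisli category inherits the products of its base: $\Dcat_{!}(B)(\ct{C}',\ct{D}_1\&\ct{D}_2)=\Dcat(B)(!\ct{C}',\ct{D}_1\&\ct{D}_2)\cong\Dcat(B)(!\ct{C}',\ct{D}_1)\times\Dcat(B)(!\ct{C}',\ct{D}_2)$, so $(\top,\&)$, and hence the iterated $\Pi_{1\leq i\leq n}$, are fibred finite products in $\Dcat_{!}$ (stable, since $(\top,\&)$ are), with their $\eta$-rules because they are genuine $1$-categorical products. For $\Pi$-types, given $\ct{C}\in\Dcat_{!}(B)$ I would put $\Pi_{\ct{C}}\ct{D}:=\Pi_{U\ct{C}}\ct{D}$ using the $\Pi$-types of $\Dcat$ and verify $\Dcat_{!}(B.U\ct{C})(\ct{E}\{\proj{B}{U\ct{C}}\},\ct{D})=\Dcat(B.U\ct{C})((!\ct{E})\{\proj{B}{U\ct{C}}\},\ct{D})\cong\Dcat(B)(!\ct{E},\Pi_{U\ct{C}}\ct{D})=\Dcat_{!}(B)(\ct{E},\Pi_{U\ct{C}}\ct{D})$, using that $!$ commutes with change of base together with the $\Pi$-adjunction in $\Dcat$; Beck--Chevalley and the $\eta$-rule are inherited from $\Dcat$.

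Finally I would read the weak positive connectives off the CBN-translation, setting $\Sigma^{!}_{1\leq i\leq n}\ct{C}_i:=F(\Sigma_{1\leq i\leq n}U\ct{C}_i)$, $\Sigma^{!}_{\ct{C}}\ct{D}:=F(\Sigma_{U\ct{C}}U\ct{D})$, and, for $s,s'\in\Dcat_{!}(B)(\top_B,\ct{C})=\Ccat(B)(1,U\ct{C})$, $\Id^{!}_{\ct{C}}(s,s'):=F(\Id_{U\ct{C}}(s,s'))$, using the $0,+$-, $\Sigma$- and $\Id$-types of $\Ccat$. The introduction rules use $\return$ and the $\Ccat$-constructors; the weak (non-dependent) elimination rules are exactly the translation's pattern $\toin{M^n}{z}{(\sipm{z}{i}{x}{N_i^n})}$, which typechecks precisely because the target computation type does not mention $z$, so one may invoke the coproduct universal property of $\Ccat$ --- more precisely the bijection $\Dcat(C.\Sigma_{1\leq i\leq n}C_i)(\ct{D},\ct{D}')\cong\Pi_{1\leq i\leq n}\Dcat(C.C_i)(\ct{D}\{\cdots\},\ct{D}'\{\cdots\})$ built into a dCBPV- model (and its $\Sigma$- and $\Id$-analogues) --- and then Kleisli-extend along $\toin{-}{z}{-}$ from $F\dashv U$; the $\beta$-rules then reduce to those of the $\Ccat$-side connectives together with $\toin{(\return W)}{z}{-}=-[W/z]$. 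No $\eta$-rule holds in general, since the coproduct/$\Sigma$/$\Id$ $\eta$-rule would force the target computation to factor through $\return$. The main obstacle, I expect, will be the coherence bookkeeping in the comprehension step --- checking that $\diagv{B}{\ct{C}}$, $\diag{B}{\ct{C}}$, $\qu{f}{\ct{C}}$ and the full-faithfulness comparison of $\Dcat_{!}$ are genuinely the $\Phi$-transports of those of $\Ccat$ --- while everything else is a short computation once $\Phi$ is in hand. Much of this, moreover, could instead be deduced for free from the soundness-and-completeness half of the previous theorem, the structure above being exactly what the CBN-translation interprets.
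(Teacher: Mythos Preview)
Your proposal is correct and follows essentially the same route as the paper. Your comparison functor $\Phi$ is implicit in the paper's calculations: the comprehension is obtained exactly by the chain $\Dcat_{!}(\Gamma')(\top,\ct{B}\{f\})\cong\Dcat(\Gamma')(F1,\ct{B}\{f\})\cong\Ccat(\Gamma')(1,U\ct{B}\{f\})\cong\Bcat/\Gamma(f,\proj{\Gamma}{U\ct{B}})$, and the $\Pi$-type verification is the same equation $\Dcat_{!}(\Gamma.UA)(\ct{B}\{\proj{\Gamma}{UA}\},\ct{C})=\Dcat(\Gamma.UA)((FU\ct{B})\{\proj{\Gamma}{UA}\},\ct{C})\cong\Dcat(\Gamma)(FU\ct{B},\Pi_{UA}\ct{C})$ you wrote down.

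The one place where the paper is a little cleaner is the weak positive connectives. Rather than appealing to the syntactic $\mathsf{to}$-$\mathsf{in}$ pattern of the CBN-translation, the paper records the weak-$\Sigma$ property as a semantic retraction: composing the adjunction isomorphisms gives
\[
\Dcat_{!}(\Gamma.U\ct{A})(\ct{B},\ct{C}\{\proj{\Gamma}{U\ct{A}}\})\cong\Dcat(\Gamma)(F\Sigma_{U\ct{A}}U\ct{B},\ct{C})\;\leftrightarrows\;\Dcat(\Gamma)(FUF\Sigma_{U\ct{A}}U\ct{B},\ct{C})=\Dcat_{!}(\Gamma)(F\Sigma_{U\ct{A}}U\ct{B},\ct{C}),
\]
where the non-invertible step is induced by the unit and counit of $F\dashv U$, and the $\beta$-law is literally one triangle identity. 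The same shape handles $\Sigma_{1\leq i\leq n}$ and $\Id$. This is equivalent to your description but makes transparent why exactly $\beta$ holds and $\eta$ need not; your parenthetical about ``its $\Sigma$- and $\Id$-analogues'' of the $\Dcat$-coproduct bijection is justified here by the strong definitions $\Gamma.\Sigma_AB=\Gamma.A.B$ and $\Gamma.A.A.\Id_A=\Gamma.A$, so no extra axiom is being smuggled in.
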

\begin{proof}[Proof (sketch)]
$\Dcat_!$ satisfies the comprehension axiom in the sense that we have homset isomorphism
\begin{align*}
\Dcat_!(\Gamma')(\top,B\{f\})&=\Dcat(\Gamma')(FU\top,B\{f\})\\
&\cong\Dcat(\Gamma')(F1,B\{f\})\\
&\cong \Ccat(\Gamma')(1,U(B\{f\}))\\
&= \Ccat(\Gamma')(1,U(B)\{f\})\\
&\cong \Bcat/\Gamma(f,\proj{\Gamma}{UB}).
\end{align*}
As this is a special case of the comprehension for $\Ccat$, we know it to be full and faithful.

We know from the simply typed case that fibre-wise products in $\Dcat$ give rise to products in $\Dcat_!$. These are stable under change of base by assumption.

Note that $\Pi$-types directly follow as a special case of $\Pi$-types in $\Dcat$:
\begin{align*}
\Dcat_!(\Gamma.UA)(B\{\proj{\Gamma}{UA}\},C)&=\Dcat(\Gamma.UA)(FU(B\{\proj{\Gamma}{UA}\}),C)\\
&=\Dcat(\Gamma.UA)((FUB)\{\proj{\Gamma}{UA}\},C)\\
&\cong\Dcat(\Gamma)(FUB,\Pi_{UA}C)\\
&=\Dcat_!(\Gamma)(B,\Pi_{UA}C).
\end{align*}
For $\Sigma$-types, we note that we have maps back and forth, given by the unit and counit of the adjunction between $F$ and $U$ which satisfy a $\beta$-law given by one of the triangle identities for the adjunction:
\begin{align*}
\Dcat_!(\Gamma.UA)(B,C\{\proj{\Gamma}{UA}\})&=\Dcat(\Gamma.UA)(FU(B),C\{\proj{\Gamma}{UA}\})\\
&\cong\Ccat(\Gamma.UA)(UB,U(C\{\proj{\Gamma}{UA}\}))\\
&=\Ccat(\Gamma.UA)(UB,(UC)\{\proj{\Gamma}{UA}\})\\
&\cong\Ccat(\Gamma)(\Sigma_{UA}UB,UC)\\
&\cong\Dcat(\Gamma)(F\Sigma_{UA}UB,C)\\
&\leftrightarrows\Dcat(\Gamma)(FUF\Sigma_{UA}UB,C)\\
&=\Dcat_!(\Gamma)(F\Sigma_{UA}UB,C).
\end{align*}
The same argument gives us the corresponding statement for $\Sigma_{1\leq i\leq n}$- and $\Id$-types.
\end{proof}
We note that although we started with extensional $\Id$-types in dCBPV-, we have obtained possibly intensional $\Id$-types in dependent CBN.

We postpone the categorical discussion of models for dependently typed CBV until we add dependent Kleisli extensions to dCBPV- in section \ref{sec:depcbpvklei}. For now, we would just like to point out that $\Ccat$ equipped with the indexed monad $T:=UF$ defines what should be regarded as a model of a dependently typed equivalent of Moggi's monadic metalanguage, without dependent context extensions.

\begin{theorem}[Dependent monadic metalanguage models]
Given a model $\Ccat\leftrightarrows \Dcat$ of dCBPV-, $T:=UF$ defines an indexed monad on $\Ccat$, which has a generalized notion of strength $\Sigma_ATB\ra{s_{A,B}}T\Sigma_AB$.
\end{theorem}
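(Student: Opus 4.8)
The plan is to read off the monad structure from the adjunction and to construct the strength as the transpose of a pairing map. The monad is the standard one: $T=UF$, with unit $\eta\colon\id_{\Ccat}\Rightarrow UF$ the unit of $F\dashv U$ and multiplication $\mu:=U\epsilon F\colon UFUF\Rightarrow UF$ built from the counit $\epsilon\colon FU\Rightarrow\id_{\Dcat}$; the two monad laws are precisely the triangle identities for $F\dashv U$. Since the adjunction is an \emph{indexed} adjunction, $F$ and $U$ commute with change of base and $\eta,\epsilon$ are compatible with it, so $\eta$ and $\mu$ are indexed natural transformations and $T$ is an indexed monad. In particular $T$ commutes with change of base, which (working strictly) I will use in the form $(TX)\{f\}=T(X\{f\})$.

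Now fix $\Gamma\in\Bcat$, a value type $A\in\Ccat(\Gamma)$ and $B\in\Ccat(\Gamma.A)$. The (strong) introduction and elimination rules for $\Sigma$-types in $\Ccat$, together with full and faithful comprehension, exhibit $\Sigma_A$ as a left adjoint to the weakening functor $-\{\proj{\Gamma}{A}\}\colon\Ccat(\Gamma)\ra{}\Ccat(\Gamma.A)$; write $\eta^{\Sigma}_B\colon B\ra{}(\Sigma_AB)\{\proj{\Gamma}{A}\}$ for its unit, i.e. the generic pairing map. Applying $T$ fibrewise in $\Ccat(\Gamma.A)$ and using that $T$ commutes with change of base gives
\[
TB \;\ra{T\eta^{\Sigma}_B}\; T\big((\Sigma_AB)\{\proj{\Gamma}{A}\}\big) \;=\; (T\Sigma_AB)\{\proj{\Gamma}{A}\}
\]
in $\Ccat(\Gamma.A)$, and transposing this across $\Sigma_A\dashv-\{\proj{\Gamma}{A}\}$ yields the desired $s_{A,B}\colon\Sigma_ATB\ra{}T\Sigma_AB$ in $\Ccat(\Gamma)$. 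Intuitively $s_{A,B}$ takes a dependent pair $\langle a,t\rangle$ with $t$ a computation returning some $b\colon B[a/x]$, runs $t$, and re-pairs the result with $a$.

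It then remains to check that $s$ really behaves like a strength. Naturality in $B$ and compatibility with change of base in $\Gamma$ follow from naturality of $\eta^{\Sigma}$, naturality of the adjunction transpose, and indexedness of $T$. The coherence laws — that $\Sigma_A\eta^{T}_B$ followed by $s_{A,B}$ equals $\eta^{T}_{\Sigma_AB}$, that $s$ is compatible with $\mu^{T}$, and that $s$ respects the associativity isomorphism for iterated $\Sigma$-types and the triviality of $\Sigma_1$ — all reduce, after transposing back across the two adjunctions $F\dashv U$ and $\Sigma_A\dashv-\{\proj{\Gamma}{A}\}$, to the triangle identities together with the compatibility of $F\dashv U$ and of $\Sigma$-types with substitution. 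This is exactly the dependently typed avatar of the observation made earlier in the paper that the monad strength "follows straightforwardly from the natural demand that its adjunction is compatible with substitution", and it specializes to Levy's and Moggi's simply typed constructions when $B$ does not depend on $A$.

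The main obstacle is not any one step but the bookkeeping in the associativity law: it lives over the iterated comprehension $\Gamma.A.B$ and forces one to juggle several instances of substitution-compatibility (for the weakenings along $A$, along $B$, and along $\Sigma_AB$) simultaneously with the coherence of the re-association isomorphism $\Sigma_{\Sigma_AB}(-)\cong\Sigma_A\Sigma_B(-)$. Everything else is a mechanical transposition argument, and I would organise the write-up so that all the coherence diagrams are verified by pasting together transposes rather than by chasing elements.
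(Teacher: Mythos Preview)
Your construction of the strength is exactly the paper's: take the transpose of $\id_{\Sigma_AB}$ under $\Sigma_A\dashv-\{\proj{\Gamma}{A}\}$ (which is your $\eta^{\Sigma}_B$), apply $T$ fibrewise, use that $T$ commutes with change of base, and transpose back. The paper records only this construction and does not spell out the coherence laws for $s$, so your additional discussion of the unit, multiplication and associativity laws goes beyond what the paper actually proves for this statement, but it is correct and the reductions you describe are the right ones.
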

\begin{proof}[Proof (sketch)] We note that, starting from $\id_{\Sigma_AB}$, we can obtain a generalised notion of strength for $T$:
\begin{align*}
\Ccat(\Gamma)(\Sigma_A B ,\Sigma_A B) &\cong \Ccat(\Gamma)(B,(\Sigma_AB)\{\mathbf{p}_{\Gamma,A}\})\\
&\ra{T} \Ccat(\Gamma)(TB,T(\Sigma_A B)\{\mathbf{p}_{\Gamma,A}\})\\
&= \Ccat(\Gamma)(TB,(T\Sigma_A B)\{\mathbf{p}_{\Gamma,A}\})\\
&\cong\Ccat(\Gamma)(\Sigma_ATB,T\Sigma_AB).
\end{align*}
In particular (for the case where $\Gamma=\cdot$, using full and faithful comprehension), we get $\Gamma.TA\ra{s_{A,B}}T(\Gamma.A)\in \Bcat$.\end{proof}
\begin{remark}Note that we cannot in general define a costrength $\Sigma_{TA} B\ra{}T\Sigma_A B\{\proj{\Gamma}{\eta_A}\}$ or, therefore, a pairing $\Sigma_{TA}TB\ra{} T\Sigma_A B\{\proj{\Gamma}{\eta_A}\}$. This asymmetry does not occur in the simply typed setting. It will be mended by the addition of Kleisli extensions for dependent functions.\end{remark}

In the simply typed setting, one can factor the CBV-translation from the $\lambda$-calculus into CBPV through the monadic metalanguage. What we see happening in the setting with dependent types is that while the translation from the dependently typed monadic metalanguage with dependent Kleisli extensions in dCBPV- works fine, we cannot define the obvious CBV-translation from dependent type theory into the dependently typed monadic metalanguage, unless we have dependent Kleisli extensions.

\clearpage
\subsection{Some Basic Models}
We can first note that any model of pure dependent type theory is, by using the identity adjunction, in particular, a model of dependently typed CBPV (with or without dependent Kleisli extensions), which shows consistency of the calculus.
\begin{theorem}
Dependently typed CBPV with (dCBPV+) and without (dCBPV-) dependent Kleisli extensions is consistent.
\end{theorem}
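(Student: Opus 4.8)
The plan is to exhibit a concrete non-degenerate model and invoke soundness. Take $\Bcat = \Set$ and let $\Ccat$ be the standard families-of-sets indexed category: $\Ccat(B)$ is the category of $B$-indexed families of sets with fibrewise functions, and reindexing along $f : B' \to B$ is precomposition. By the Pure DTT Semantics theorem this is an indexed category with full and faithful democratic comprehension supporting $0$-, $+$-, $\Sigma$-, $\Pi$- and extensional $\Id$-types, together with fibrewise finite products, and it validates the strong elimination rules, i.e. the canonical maps $\Ccat(C.\Sigma_i C_i)(C',C'') \to \prod_i \Ccat(C.C_i)(C'\{-\},C''\{-\})$ are bijections.

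Next I would set $\Dcat := \Ccat$ and take $F = U = \id$, the identity indexed adjunction, and check the dCBPV- model axioms. The adjunction is an indexed adjunction whose change-of-base functors are (trivially) bijective on objects; the $\Pi$-types and finite indexed products $(\top,\&)$ demanded of $\Dcat$ are exactly the $\Pi$-types and fibrewise finite products of $\Ccat$; the $\Sigma$- and $\Id$-types demanded of $\Ccat$ are present by the previous paragraph; and the extra bijectivity condition on $\Dcat(C.\Sigma_i C_i)(\ct D,\ct{D'}) \to \prod_i \Dcat(C.C_i)(\ct D\{-\},\ct{D'}\{-\})$ reduces, since $\Dcat = \Ccat$, to precisely the strong $0,+$-elimination bijection already available. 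Because $T = UF = \id$ is the trivial monad, there is nothing to extend, so the dependent Kleisli extension principle added in dCBPV+ (Section~\ref{sec:depcbpvklei}) is validated for free; hence the very same data is also a dCBPV+ model.

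Consistency then follows from the soundness halves of the dCBPV- (and dCBPV+) Semantics theorems. For instance, $\sem{\cdot\vdash^v 0} = \Ccat(\cdot)(1,\emptyset) = \emptyset$, so the empty value type has no closed value; and $\sem{0} \neq \sem{1}$ as objects of $\Ccat(\cdot)$, so the judgement $\vdash 0 = 1$ is not derivable and the equational theory does not collapse. This establishes consistency in both the logical and the equational sense, for dCBPV- and, via the identity adjunction's trivial monad, for dCBPV+.

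I do not expect a serious obstacle: the content is bookkeeping, namely confirming that the families model satisfies the right Beck--Chevalley condition for $\Pi$-types, the coproduct bijectivity condition, and admits extensional $\Id$-types (pointwise equality via the diagonal) — all standard. The one mildly awkward point is that dCBPV+ has not yet been introduced here, so I would phrase that half of the statement as "the rules of Section~\ref{sec:depcbpvklei} are validated by the identity adjunction because its induced monad is the identity", deferring the literal check to that section.
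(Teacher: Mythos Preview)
Your proposal is correct and takes essentially the same approach as the paper: exhibit a model of pure dependent type theory (you choose families of sets, the paper says ``any model of pure dependent type theory'') and equip it with the identity adjunction $F = U = \id$, so that the dCBPV- axioms reduce to the pure DTT axioms and the dependent Kleisli extensions of dCBPV+ are trivially satisfied because $T = \id$. Your write-up is considerably more detailed than the paper's one-line argument, but the underlying idea is identical.
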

More interestingly, any model of intuitionistic linear dependent type theory supporting the appropriate connectives \cite{vakar2014syntax,vakar2015syntax} gives rise to a model of dependently typed CBPV without dependent Kleisli extensions, modelling commutative effects.
\begin{theorem}The notion of a model given by \cite{vakar2014syntax} for the dependently typed linear-non-linear logic of \cite{krishnaswami2015integrating} with the additional connectives of finite additive disjunctions is precisely a dCBPV- model such that we have symmetric monoidal closed structures $(I,\otimes,\multimap)$ on the fibres of $\Dcat$, stable under change of base, ($\Dcat$ is an indexed symmetric monoidal closed category) s.t. $F$ consists of strong monoidal functors (sending nullary and binary products in $\Ccat$ to $I$ and $\otimes$ in $\Dcat$) and which supports $\Sigma_{F-}^\otimes$-types (see section \ref{sec:deec}).
\end{theorem}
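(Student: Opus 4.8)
The plan is to treat this as a dictionary result: both notions of model are, at bottom, lists of categorical structure, and the proof consists of unfolding the two definitions and matching them clause by clause. First I would recall explicitly, from \cite{vakar2014syntax,krishnaswami2015integrating}, what a model of the dependently typed linear-non-linear logic with finite additive disjunctions consists of: an indexed category $\Bcat^{op}\ra{\Ccat}\Cat$ of cartesian types with full and faithful democratic comprehension supporting $0$-, $+$-, $\Sigma$- and $\Id$-types; an indexed symmetric monoidal closed category $\Bcat^{op}\ra{\Dcat}\Cat$ of linear types, i.e.\ with structure $(I,\otimes,\multimap)$ on each fibre, stable under change of base, equipped with fibred finite additive products $(\top,\&)$, linear $\Pi$-types satisfying the right Beck-Chevalley condition, and $\Sigma_{F-}^\otimes$-types; and an indexed adjunction $F\dashv U$ from $\Ccat$ to $\Dcat$ whose left adjoint $F$ is strong monoidal, sending the nullary and binary fibre products of $\Ccat$ to $I$ and $\otimes$. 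Placed alongside the definition of a dCBPV- model, this is visibly the same list, except that the fibrewise symmetric monoidal closed structure, the strong monoidality of $F$, and the $\Sigma_{F-}^\otimes$-types are precisely the three items singled out as the extra hypotheses of the theorem, and, conversely, every clause of the dCBPV- definition recurs among the linear-non-linear data.

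For the direction from linear-non-linear models to dCBPV- models with the extra structure, I expect essentially everything to transfer by direct inspection, the one clause genuinely requiring an argument being the dCBPV- distributivity requirement: that the canonical map $\Dcat(C.\Sigma_{1\leq i\leq n}C_i)(\ct{D},\ct{D'})\longrightarrow\Pi_{1\leq i\leq n}\Dcat(C.C_i)(\ct{D}\{\proj{C}{\langle i,\id_{C_i}\rangle}\},\ct{D'}\{\proj{C}{\langle i,\id_{C_i}\rangle}\})$ is a bijection. I would derive this from the fact that $F$, as a left adjoint, preserves fibred coproducts, which together with the comprehension structure exhibits the maps $\proj{C}{\langle i,\id_{C_i}\rangle}\colon C.C_i\longrightarrow C.\Sigma_{1\leq i\leq n}C_i$ as a coproduct cocone in $\Bcat/C$; restricting $\Dcat$-homsets along that cocone yields the required bijection. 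In practice this is exactly the Beck-Chevalley condition for additive disjunctions already established for these models in \cite{vakar2014syntax}, so one may equally just cite it; it is moreover the semantic content of the weak elimination rule for additive disjunctions into linear types.

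For the converse I would read off a linear-non-linear model directly from a dCBPV- model with the stipulated extra structure: $\Ccat$, with its $0,+,\Sigma,\Id$-types and democratic comprehension, supplies the cartesian fragment; $\Dcat$ with its fibrewise symmetric monoidal closed structure, the $(\top,\&)$-products, the $\Pi$-types in $\Dcat$, and the $\Sigma_{F-}^\otimes$-types supplies the linear fragment; the adjunction $F\dashv U$ with $F$ strong monoidal supplies the modality; and the dCBPV- distributivity bijection supplies the elimination of additive disjunctions into linear types. The two passages are manifestly mutually inverse --- each carries every piece of structure along unchanged --- so the two notions of model coincide, which is the intended force of ``precisely''.

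The main obstacle will be bookkeeping rather than any single verification: one has to check that the set of connectives and coherence conditions imposed by the presentation of the linear-non-linear logic in \cite{krishnaswami2015integrating}, with finite additive disjunctions adjoined, matches the dCBPV- package term for term --- in particular that the dCBPV- distributivity condition and the additive-disjunction elimination rule of the linear-non-linear logic name the same categorical fact. Once this dictionary is set up, which modulo the translation of Beck-Chevalley conditions already carried out in \cite{vakar2014syntax} is routine, nothing further remains.
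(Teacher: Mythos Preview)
Your proposal is correct and, in fact, considerably more detailed than what the paper offers: the paper states this theorem without any proof or proof sketch, treating it as an observation that follows by unfolding the two definitions. Your plan to proceed by a clause-by-clause dictionary is exactly the intended argument, and you have correctly isolated the one point that is not a pure renaming, namely the dCBPV- distributivity condition on $\Dcat$ over finite coproducts, and given the right reason for it (preservation of coproducts by the left adjoint $F$ together with the Beck-Chevalley data already present in the linear-non-linear model).
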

As in the simply typed setting, models of pure DTT on which we have an indexed monad are again a source of examples of dCBPV- models.
\begin{theorem}
Let $\Bcat^{op}\ra{\Ccat}\Cat$ be a model of pure DTT (with all type formers discussed) on which we have an indexed monad $T$. Then, the fibre-wise Eilenberg-Moore adjunction $\Ccat\leftrightarrows \Ccat^T$ gives a model of dCBPV-.
\end{theorem}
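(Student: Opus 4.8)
The plan is to keep $\Bcat$ and $\Ccat$ exactly as given and to build the indexed category of stacks fibrewise by the Eilenberg--Moore construction. Concretely, set $\Dcat(B):=\Ccat(B)^{T_B}$, the category of $T_B$-algebras. Since $T$ is an \emph{indexed} monad, each reindexing functor $-\{f\}\colon\Ccat(B)\to\Ccat(B')$ strictly commutes with the monads, i.e.\ is a strict morphism of monads $(\Ccat(B),T_B)\to(\Ccat(B'),T_{B'})$, so it lifts to $-\{f\}\colon\Dcat(B)\to\Dcat(B')$, $(X,a)\mapsto(X\{f\},a\{f\})$; these lifts compose strictly, so $\Dcat$ is a (strict) indexed category over $\Bcat$. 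For the same reason the fibrewise free--forgetful adjunctions $F_B\dashv U_B$ assemble into an indexed adjunction $\Dcat\leftrightarrows\Ccat$: $F_B$ commutes with reindexing by indexedness of $T$, and $U_B$ does so trivially. By construction $UF=T$.

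Next I would note that most clauses of the definition of a dCBPV$-$ model are inherited verbatim from $\Ccat$: its (full, faithful, democratic) comprehension, its $\Sigma$- and $\Id$-types, and the part of its $0,+$-types not mentioning $\Dcat$ are exactly the data of a model of pure DTT. The finite indexed products $(\top,\&)$ in $\Dcat$ come for free as well: each fibre $\Ccat(B)$ has finite products (terminal object from $1$-types and $C\times C':=\Sigma_{C}(C'\{\proj{B}{C}\})$ from $\Sigma$-types), stable under reindexing, and since the forgetful functors $U_B$ create limits, $\Dcat(B)=\Ccat(B)^{T_B}$ inherits finite products; stability of these under change of base is immediate because the lifted reindexing functors act as $\Ccat$-reindexing on underlying objects.

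The two clauses carrying genuine content are the $\Pi$-types of $\Dcat$ and the distributivity bijection for $\Sigma_{1\leq i\leq n}$. For the latter, fix algebras $\ct{D},\ct{D'}\in\Dcat(C.\Sigma_{1\leq i\leq n}C_i)$; a $\Dcat$-morphism $\ct{D}\to\ct{D'}$ is a $\Ccat$-morphism of the underlying objects commuting with the structure maps, and the $0,+$-distributivity bijection of $\Ccat$ is natural in both arguments (hence compatible with composition) and commutes with $T$ by indexedness, so it transports the commutation condition componentwise and restricts to the required bijection onto $\Pi_{1\leq i\leq n}\Dcat(C.C_i)(\ldots)$. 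For the $\Pi$-types: from the $\Pi$-types of $\Ccat$ we have $-\{\proj{A}{C}\}\dashv\Pi_C$ on $\Ccat$, and indexedness of $T$ gives $(T_A X)\{\proj{A}{C}\}=T_{A.C}(X\{\proj{A}{C}\})$, making $-\{\proj{A}{C}\}$ a strict monad morphism. I would then invoke the standard adjoint-lifting fact that a (strict) morphism of monads which is a left adjoint lifts to a left adjoint between Eilenberg--Moore categories: $\Pi_C$ lifts to $\widetilde{\Pi_C}\colon\Dcat(A.C)\to\Dcat(A)$, sending $(D,a)$ to $\Pi_C D$ equipped with the algebra structure obtained by transposing
\[
(T_A\Pi_C D)\{\proj{A}{C}\}=T_{A.C}\big((\Pi_C D)\{\proj{A}{C}\}\big)\;\ra{T_{A.C}\varepsilon_D}\;T_{A.C}D\;\ra{a}\;D
\]
along $-\{\proj{A}{C}\}\dashv\Pi_C$, with the algebra laws and the adjunction $-\{\proj{A}{C}\}\dashv\widetilde{\Pi_C}$ following routinely from the monad axioms and the triangle identities, and the right Beck--Chevalley condition for $\widetilde{\Pi_C}$ descending from that for $\Pi_C$ since the $U_B$ are faithful and create the relevant isomorphisms.

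I expect the main obstacle to be precisely this $\Pi$-type clause, namely carrying out the adjoint-lifting argument at the level of \emph{indexed} categories: one must check that the lifted right adjoints $\widetilde{\Pi_C}$ are themselves compatible with change of base and satisfy the Beck--Chevalley condition on the nose, rather than only fibrewise. Everything else is either taken over verbatim from the pure DTT model $\Ccat$ or is a routine transport of structure through forgetful functors that create the limits involved.
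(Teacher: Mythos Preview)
Your proposal is correct and follows essentially the same approach as the paper. In particular, your construction of the algebra structure on $\Pi_C D$ by transposing $T_{A.C}\varepsilon_D;a$ along $-\{\proj{A}{C}\}\dashv\Pi_C$ is exactly the map the paper writes down explicitly (starting from $\id_{\Pi_A B}$ and chasing through the same chain of isomorphisms); you package it as an instance of the adjoint-lifting theorem while the paper does it by hand, but the underlying content is identical, and you are if anything more thorough in addressing the $0,+$-distributivity and Beck--Chevalley conditions that the paper's sketch leaves implicit.
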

\begin{proof}[Proof (sketch)] As in the simply typed setting, a product of algebras is just the product of their carriers equipped with the obvious algebra structure. Indeed, it is a basic fact from category theory that the forgetful functor from the Eilenberg-Moore category creates limits. Given an object $TB\ra{k}B$ of $\Ccat^T(\Gamma.A)$, we note that we also obtain a canonical $T$-algebra structure on $\Pi$-types of carriers (starting from the identity on $\Pi_A B$):
\begin{align*}
\Ccat(\Gamma)(\Pi_AB,\Pi_A B)&\cong \Ccat(\Gamma.A)((\Pi_AB)\{\proj{\Gamma}{A}\},B)\\
&\ra{T}  \Ccat(\Gamma.A)(T((\Pi_AB)\{\proj{\Gamma}{A}\}),TB)\\
&\cong  \Ccat(\Gamma.A)((T\Pi_AB)\{\proj{\Gamma}{A}\},TB)\\
&\ra{-;k}  \Ccat(\Gamma.A)((T\Pi_AB)\{\proj{\Gamma}{A}\},B)\\
&\cong \Ccat(\Gamma)(T\Pi_AB,\Pi_AB).
\end{align*}
We leave the verification of the $T$-algebra axioms to the reader. We define the result to be $\Pi_{A}k$. Note that it is precisely defined so that, for an algebra $TC\ra{l}C$, the isomorphism $\Ccat(\Gamma.A)(C\{\proj{\Gamma}{A}\},B)\cong\Ccat(\Gamma)(C,\Pi_A B)$ restricts to $\Ccat^T(\Gamma.A)(l\{\proj{\Gamma}{A}\},k)\cong\Ccat^T(\Gamma)(l,\Pi_k B)$.
\end{proof}

A concrete example to which we can apply the previous theorem is obtained for any monad $T$ on $\Set$. Indeed, we can lift $T$ (point-wise) to an indexed monad on the usual families of sets model $\Fam(\Set)$ of pure DTT\footnote{Recall that $\Fam(\Set)$ is defined as the restriction to $\Set\subseteq \Cat$ of the ($\Cat$-enriched) hom-functor into $\Set$:\linebreak
$\Set^{op}\subseteq \Cat^{op}\ra{\Cat(-,\Set)}\Cat$.}. In a different vein, given a model $\Ccat$ of pure DTT, the usual exception, global state, reader, writer and continuation monads (which we form using objects of $\Ccat(\cdot)$) give rise to indexed monads, hence we obtain models of dCBPV-. More exotic examples are the many indexed monads that arise from homotopy type theory, like truncation modalities or cohesion (shape and sharp) modalities \cite{hottbook,shulman2015brouwer,schreiber2014quantum}. A caveat there is that the identity types in the model are intensional and that many equations are often only assumed up to propositional rather than judgemental equality.

\clearpage
\subsection{Operational Semantics and Effects}\label{sec:depop} 
We define an operational semantics on the fragment of dCBPV- without complex values. However, we may still want to allow complex values to occur in types, as this greatly increases the expressivity of our type system. (This is in fact the reason we have chosen to discuss complex values in this paper.) In case $\ct{B}$ is a type that has complex values as a subterm, by contrast with the simply typed situation, it may no longer be true that for any $\Gamma\vdash^c M:\ct{B}$, we have an $\Gamma\vdash^c \widetilde{M}:\ct{B}$ without complex values which is judgementally equal. Therefore, we may want to add the rule of figure \ref{fig:dcbpvextrarule} to dCBPV-.
\begin{figure}[!ht]\fbox{
\parbox{\linewidth}{
\AxiomC{$\Gamma\vdash^c M= \return{V}: FA$}
\AxiomC{$\Gamma,x:A,\Gamma'\vdash^c N:\ct{B} $}
\BinaryInfC{$\Gamma,\Gamma'[V/x]\vdash^c \toin{M}{x}{N}:\ct{B}[V/x]$}
\DisplayProof
}}
\caption{\label{fig:dcbpvextrarule} An extra rule to add to dCBPV- which does not introduce any new terms, up to judgemental equality, but which makes complex values redundant in computations.}
\end{figure}
\quad\\ We note that this rule introduces no new terms to dCBPV-, at least up to judgemental equality, as $\toin{M}{x}{N}=\toin{\return{V}}{x}{N}=\lbi{x}{V}{N}$. This means that it is invisible to the categorical semantics. It allows us to type a wider range of terms for which we can define an operational semantics, however. In fact, it tells us that we can again safely exclude complex values from computations (even at types whose construction involves complex values!) without losing any expressive power.
\begin{theorem}[Redundancy of Complex Values for Computations] \label{thm:complexvaluesred} For any dCBPV- computation $\Gamma\vdash^c M:\ct{B}$, there is a computation $\Gamma\vdash^c \widetilde{M}:\ct{B}$ which does not contain complex values as subformulae, such that $\Gamma\vdash^c M=\widetilde{M}:\ct{B}$. Moreover, both the CBV and CBN translations only produce complex-value-free computations.
\end{theorem}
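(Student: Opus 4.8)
The plan is to adapt Levy's simply typed argument (Section~\ref{sec:simplop}, \cite{levy2006call}) to the dependent setting. Concretely, I would take over his syntactic transformation $M\mapsto\widetilde M$ essentially verbatim, as an operation on raw terms: it eliminates every value-level $\mathsf{let}$ by the equation $\lbi{y}{W}{R}=R[W/y]$ and every value-level pattern match whose scrutinee is a constructor by the corresponding $\beta$-rule, and it \emph{floats} every remaining value-level pattern match outwards through the surrounding term formers, one level at a time, using the commuting conversions $C[\sipm{W}{\cdots}{z}{R_i}]=\sipm{W}{\cdots}{z}{C[R_i]}$ (together with the analogues for $\upm{\cdot}{\cdot}$, $\idpm{\cdot}{z}{\cdot}$ and $\mathsf{let}$), all of which are derivable from the $\eta$-rules of Figures~\ref{fig:vceqs} and~\ref{fig:vcdepeqs}. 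As soon as such a pattern match sits directly beneath a computation former it has become a computation-level pattern match and is no longer a complex value. Since this is in effect Levy's transformation, its termination (under a suitable lexicographic measure on the number of value-position $\mathsf{let}$-nodes, then of value-position $\mathsf{pm}$-nodes, then of their nesting depths) and the fact that $\widetilde M$ contains no complex value as a subformula are inherited from the simply typed case; and every rewrite step is justified by the equations of Figures~\ref{fig:vceqs} and~\ref{fig:vcdepeqs}, so $M=\widetilde M$ will be provable provided each intermediate term stays within the well-typed fragment.

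The only genuinely new point is to verify exactly that: each rewrite step must be an equation \emph{between well-typed terms of the same type in the same context}, which is the side condition under which the rules of Figures~\ref{fig:vceqs} and~\ref{fig:vcdepeqs} are imposed. This is vacuous in the simply typed case, because types are inert under the rewrites, but in the dependent setting the bound variables of a floated pattern match may occur in the types involved. Two ingredients handle this. First, floating a pattern match out of a term context $C$ is type-checked using the \emph{strong} (dependent) elimination rules for $\Sigma_{1\leq i\leq n}$-, $1$-, $\Sigma$- and $\Id$-types available in dCBPV-, with the motive obtained by re-abstracting the scrutinee; the motive may itself contain complex values, but this is harmless, since dCBPV- imposes no complex-value restriction on types. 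Second --- and this is the role of the rule of Figure~\ref{fig:dcbpvextrarule} --- whenever the floated pattern match must cross a sequencing construct, or whenever the ambient type $\ct{B}$ of $M$ itself mentions the complex value being eliminated, the configuration produced is a $\toin{M'}{x}{N}$ whose continuation $N$ has a type genuinely depending on $x$; such a configuration is not typable by the restricted $\toin{\cdot}{\cdot}{\cdot}$-rule of dCBPV-, but it \emph{is} typable by the rule of Figure~\ref{fig:dcbpvextrarule}, because the computation $M'$ being sequenced is by construction judgementally equal to a $\return$. I expect the careful bookkeeping of these well-typedness obligations --- especially for dependent $\Sigma$- and $\Id$-types, where the second component, respectively the reflexivity witness, constrains the type --- to be the main, and essentially the only, difficulty; once it is discharged, the conversion rule upgrades $\Gamma\vdash^c\widetilde M=M:\ct{B}$ from provability in the equational theory to a derivable judgemental equality at the declared type $\ct{B}$.

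Finally, the ``moreover'' is checked by inspection of Figures~\ref{fig:depcbvtrans} and~\ref{fig:depcbntrans}: every CBPV computation in the image of $(-)^v$ or $(-)^n$ is assembled from $\return$, $\toin{\cdot}{\cdot}{\cdot}$, $\force$, $\thunk$, application, projection and \emph{computation-level} pattern matches, applied to subterms that are complex-value-free by the induction hypothesis, so it contains no complex value as a subformula; a routine induction on the source term makes this precise.
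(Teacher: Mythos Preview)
Your proposal is correct and follows essentially the same approach as the paper: both reduce to Levy's simply typed argument (Proposition~14 of \cite{levy2006call}), observe that the only new issue is well-typedness of the intermediate rewrites in the dependent setting, invoke the rule of Figure~\ref{fig:dcbpvextrarule} to discharge that issue, and treat $\Id$-eliminators in the same way as the other positive eliminators. Your write-up is considerably more explicit than the paper's one-line sketch about \emph{where} Figure~\ref{fig:dcbpvextrarule} enters and about the inspection argument for the ``moreover'' clause, but the underlying strategy is the same.
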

\begin{proof}[Proof (sketch)]
The rule \ref{fig:dcbpvextrarule} allows us to apply the usual proof of Proposition 14 in \cite{levy2006call} again, where we treat eliminators for $\Id$-types exactly as we do those for e.g. $\times$-types.
\end{proof}
In the dependently typed setting, we only need a very minor modification of the notion of stack to formulate our operational semantics. For this, see figure \ref{fig:depstacks}.
\begin{figure}[!ht]
\fbox{
\parbox{\linewidth}{
\begin{tabular}{ll}
\AxiomC{$\vdash\Gamma;\ct{C}\ctxt$}
\UnaryInfC{$\Gamma;\ct{C}\vdash^k \nil : \ct{C}$}
\DisplayProof
& 
\AxiomC{$\Gamma,x:A\vdash^c M:\ct{B}$}
\AxiomC{$\Gamma;\ct{B}\vdash^k K:\ct{C}$}
\BinaryInfC{$\Gamma;FA\vdash^k \toin{[\cdot]}{x}{M}::K:\ct{C}$}
\DisplayProof
\\
&\\
\AxiomC{$\Gamma;\ct{B_j}\vdash^k K:\ct{C}$}
\UnaryInfC{$\Gamma;\Pi_{1\leq i\leq n}\ct{B_i}\vdash^k j::K:\ct{C}$}
\DisplayProof\hspace{80pt}
&
\AxiomC{$\Gamma\vdash^v V:A$}
\AxiomC{$\Gamma;\ct{B}[V/x]\vdash^k K:\ct{C}$}
\BinaryInfC{$\Gamma;\Pi_{x:A} \ct{B}\vdash^k V::K:\ct{C}$}
\DisplayProof
\end{tabular}
}
}
\caption{\label{fig:depstacks} The rules for forming dependently typed (simple) stacks.}
\end{figure}
\quad\\
We define a configuration to be a pair $M,K$ of a dCBPV- computation $\Gamma\vdash^c M:\ct{B}$ not involving any complex values and a dependently typed (simple) stack $\Gamma;\ct{B}\vdash^k K:\ct{C}$. The CK-machine that evaluates our computations is again just that of figure \ref{fig:ckmachine} where we add the extra transition and terminal configuration of figure \ref{fig:ckextra}.
\begin{figure}[!ht]
\fbox{
\parbox{\linewidth}{
\textbf{Transition}\\
\begin{tabular}{lllllll}
 $  \idpm{(\refl V)}{x}{}{M}      $ &,& $K$\hspace{40pt}& $\leadsto$ \hspace{40pt}& $  M[V/x]        $ &,& $K$ \\
\end{tabular}\\
\\
\textbf{Terminal Configuration}\\
\begin{tabular}{lll}
$\idpm{z}{x}{M}$ &,& $K$
\end{tabular}
}
}
\caption{\label{fig:ckextra} The additional transition and terminal configuration that specify the operational behaviour of identity witnesses}
\end{figure}\quad\\
As before, we can add the effects of figure \ref{fig:effects} together with their operational semantics of figures \ref{fig:opsemdivs} and \ref{fig:opsemprint} and equations of figure \ref{fig:effeqn}. We get the same determinism, strong normalization and subject reduction results as in the simply typed case.

\begin{theorem}[Determinism, Strong Normalization and Subject Reduction] Every transition respects the type of the configuration. No transition occurs precisely if we are in a terminal configuration. In absence of erratic choice, at most one transition applies to each configuration. In absence of divergence and recursion, every configuration reduces to a terminal configuration in a finite number of steps.
\end{theorem}
\begin{proof}[Proof (sketch)] The proof is in fact no different from that in the simply typed case, as all reductions are defined on untyped configurations and are easily seen to preserve types. All there is to subject reduction it is to observe that each transition preserves the type $\ct{C}$ of the configuration. We note that this result is substantially different from the subject reduction result of \cite{martin1998intuitionistic}, which is much more delicate to prove and relies heavily on the conversion rules of type theory. The reason that we avoid such delicate issues is that types only depend on values (and not computations) which never get reduced in our transitions.\end{proof}

\begin{remark}[Type Checking]
While the operational semantics discussed here is very relevant as it describes the execution of a program of dCBPV-, one could argue that a type checker is as important an operational aspect to the implementation of a dependent type theory. We leave the description of a type checking algorithm to future work. We note that the core step in the implementation of a type checker is a normalization algorithm for directed versions (from left to right) of the equations for values of figures \ref{fig:vceqs}, \ref{fig:vcdepeqs} and \ref{fig:effeqn} (with congruence laws), as this would give us a normalization procedure for types. We hope to be able to construct such an algorithm using normalization by evaluation by combining the techniques of \cite{abel2007normalization} and \cite{ahman2013normalization}. Our hope is that this will lead to a proof of decidable type checking of the system at least in absence of the $\eta$-law for $\Id$-types and without recursion. We note that the complexity of a type checking algorithm can vary widely depending on which equations we include for effects. The idea is that one only includes a basic set of program equations as judgemental equalities to be able to decide type checking and to postulate other equations as propositional equalities, which can be used for manual or tactic-assisted reasoning about effectful programs.
\end{remark}
\clearpage

\section{Dependently Typed Call-by-Push-Value with Dependent Kleisli Extensions}\label{sec:depcbpvklei}
While the system dCBPV- is very clean in its syntax, operational semantics, categorical semantics and admits plenty of concrete models, it may be a bit of a disappointment to the reader who was expecting to see a proper combination of effects and dependent types, rather than a system that keeps both features side by side without them interacting meaningfully\footnote{As we shall discuss later, this is not entirely fair on dCBPV-, as it does allow us to form types (predicates) depending on thunks of effectful computations.}. In particular, one might find it unsatisfactory that the CBV-translation from dependent type theory into dCBPV- fails and that the CBN-translation only goes through to a limited extent.

To address these issues, we shall introduce a more expressive system in this section which we call dCBPV+ and which extends dCBPV- with Kleisli extensions for dependent functions. We leave the discussion about the pros and cons of dCBPV+ compared to dCBPV- until section \ref{sec:concl} and to future work.

\subsection{Syntax}
We have seen the need to add dependent Kleisli extensions in the form of the rule shown in figure \ref{fig:depklext} if we want to obtain a dependently typed equivalent of the CBV-translation into CBPV or if we want to model dependent elimination rules for the positive connectives in the CBN-translation. We shall use the name dCBPV+ to explicitly refer to the resulting system of the rules of dCBPV- (figures \ref{fig:ctxtrules}, \ref{fig:depcbpvtypes}, \ref{fig:vcdepterms}, \ref{fig:dcbpvextrarule}, \ref{fig:vceqs} and \ref{fig:vcdepeqs}) and dependent Kleisli extensions (figure \ref{fig:depklext}).
\begin{figure}[!ht]\fbox{\parbox{\linewidth}{
\AxiomC{$\Gamma,z:UF A,\Gamma'\vdash \ct{B}\ctype$}
\AxiomC{$\Gamma\vdash^c M:FA$}
\AxiomC{$\Gamma,x:A,\Gamma'[\tr x/z]\vdash^c N:\ct{B}[\tr x/z]$}
\TrinaryInfC{$\Gamma,\Gamma'[\thunk M/z]\vdash^c \toin{M}{x}{N} : \ct{B}[\thunk M/z]$.}
\DisplayProof
}}
\caption{\label{fig:depklext} The rule for dependent Kleisli extensions in dCBPV. As before, we write $\tr$ as an abbreviation for $\thunk \return$.}
\end{figure}

We note that, in presence of this extra rule, the translations of figures \ref{fig:depcbvtrans} and \ref{fig:cbntrans} are finally well-defined. We would like to highlight the fact that a type $x_1:A_1,\ldots,x_n:A_n\vdash A\type$ gets translated into a type $z_1:UFA_1,\ldots, z_n:UFA_n\vdash A^v \vtype$ by the CBV-translation. Briefly, this is necessitated by the CBV-translation of substitution of terms in types. For example, to substitute a term $x:B\vdash M:A$ into $x:A\vdash C\type$ in the CBV-translation, we have to be able to substitute $(x:B)^v\vdash^c M^v:FA$ (or equivalently $(x:B)^v\vdash^v \thunk M^v:UFA$) into $(x:A)^v\vdash C^v\vtype$. This forces us to define $(x:A)^v$ as $z:UFA$ if we are to use the usual type substitution of CBPV (after taking the Kleisli extension of $\thunk M^v$).

We would like to say that the CBV- and CBN-translations are sound and complete. However, as no notion of a CBV- or CBN-equational theory has been formulated for dependent type theory, as far as we are aware, we shall take the equational theories induced by these translations as their definitions. Unsurprisingly, $\Sigma$-types behave equationally exactly like $\times$-types and $\Pi$-types do as $\Rightarrow$ types. The interesting connective to study is the $\Id$-type. 
\begin{theorem}
Figures \ref{fig:depcbvtrans} and \ref{fig:depcbntrans} define CBV- and CBN-translations of dependent type theory with $\Sigma_{1\leq i\leq n}$-, $1$-, $\Sigma$-, $\Id$-, $\Pi_{1\leq i\leq n}$- and $\Pi$-types (with dependent elimination rules for all positive connectives) into dCBPV+. In fact, they allow us to transfer an arbitrary theory in CBV- or CBN-dependent type theory to one on dCBPV+ such that we again get well-defined CBV- and CBN-translations. As expected, CBN-$\Id$-types (even extensional ones) satisfy the $\beta$-law but may not satisfy the $\eta$-law. More surprisingly, perhaps, the same is true for CBV $\Id$-types. 
\end{theorem}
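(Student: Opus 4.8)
The plan is to establish the three assertions — well-definedness of the two translations, their extension to arbitrary theories, and the $\beta$/$\eta$-behaviour of the translated $\Id$-types — by a single induction on derivations, once we have isolated the one genuinely load-bearing lemma: that each translation commutes with substitution, up to judgemental equality in dCBPV+, provided we account for the systematic $\thunk$/$\return$-wrapping that the translations force. For CBN this reads $(\ct{B}[M/x])^{n}\equiv \ct{B}^{n}[\thunk M^{n}/x]$ on types and $(N[M/x])^{n}\equiv N^{n}[\thunk M^{n}/x]$ on terms, essentially on the nose, because a CBN variable $x:\ct{B}$ is translated to $x:U\ct{B}^{n}$, the term $x$ to $\force x$, and $(\force x)[\thunk M^{n}/x]=\force\thunk M^{n}=M^{n}$. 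For CBV the point is that type dependency is mediated by $UF$: a type $\Gamma,x:A\vdash A'$ translates into the context $\vect{UF}\Gamma^{v}$ extended by a variable $z_{x}:UFA^{v}$, and $(A'[M/x])^{v}\equiv A'^{v}[\thunk M^{v}/z_{x}]$, the effect of $M$ being suspended under the $\thunk$, while on values the translation is a strict homomorphism and commutes with value substitution exactly. In both cases the lemma is proved by structural induction; the only non-immediate clauses are the eliminators, which reduce to the $\beta$-equations of Figures~\ref{fig:vceqs} and \ref{fig:vcdepeqs} together with the associativity and $\lambda$-commuting conversions for $\toin{}{}{}$, and the effect constructors, handled by Figure~\ref{fig:transeff} and the equations of Figure~\ref{fig:effeqn} (the recursion equation included).

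Granting this lemma, well-definedness is a routine induction on the DTT derivation. The clauses for introduction forms are immediate; for an elimination form the translated term is a nesting of $\toin{}{}{}$'s whose result type, computed from the typing rules of Figure~\ref{fig:vcdepterms} and — crucially — the dependent Kleisli rule of Figure~\ref{fig:depklext}, comes out as $F$ of the substituted type, which the substitution lemma identifies with the translation of the DTT conclusion. This is exactly where dependent Kleisli extension is indispensable: in the $\Pi$-elimination clause, for instance, the inner term $\toin{N^{v}}{z}{(z\textquoteleft\force f)}$ can be typed at $F(A'^{v}[\thunk N^{v}/z_{x}])$ only by Figure~\ref{fig:depklext}, and the lemma rewrites this as $F((A'[N/x])^{v})$; the $\Sigma$-, $1$- and $\Id$-elimination clauses are handled the same way. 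Transferring an arbitrary theory is then immediate: translate every axiom $s=t$ to $s^{v}=t^{v}$ (resp.\ $s^{n}=t^{n}$) — well-typed by well-definedness and the substitution lemma — adjoin these equations to dCBPV+ and close under congruence; compositionality of the translation ensures that every further DTT-derivable equation maps to a dCBPV+-derivable one, so the CBV- and CBN-translations remain well-defined on the enlarged theory.

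For the $\Id$-type, the $\beta$-law is a short direct calculation in both translations. In CBV, $(\idpm{(\refl M)}{x}{N})^{v}$ unfolds, using associativity of $\toin{}{}{}$ and the $\toin$-$\return$ equation of Figure~\ref{fig:vceqs}, to $\toin{M^{v}}{w}{(\idpm{(\refl\tr w)}{y}{(\toin{\force y}{x}{N^{v}})})}$; the $\Id$-$\beta$ rule of Figure~\ref{fig:vcdepeqs} rewrites the inner term to $(\toin{\force y}{x}{N^{v}})[\tr w/y]=\toin{\force\tr w}{x}{N^{v}}=\toin{(\return w)}{x}{N^{v}}=N^{v}[w/x]$, so the whole term equals $\toin{M^{v}}{x}{N^{v}}$, which the substitution lemma (for $M$ a value) identifies with $(N[M/x])^{v}$; the CBN computation is identical with $\thunk$ for $\tr$ and no innermost $\toin{}{}{}$. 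The $\eta$-law, by contrast, fails — uniformly in CBV and CBN, which is the (mildly surprising) content of the theorem. The reason is that a term $p:\Id_{A}(V_{1},V_{2})$ translates to a \emph{computation} of type $F(\Id_{\ldots}(\ldots))$, so when $p$ is substituted into a type or into the motive of an eliminator it enters, by the substitution lemma, only under a $\thunk$, with its effects suspended, whereas the eliminator $\idpm{p}{w}{-}$ translates to $\toin{p^{v}}{z}{\idpm{z}{w}{-}}$, which runs those effects. Concretely, take the source theory to be dependent type theory with divergence, set $A:=1$ and $V_{1}=V_{2}:=\langle\rangle$, so that $\diverge:\Id_{1}(\langle\rangle,\langle\rangle)$, and take the trivial motive, so that the $\Id$-$\eta$ axiom specialises to $\langle\rangle=\idpm{\diverge}{w}{\langle\rangle}$. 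Under the CBV translation (and likewise the CBN one) $\langle\rangle$ translates to $\return\langle\rangle$ while $\idpm{\diverge}{w}{\langle\rangle}$ translates to a computation of the form $\toin{\diverge}{z}{\ldots}$, which equals $\diverge$; so the axiom becomes the equation $\return\langle\rangle=\diverge$ in dCBPV+ extended with divergence, and this fails in any model of that calculus separating $\diverge$ from $\return\langle\rangle$ — and such models exist by the previous subsection, e.g.\ by lifting the exception monad $(-)+1$ on $\Set$ to $\Fam(\Set)$. Hence the translated $\Id$-$\eta$ equation is not derivable, for CBV or for CBN.

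The main obstacle is the substitution lemma for CBV: the bookkeeping of the context shift $\Gamma\mapsto\vect{UF}\Gamma^{v}$ together with the $\tr$- and $\thunk$-insertions, and in particular checking that the dependent Kleisli rule of Figure~\ref{fig:depklext} is precisely strong enough to type every eliminator clause while the commuting conversions and the effect equations make the translation strictly compatible with substitution up to judgemental equality. Once that is in place, the closure-under-congruence argument, the $\beta$-calculation and the $\eta$-counterexample are all routine.
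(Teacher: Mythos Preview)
Your proposal is correct and, for the well-definedness and theory-transfer parts, considerably more explicit than the paper: the paper simply defers to the CBN case already treated in the previous section and asserts that the dependent elimination rules ``work similarly'', whereas you isolate the substitution lemma and trace exactly where the dependent Kleisli rule of Figure~\ref{fig:depklext} is needed in each eliminator clause. The $\beta$-calculation for CBV $\Id$-types is essentially the same as the paper's.

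The one genuine methodological difference is in the $\eta$-failure argument. You give a direct concrete counterexample: with divergence, the instance $\langle\rangle = \idpm{\diverge}{w}{\langle\rangle}$ of $\Id$-$\eta$ translates (in either CBV or CBN) to $\return\langle\rangle = \diverge$, which is refuted in the $\Fam(\Set)$ model with the lifting monad. The paper instead argues one level up: it observes that $\Id$-$\eta$ in CBV-type theory entails the reflection rule $P:\Id_A(M,N)\Rightarrow M=N$, so that in presence of $\diverge:\Id_A(M,N)$ \emph{all} terms are identified; pushing this through the translation would force e.g.\ $\return x = \return y$ in dCBPV+ with divergence, which again fails in the families model. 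Your route is shorter and needs no appeal to reflection; the paper's route explains more of the phenomenon (why $\eta$ is globally destructive rather than failing at one instance) and makes explicit the contrast with dCBPV+'s own $\Id$-$\eta$, which only yields the harmless reflection rule for \emph{values}. Either argument suffices.
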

\begin{proof}[Proof (sketch)] In the previous section, we have already seen the statement about CBN-$\Id$-types in case we use a non-dependent elimination rule. The case with a dependent elimination rules works similarly.

The interesting case here are the $\Id$-types in the CBV-translation. For the $\beta$-rule, note that
\begin{align*}
(\idpm{\refl M}{x}{N})^v&=\toin{(\toin{(M^v)}{z}{\return\refl\tr z})}{\zeta}{\idpm{\zeta}{y}{\toin{(\force y)}{x}{N^v}}}\\
&=\toin{M^v}{z}{\idpm{\refl\tr z}{y}{\toin{(\force y)}{x}{N^v}}}\\
&=\toin{M^v}{z}{\toin{(\force \thunk \return z)}{x}{N^v}}\\
&=\toin{M^v}{z}{\toin{(\return z)}{x}{N^v}}\\
&=\toin{M^v}{x}{N^v}\\
&=(\lbi{x}{M}{N})^v.
\end{align*}
To see that the $\eta$-rule may fail, consider dCBPV+ with divergence. We note that in case the $\eta$-law would hold for $\Id$-types in CBV-type theory, it would imply the following principle of reflection \cite{jacobs1999categorical}:\\\\
\AxiomC{$\Gamma\vdash P:\Id_A(M,N)$}
\UnaryInfC{$\Gamma\vdash M=N:A.$}
\DisplayProof\\
\\
In particular, presence of divergence would make CBV-type theory identify all terms in that case. In particular, this would mean that the terms $x:A, y:A'\vdash^c \return x:FA$ and $x:A,y:A'\vdash^c\return y:FA'$ are judgementally equal in dCBPV+ with divergence, which they clearly are not. For a formal proof that they are not, we note that we can see this in the families model given in section \ref{sec:dcbpvplusmod}. For a more syntactic intuition, we note that $\Id$-$\eta$ is less harmful in CBPV with effects than it is in CBV or CBN with effects due to the strict distinction between values and computations, as the obvious reflection rule it implies is the following which does not identify all terms in presence of divergence, as it does not trivially let us satisfy the hypothesis of the rule, in the way it did in CBV-type theory.
\\
\\
\AxiomC{$\Gamma\vdash^v V:\Id_A(V_1,V_2)$}
\UnaryInfC{$\Gamma\vdash^v V_1= V_2:A.$}
\DisplayProof
\end{proof}

\clearpage
\subsection{Categorical Semantics}
To formulate the categorical semantics of dCBPV+, we need a dependently typed generalization of the notion of Kleisli triple. A similar notion of dependently typed Kleisli extension has been proposed before in \cite{hottbook} (section 7.7), be it for a more limited class of modalities. In practice, we shall see that for a given indexed adjunction dependent Kleisli extensions may not exist nor be unique.

\begin{definition}[dCBPV+ Model] By a dCBPV+ model, we shall mean a dCBPV- model $F\dashv U:\Ccat\leftrightarrows \Dcat$ equipped with \emph{dependent Kleisli extensions}. That is, maps $$\Ccat(\Gamma.A.\Gamma'\{\proj{\Gamma}{\eta_A }\})(1,UB\{\qu{\proj{\Gamma}{\eta_A}}{\Gamma'}\})\ra{(-)^*}\Ccat(\Gamma.UFA.\Gamma')(1,UB),$$ where $\eta$ is the unit of the adjunction $F\dashv U$, such that the following laws hold for members of the same homset:
\begin{itemize}
\item unitality: $b^*\{\qu{\proj{\Gamma}{\eta_A}}{\Gamma'}\}=b$;
\item composition: $b^*\{\qu{\langle \id_\Gamma,a^*\rangle}{\Gamma'} \}=(b^*\{\qu{\langle \id_\Gamma,a\rangle}{\Gamma'} \})^*$;
\item agreement with the usual non-dependent Kleisli extension $(-)^\star$ for the adjunction $F\dashv U$:
\begin{diagram}
\Ccat(\Gamma)(A,UB) & \rTo^{\cong}& \Ccat(\Gamma.A)(1,UB\{\proj{\Gamma}{A}\})&\;=\;& \Ccat(\Gamma.A)(1,UB\{\proj{\Gamma}{UFA}\}\{\proj{\Gamma}{\eta_A}\})\\ 
\dTo^{(-)^\star} & && &\dTo^{(-)^*} \\
\Ccat(\Gamma)(UFA,UB)&&\rTo^{\cong}&&\Ccat(\Gamma.UFA)(1,UB\{\proj{\Gamma}{UFA}\}).
\end{diagram}
\end{itemize}
\end{definition}
\begin{remark}
Note that it is enough to just specify the dependent Kleisli extensions of the form
$$\Ccat(\Gamma.A.\Gamma'\{\proj{\Gamma}{\eta_A}\})(1,UFA'\{\qu{\proj{\Gamma}{\eta_A}}{\Gamma'}\})\ra{(-)^*}\Ccat(\Gamma.UFA.\Gamma')(1,UFA').$$
Then, we can define, more generally, $f^*:=\lambda_{x:\Gamma.UFA.\Gamma'}(f;\eta_{UB})^*\{x\}; U\epsilon_{B(x)}$, where $\eta$ is the counit of the adjunction $F\dashv U$.
\end{remark}
\begin{remark}[Dependent Costrength]
Note that dependent Kleisli extensions allow us, in particular, to define the \emph{dependent costrength}  $s_{A,B}'$ for the monad $T:=UF$ that we were missing (starting from the identity on $F\Sigma_A (B\{\proj{\Gamma}{\eta_A}\})$):
\begin{align*}
&\quad\Dcat(\Gamma)(F\Sigma_A (B\{\proj{\Gamma}{\eta_A} \}),F\Sigma_A (B\{\proj{\Gamma}{\eta_A} \}))\\
&\cong \Ccat(\Gamma)(\Sigma_A B\{\proj{\Gamma}{\eta_A} \},T \Sigma_A (B\{\proj{\Gamma}{\eta_A} \}))\\
&\cong \Ccat(\Gamma.A.B\{\proj{\Gamma}{\eta_A} \})(1,T\Sigma_A (B\{\proj{\Gamma}{\eta_A} \})\{\proj{\Gamma}{A}\}\{\proj{\Gamma.A}{B\{\proj{\Gamma}{\eta_A} \}}\})\\
&\cong \Ccat(\Gamma.A.B\{\proj{\Gamma}{\eta_A} \})(1,T\Sigma_A (B\{\proj{\Gamma}{\eta_A} \})\{\proj{\Gamma}{TA}\}\{\proj{\Gamma.TA}{B}\}\{\qu{\proj{\Gamma}{\eta_A}}{B}\})\\
&\ra{(-)^*} \Ccat(\Gamma.TA.B)(1,T\Sigma_A (B\{\proj{\Gamma}{\eta_A} \})\{\proj{\Gamma}{TA}\}\{\proj{\Gamma.TA}{B}\})\\
&\cong \Ccat(\Gamma)(\Sigma_{TA}B,T\Sigma_A (B\{\proj{\Gamma}{\eta_A}\})).
\end{align*}
As a consequence, we are able to define both a left and a right pairing (which will in general not coincide):
\begin{diagram}
\Sigma_{TA} {TB} & \rTo^{s'} & T\Sigma_{A} TB\{\proj{\Gamma}{\eta_A}\} &\rTo^{Ts} & T^2 \Sigma_AB\{\proj{\Gamma}{\eta_A}\} & \rTo^{\mu} & T\Sigma_AB\{\proj{\Gamma}{\eta_A}\}\\
\Sigma_{TA} {TB} & \rTo^{s} & T\Sigma_{TA} B&\rTo^{Ts'} & T^2 \Sigma_AB\{\proj{\Gamma}{\eta_A}\} & \rTo^{\mu} & T\Sigma_AB\{\proj{\Gamma}{\eta_A}\}.
\end{diagram}
\end{remark}
\begin{theorem}[dCBPV+ Semantics] We have a sound interpretation of dCBPV+ in a dCBPV+ model. The interpretation in such categories is complete in the sense that an equality of values or computations holds in all interpretations iff it is provable in the syntax of dCBPV+. In fact, if we add the obvious admissible weakening and exchange rules to dCBPV+, the interpretation defines an onto correspondence between categorical models and syntactic theories in dCBPV+ which satisfy mutual soundness and completeness results. This correspondence becomes 1-1 and we obtain completeness for the stack judgement if we include complex stacks.
\end{theorem}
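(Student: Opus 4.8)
The argument runs parallel to the dCBPV- Semantics theorem and, ultimately, to the soundness and completeness proofs for linear dependent type theory in \cite{vakar2014syntax}; I only indicate what changes once dependent Kleisli extensions are present. For soundness, I would extend the interpretation of dCBPV- by a single new clause, for the rule of figure \ref{fig:depklext}: given $\sem{M}\in\Dcat(\sem{\Gamma})(F1,F\sem{A})$, equivalently $\sem{\thunk M}\in\Ccat(\sem{\Gamma})(1,UF\sem{A})$, and $\sem{N}\in\Ccat(\sem{\Gamma}.\sem{A}.\sem{\Gamma'}\{\proj{\sem\Gamma}{\eta_{\sem A}}\})(1,U\sem{\ct B}\{\qu{\proj{\sem\Gamma}{\eta_{\sem A}}}{\sem{\Gamma'}}\})$, one sets $\sem{\toin{M}{x}{N}}$ to be $\sem{N}^*$ reindexed along the evident substitution $\langle\langle\id_{\sem\Gamma},\sem{\thunk M}\rangle,\id\rangle$ (transposing across $F\dashv U$ so as to land in $\Dcat(\sem\Gamma)(F1,\sem{\ct B}[\thunk M/z])$), using the map $(-)^*$ of the model. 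The bulk of the work is the substitution lemma: that semantic reindexing along $\proj{}{}$ and $\qu{}{}$ commutes with the syntactic substitutions $[\thunk M/z]$ and $[\tr x/z]$ appearing in figure \ref{fig:depklext}. Here the three laws of a dependent Kleisli extension are used essentially — agreement with $(-)^\star$ handles the degenerate case where $z$ does not occur in $\ct B$ (recovering the old dCBPV- clause for sequencing), unitality validates the directed $\beta$-equation $\toin{\return V}{x}{N}=\lbi{x}{V}{N}$ now at a dependent result type, and composition validates the associativity equation $\toin{(\toin{M}{x}{N})}{y}{N'}=\toin{M}{x}{(\toin{N}{y}{N'})}$ once the inner sequencing is itself dependent. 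The remaining equations of figures \ref{fig:vceqs} and \ref{fig:vcdepeqs}, reread at the more general types now available, are verified exactly as in the dCBPV- case; nothing new is needed for the commuting conversions of $\toin{}{}{}$ past $\lambda$'s since those $\eta$-rules live entirely on the computation-type side, which is unchanged.

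For completeness, as in the dCBPV- case I would first conservatively extend a given dCBPV+ theory with complex stacks (as in \cite{levy2005adjunction}) and build the classifying/term model: $\Bcat$ has contexts modulo judgemental context equality as objects and substitutions as morphisms (well-behaved because we have added the admissible weakening and exchange rules), $\Ccat(\Gamma)$ has value types $\Gamma\vdash A\vtype$ as objects and values as morphisms, $\Dcat(\Gamma)$ has computation types as objects and (complex) stacks as morphisms. By the dCBPV- theorem this is a dCBPV- model; the only extra datum to supply is the family $(-)^*$, which I take to be the syntactic operation $N\mapsto\toin{(\force z)}{x}{N}$, regarded as acting on a declared variable $z:UFA$. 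Its unitality, composition and $(-)^\star$-agreement laws are then precisely equations of the theory (invoking figure \ref{fig:dcbpvextrarule} wherever a complex value has to be pushed through a $\mathsf{to}$), so we obtain a dCBPV+ model. Soundness together with the usual observation that the interpretation of a term in the term model is that term itself yields completeness for values and computations; and running the term-model construction and the interpretation against one another gives the onto model--theory correspondence, which becomes $1$-$1$ and yields completeness for the stack judgement once complex stacks are part of the syntax, exactly as in \cite{levy2005adjunction,vakar2014syntax}.

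The main obstacle is the substitution lemma for the dependent-Kleisli clause: one must check that the chain of reindexings $\proj{}{}$, $\qu{}{}$ occurring in the definition of $(-)^*$ matches, on the nose (we are in the strict/split setting), the effect of the substitutions $[\thunk M/z]$ and $[\tr x/z]$ of figure \ref{fig:depklext}, and that $(-)^*$ is natural in $\Gamma'$ and stable under reindexing along $\Bcat$-morphisms into $\Gamma$. This is bookkeeping of the same flavour as the Beck--Chevalley verifications for $\Pi$-types, but it is the step where the precise form of the three Kleisli-triple laws is forced, and where one sees that a ``monad-only'' formulation without the $(-)^*$-versus-$(-)^\star$ agreement condition would fail to interpret substitution of thunked computations into dependent types — the very phenomenon that made the naive CBV-translation ill-defined in dCBPV-.
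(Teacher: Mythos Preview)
The paper gives no explicit proof of this theorem; it simply states it and moves on, evidently relying on the proof sketch given for the dCBPV- Semantics theorem (which in turn defers to \cite{vakar2014syntax} for soundness and to the syntactic-category construction, after adding complex stacks as in \cite{levy2005adjunction}, for completeness). Your proposal is precisely the natural elaboration of that implicit argument: extend the dCBPV- interpretation by one clause using $(-)^*$, check that unitality, composition and agreement with $(-)^\star$ validate the relevant equations, and supply $(-)^*$ in the term model by the syntactic sequencing $\toin{(\force z)}{x}{N}$. This is correct and is essentially the approach the paper has in mind.
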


\begin{theorem}[Dependent CBN-Semantics {2}] The (semantic equivalent of the) CBN-translation of DTT with $\Sigma_{1\leq i\leq n}$-, $1$-, $\Sigma$-, $\Id$-, $\Pi_{1\leq i\leq n}$-, $\Pi$-types, where we use the strong (dependent) elimination rules for all positive connectives, into dCBPV+, lets us construct a categorical model of CBN-dependent type theory with the connectives above out of any model of dCBPV+ by taking the coKleisli category for $!=FU$. The interpretation of CBN-dependent type theory is sound and complete for the equational theory induced from dCBPV+:
\begin{align*}
\sem{\ct{B_1},\cdots,\ct{B_n}\vdash \ct{B}}&=\Dcat(U\sem{\ct{B_1}}.\cdots.U\sem{\ct{B_n}})(F1,\sem{\ct{B}})\cong\Dcat_{!}(U\sem{\ct{B_1}}.\cdots.U\sem{\ct{B_n}})(\top,\sem{\ct{B}}).
\end{align*}
\end{theorem}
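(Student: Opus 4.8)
The plan is to build on the proof of the theorem on \emph{Dependent CBN-Categories}, which already does most of the work, and to isolate exactly the extra mileage provided by the dependent Kleisli extensions of a dCBPV+ model. That theorem exhibits $\Bcat^{op}\ra{\Dcat_!}\Cat$ as an indexed category with full and faithful (possibly undemocratic) comprehension --- via $\Dcat_!(\Gamma')(\top,C\{f\})\cong\Ccat(\Gamma')(1,UC\{f\})\cong\Bcat/\Gamma(f,\proj{\Gamma}{UC})$ --- carrying fibred finite products $\Pi_{1\leq i\leq n}$ and $\Pi$-types lifted from $\Dcat$, together with \emph{weak} $\Sigma_{1\leq i\leq n}$-, $1$-, $\Sigma$- and $\Id$-types. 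All of this I would keep verbatim; the only new content is the promotion of the weak positive elimination rules to \emph{strong (dependent)} ones, for which the dependent Kleisli extension is precisely the missing ingredient. I would set up the interpretation so that it \emph{factors through} the syntactic CBN-translation, $\sem{x_1:\ct{B}_1,\ldots\vdash^c M:\ct{B}}:=\sem{x_1:U\ct{B}_1^n,\ldots\vdash^c M^n:\ct{B}^n}$ evaluated in the given dCBPV+ model; this is well-typed precisely because the preceding theorem establishes that $(-)^n$ is well-defined with the strong elimination rules once dependent Kleisli extensions are present. The displayed formula in the statement is just the unfolding of this via $F1\cong FU\top$.

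Granting this factorisation, soundness is a one-line consequence of the soundness of dCBPV+: since by \emph{definition} the CBN-equational theory is the one generated by $(-)^n$, an equality $\Gamma\vdash M=N$ in CBN-DTT translates to a provable equality $\Gamma^n\vdash^c M^n=N^n:\ct{B}^n$ in dCBPV+, which the dCBPV+ model validates, so $\sem{M}=\sem{N}$. Completeness follows, as in the simply typed case and in the first CBN-semantics theorem, by running the same argument in the syntactic (term) model of dCBPV+ furnished by the dCBPV+ onto theory--model correspondence: there $\sem{M^n}=\sem{N^n}$ holds iff $M^n=N^n$ is provable iff, again by definition, $M=N$ is provable in CBN-DTT. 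This settles completeness for the whole class of models.

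What genuinely needs checking is that the structure read off $\Dcat_!$ in this way really is a model of CBN-DTT \emph{with strong positive elimination}. I would treat $\Sigma$-types as the template. Its CBN-interpretation is $FP$ with $P:=\Sigma_{U\sem{\ct{A}}}U\sem{\ct{B}}\in\Ccat(\Gamma)$, so its context extension in $\Bcat$ is $\Gamma.UFP$. Unfolding coKleisli homsets and using the defining property $\Gamma.\Sigma_{U\sem{\ct{A}}}U\sem{\ct{B}}=\Gamma.U\sem{\ct{A}}.U\sem{\ct{B}}$ of $\Sigma$-types in $\Ccat$, the dependent $\Sigma$-eliminator for $FP$ is the composite
\begin{align*}
\Dcat_!(\Gamma.U\sem{\ct{A}}.U\sem{\ct{B}})(\top,C\{\proj{\Gamma}{\eta_P}\})&\cong\Ccat(\Gamma.U\sem{\ct{A}}.U\sem{\ct{B}})(1,UC\{\proj{\Gamma}{\eta_P}\})\\
&=\Ccat(\Gamma.P)(1,UC\{\proj{\Gamma}{\eta_P}\})\\
&\ra{(-)^*}\Ccat(\Gamma.UFP)(1,UC)\\
&\cong\Dcat_!(\Gamma.UFP)(\top,C),
\end{align*}
where $(-)^*$ is a dependent Kleisli extension (taking $A:=P$, $B:=C$, $\Gamma':=\cdot$). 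The unitality law $b^*\{\proj{\Gamma}{\eta_P}\}=b$ is exactly the $\beta$-rule for this eliminator, and the agreement law with the ordinary Kleisli extension $(-)^\star$ guarantees this new eliminator restricts, on non-dependent data, to the weak one of the Dependent CBN-Categories theorem; whatever $\eta$-behaviour results --- in particular the possible failure of $\Id$-$\eta$ analysed in the previous theorem --- is whatever the translation dictates. The cases of $\Sigma_{1\leq i\leq n}$-, $1$- and $\Id$-types are formally identical, with $P$ replaced by $\Sigma_{1\leq i\leq n}U\sem{\ct{B}_i}$, $1$ and $\Id_{U\sem{\ct{B}}}(\thunk M^n,\thunk M'^n)$ respectively, invoking the corresponding structure in $\Ccat$ and the same dependent Kleisli extension.

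The main obstacle I anticipate is the coherence bookkeeping needed to see that the displayed eliminator is not merely a function but is \emph{natural in $\Gamma$}, i.e. stable under substitution (a Beck--Chevalley-style condition), so that it interprets the strong elimination \emph{rule} rather than a fixed instance of it. This is exactly where the \emph{composition} law $b^*\{\qu{\langle\id_\Gamma,a^*\rangle}{\Gamma'}\}=(b^*\{\qu{\langle\id_\Gamma,a\rangle}{\Gamma'}\})^*$ of the dependent Kleisli extension is used, in tandem with the triangle identities for $F\dashv U$ and the already-verified compatibility of the CBN-translation of type substitution with the reindexing that sends a context $\Gamma$ to $\vect{U}\Gamma^n$. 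Once this is in hand, the soundness and completeness clauses above go through unchanged, and if complex stacks are added to dCBPV+ the same argument sharpens to the $1$--$1$ theory--model correspondence and completeness for the stack judgement, exactly as in \cite{levy2005adjunction}.
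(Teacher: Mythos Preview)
Your proposal is correct and follows essentially the approach the paper has in mind: the paper states this theorem without an explicit proof, relying on the already-proved \emph{Dependent CBN-Categories} theorem together with the dCBPV+ semantics, and your write-up spells out precisely the missing step --- that the dependent Kleisli extension $(-)^*$ upgrades the weak positive eliminators of $\Dcat_!$ to strong ones, with unitality giving $\beta$ and the agreement law ensuring compatibility with the non-dependent case. The factorisation of the interpretation through the syntactic CBN-translation $(-)^n$ and the resulting reduction of soundness and completeness to those of dCBPV+ is exactly the intended argument.
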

\begin{theorem}[Dependent CBV-Semantics] The (semantic equivalent of the) CBV-translation of DTT with $\Sigma_{1\leq i\leq n}$-, $1$-, $\Sigma$-, $\Id$-, $\Pi_{1\leq i\leq n}$-, $\Pi$-types, where we use the strong (dependent) elimination rules for all positive connectives, into dCBPV+, lets us construct a categorical model of CBV-dependent type theory with the connectives above out of any model of dCBPV+ by taking the Kleisli category for $T=UF$. The interpretation of CBN-dependent type theory is sound and complete for the equational theory induced from dCBPV+:
\begin{align*}
\sem{A_1,\cdots,A_n\vdash A}&=\Dcat(\sem{A_1}. \cdots.\sem{A_n}\{\eta_{\sem{A_1},\ldots,\sem{A_{n}}}\})(F1,F\sem{A}\{\eta_{\sem{A_1},\ldots,\sem{A_{n}}} \})\\
&\cong\Ccat_T(\sem{A_1}. \cdots.\sem{A_n}\{\eta_{\sem{A_1},\ldots,\sem{A_{n}}}\})(1,\sem{A}\{\eta_{\sem{A_1},\ldots,\sem{A_{n}}} \}).
\end{align*}
Here, $\eta_{\sem{A_1},\ldots,\sem{A_{n}}}$ is inductively defined by $$\eta_{\sem{A_1},\ldots,\sem{A_{k}}}:=\qu{\eta_{\sem{A_1},\ldots,\sem{A_{k-1}}}}{\sem{A_k}};\proj{\sem{UFA_1}.\cdots.\sem{UFA_{k-1}}}{\eta_{\sem{A_k}}}.$$
\end{theorem}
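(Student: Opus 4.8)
The plan is to proceed exactly as in the proofs of the Dependent CBN-Semantics theorems, composing the syntactic CBV-translation of figure \ref{fig:depcbvtrans} --- which is well-defined in dCBPV+ by the theorem establishing the CBV- and CBN-translations --- with the sound and complete interpretation of dCBPV+. The genuinely new ingredient, and the reason dependent Kleisli extensions are needed, is that a CBV-context $A_1,\ldots,A_n$ is interpreted not as the naive comprehension but as the $\eta$-reindexed object $\sem{A_1}.\cdots.\sem{A_n}\{\eta\}$: the translation sends $x_1:A_1,\ldots,x_n:A_n\vdash A\type$ to $z_1:UF\sem{A_1},\ldots,z_n:UF\sem{A_n}\vdash A^v\vtype$, and the CBV type is recovered by pulling back along the iterated unit $\eta_{\sem{A_1},\ldots,\sem{A_n}}$ defined inductively in the statement.

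First I would make the semantic CBV-translation precise: given a dCBPV+ model $F\dashv U:\Ccat\leftrightarrows\Dcat$, define an indexed category of CBV-DTT contexts with base objects built by the displayed recursion for $\eta_{\sem{A_1},\ldots,\sem{A_k}}$, and fibres equivalent (via $F1\to F(-)$) to the appropriate full subcategory of the Kleisli category $\Ccat_T$. The crucial point is that change of base along a Kleisli morphism --- substitution of a possibly effectful CBV-term into a CBV-type --- is given by pulling back along $\thunk M^v$ composed with the dependent Kleisli extension $(-)^*$, and the three dCBPV+ model axioms (unitality, composition, agreement with the non-dependent $(-)^\star$) are precisely what make this change of base strictly functorial, unit-preserving, and compatible with the value substitution already present in $\Ccat$.

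Next I would verify that this indexed category has full and faithful comprehension and supports the required connectives. Comprehension and the $\Sigma_{1\leq i\leq n}$-, $1$-, $\Sigma$- and $\Id$-types descend from $\Ccat$ much as in the dCBPV- computations (using $\sem{\Sigma_{x:A}A'}=\Sigma_{\sem A}\sem{A'}$ and $\sem{\Id_A}=\Id_{\sem A}$), with the dependent costrength $s'$ built from $(-)^*$ in the remark above doing the work needed to make the \emph{dependent} eliminators of these positive connectives translate to well-typed dCBPV+ computations; the $\beta$-laws then follow from the dCBPV+ equational theory, the $\Id$-$\beta$ case being exactly the calculation already performed for the CBV-translation above. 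The $\Pi_{1\leq i\leq n}$- and $\Pi$-types are modelled by $U\Pi F$ as dictated by figure \ref{fig:depcbvtrans}, with Beck--Chevalley inherited from that of $\Pi$ in $\Dcat$ together with naturality of $\eta$; and one checks that the $\Id$-$\eta$-law is not forced, consistent with the divergence counterexample in the analysis of CBV-$\Id$-types. Soundness is then immediate, since we have \emph{defined} the CBV equational theory of dependent type theory as the one induced by this translation and dCBPV+ is sound; completeness and the onto (resp.\ 1-1 with complex stacks) correspondence follow by applying the construction to the term model of dCBPV+, as in \cite{levy2005adjunction}.

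The step I expect to be the main obstacle is the coherence bookkeeping around the $\eta$-reindexing: one must check that the inductively defined $\eta_{\sem{A_1},\ldots,\sem{A_n}}$ interacts correctly with $\qu{-}{-}$, $\proj{-}{-}$ and $(-)^*$ so that substitution in CBV-types is literally --- not merely up to isomorphism --- functorial, and simultaneously confirm that the dependent Kleisli extension really does send every dependent eliminator for a positive connective to a well-typed dCBPV+ term, which is precisely where the dCBPV- analysis breaks down.
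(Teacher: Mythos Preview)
The paper does not give an explicit proof of this theorem: it is stated immediately after the Dependent CBN-Semantics~2 theorem and followed only by a remark, with the understanding that it follows from composing the (now well-defined) CBV-translation of figure~\ref{fig:depcbvtrans} with the dCBPV+ Semantics theorem, exactly as in the CBN case. Your proposal identifies this route correctly and supplies substantially more detail than the paper does --- in particular your unpacking of how the dependent Kleisli extension axioms make Kleisli change of base functorial, and your flagging of the $\eta$-reindexing coherence as the main technical burden, go beyond anything the paper spells out.

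One small caution: you write that ``comprehension and the $\Sigma_{1\leq i\leq n}$-, $1$-, $\Sigma$- and $\Id$-types descend from $\Ccat$'' as though the resulting structure were a comprehension category in the usual sense. The paper's remark immediately after the theorem signals that the CBV model is \emph{not} obviously a model of pure DTT in which some $\eta$-laws fail (unlike the CBN case), and indeed the Kleisli indexed category need not satisfy the comprehension axiom in the standard form. Since the paper \emph{defines} CBV-dependent type theory to be whatever equational theory the translation induces, this is not a gap in the argument, but you should be careful not to overclaim what structure $\Ccat_T$ carries.
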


\begin{remark}
We have finally arrived at a notion of a model for CBV-dependent type theory. It seems much less straightforward than the corresponding notion of a model for CBN-dependent type theory as a particular kind of model of pure dependent type theory in which the $\eta$-laws for positive connectives may fail. Then again, a similar phenomenon is already seen in the simply typed case.
\end{remark}

\clearpage
\subsection{Some Basic Models and Non-Models}\label{sec:dcbpvplusmod}
We had already noted that the identity adjunction on any model of pure DTT gives a model of dCBPV+, which demonstrates consistency. However, as we shall see, it is not the case that any model of dCBPV- extends to a model of dCBPV+. In particular, not every indexed monad on a model of pure DTT admits dependent Kleisli extensions. As it turns out, the existence of dependent Kleisli extensions needs to be assessed on a case-by-case basis. Moreover, one indexed monad might be given dependent Kleisli extensions in several inequivalent ways. Therefore, we treat some dCBPV- models for common effects and discuss the (im)possibility of dependent Kleisli extensions.

\subsubsection{A Non-Model: Writing}
We let $\Bcat$ be $\mathsf{Set}$ and $\Ccat$ be $\Fam(\mathsf{Set})$. Let $M$ be a non-trivial monoid, for instance a monoid of strings of ASCII characters. Then, we let $\Dcat$ be the Eilenberg-Moore category for the indexed monad $-\times M$. Now, we note that dependent Kleisli extensions do not have a sound interpretation in this model of dCBPV-. Indeed, it would amount to giving a map
\begin{diagram}
\Fam(\mathsf{Set})(\Gamma.A)(1,B\{\langle \id_\Gamma,\id_A,1_M\rangle \}\times M)&\rTo^{(-)^*}&\Fam(\mathsf{Set})(\Gamma.(A\times M))(1,B\times M)\\
\Pi_{\langle c,a\rangle \in \Gamma.A}B(c,a,1_M)\times M&\rTo^{(-)^*}&\Pi_{\langle c,a,m\rangle \in \Gamma.(A\times M)}B(c,a,m)\times M\\
f=\langle f_B,f_M\rangle & \rMapsto &\lambda_{c,a,m} \langle ?,f_M(c,a)*m\rangle .
\end{diagram}
We see that this is not always possible. For instance, let $\Gamma=1=A$ and let $B$ be a predicate that expresses that $m=1_M$. In that case, any $f^*$ cannot be a total function as it cannot send, for instance, $(*,*,\textnormal{\texttt{hello world}})$ anywhere.\\
\\
We would like to stress that this does not show that dependent Kleisli extensions are incompatible with printing. Indeed, it only shows that this particular model of printing does not admit dependent Kleisli extensions. One could conceive, for example, of a model of printing where types depending on $TA$ are not allowed to refer to what is being printed, in which case we could define $f^*(c,a,m):=\langle f_B(c,a),f_M(c,a)*m\rangle $. More generally, such a definition could work if, for all $m\in M$, $B(c,a,1_M)\subseteq B(c,a,m)$. Importantly, such an indexed category would not be of the shape $\Fam(\Ccat)$, but it would really involve non-trivial type dependency. (Recall that $\Fam(\Ccat)$ is the cofree indexed category over $\Set$ on a category $\Ccat$ \cite{vakar2015syntax}, hence is really a model of simple type theory in disguise as a model of dependent type theory.)

\subsubsection{A Non-Model: Reading}
We let $\Bcat$ be $\mathsf{Set}$ and $\Ccat$ be $\Fam(\mathsf{Set})$. Let $S$ be some non-trivial set (that is, not $0$ or $1$), which we think of as a set of states for our storage cell. Then, we let $\Dcat$ be the Eilenberg-Moore category for the indexed monad $(-)^S$. Now, we note that dependent Kleisli extensions do not have a sound interpretation in this model of dCBPV-. Indeed, it would amount to giving a map
\begin{diagram}
\Fam(\Set)(\Gamma.A)(1,B\{\lambda_s\langle \id_\Gamma,\id_A\rangle \}^S)&\rTo^{(-)^*}&\Fam(\Set)(\Gamma.(A^S))(1,B^S)\\
\Pi_{\langle c,a\rangle \in \Gamma.A}B(s\mapsto \langle c,a\rangle)^S&\rTo^{(-)^*}&\Pi_{(s\mapsto\langle c,a_s\rangle )\in \Gamma.(A^S)}B(s\mapsto \langle c,a_s\rangle)^S\\
f & \rMapsto &\lambda_{s\mapsto \langle c,a_s\rangle}\lambda_{s'} ?.
\end{diagram}
We see that this is not always possible. For instance, let $\Gamma=1$ and $A=2$ and let $B$ be a predicate that expresses that $s\mapsto \langle *,a_s\rangle  $ is constant. In that case, any $f^*$ cannot be a total function as it cannot send a non-constant $s\mapsto \langle *,a_s\rangle$ anywhere.\\
\\
Again, this does not show that reading as an effect is incompatible with dependent Kleisli extensions, per se. One could imagine a model in which, for all fixed $s'\in S$, $B(s\mapsto \langle c,a_{s'}\rangle)\subseteq B(s\mapsto \langle c,a_s\rangle )$. In that case, we can define $f^*(s\mapsto \langle c,a_s\rangle )(s'):=f(c,a_{s'})(s')$.

\subsubsection{A Non-Model: Global State}
Similarly, for global state, we let $\Bcat$ be $\Set$ and $\Ccat$ be $\Fam(\Set)$ and we take $T:= (-\times S)^S$, where $S$ is a non-trivial set, and let $\Dcat$ be the Eilenberg-Moore category for $T$. Then, dependent Kleisli extensions would amount to a map
\begin{diagram}
\Fam(\Set)(\Gamma.A)(1,(B\{\lambda_s\langle \id_\Gamma,\id_A\rangle \}\times S)^S)&\rTo^{(-)^*}&\Fam(\Set)(\Gamma.((A\times S)^S))(1,(B\times S)^S)\\
\Pi_{\langle c,a\rangle \in \Gamma.A}(B(s\mapsto \langle c,a,s\rangle)\times S)^S&\rTo^{(-)^*}&\Pi_{(s\mapsto\langle c,a_s,s_s\rangle )\in \Gamma.((A\times S)^S)}(B(s\mapsto \langle c,a_s,s_s\rangle)\times S)^S\\
f & \rMapsto &\lambda_{s\mapsto \langle c,a_s,s_s\rangle}\lambda_{s'} ?.
\end{diagram}
Now, $B$ could express the property that $a_s=a$ ($a_s$ is independent of $s$) and $s_s=s$. In that case, no such dependent Kleisli extension exists.

One could imagine a different model of global state, however, in which, for every fixed $s'\in S$, $B(s\mapsto \langle c,a_{s'},s\rangle )\subseteq B(s\mapsto \langle c,a_s,s_s\rangle)$. In that case, one could define $f^*(s\mapsto \langle c,a_s,s_s\rangle )(s'):=f(c,a_{s'})(s')$.

\subsubsection{A Model: Exceptions or Divergence}
We consider a model for exceptions or divergence, where we use the either monad $T=E+(-)$  on $\Fam(\Set)$, for some fixed set $E$ whose elements we think of as exceptions or, perhaps, in the case of $E=1$, divergence. We let $\Bcat$ be $\Set$ and $\Ccat$ be $\Fam(\Set)$ and we take for $\Dcat$ the Eilenberg-Moore category for $T$. In this case, we in fact have maps
\begin{diagram}
\Fam(\Set)(\Gamma.A)(1,E+B\{\langle \id_\Gamma,\inr\rangle\})&\rTo^{(-)^*}&\Fam(\Set)(\Gamma.(E+A))(1,E+B)\\
\Pi_{\langle c,a\rangle \in \Gamma.A}E+B(c,\inr\;a)&\rTo^{(-)^*}&\Pi_{\langle c,t\rangle \in \Gamma.(E+A)}E+B(c,t)\\
f & \rMapsto & \lambda_{c}[\inl ,f(c,-)].
\end{diagram}
These are easily seen to give a sound interpretation of dependent Kleisli extensions. They indeed model the propagation of exceptions one would expect.

\subsubsection{A Dubious Model: Erratic Choice}
We consider a model for erratic choice, where we use the powerset monad $T=\p$  on $\Fam(\Set)$. We let $\Bcat$ be $\Set$ and $\Ccat$ be $\Fam(\Set)$ and we take for $\Dcat$ the Eilenberg-Moore category for $T$. Dependent Kleisli extensions would amount to maps
\begin{diagram}
\Fam(\Set)(\Gamma.A)(1,\p B\{\langle \id_\Gamma,x\mapsto \{x\} \rangle\})&\rTo^{(-)^*}&\Fam(\Set)(\Gamma.(\p A))(1,\p B)\\
\Pi_{\langle c,a\rangle \in \Gamma.A}\p B(c,\{a\})&\rTo^{(-)^*}&\Pi_{\langle c,t\rangle \in \Gamma.(\p A)}\p B(c,t)\\
f & \rMapsto & \lambda_{c,t}?.
\end{diagram}
We can, in principle, define $f^*(c,t):=(\bigcup_{a\in t} f(c,a))\cap B(c,t)$ to obtain a dependent Kleisli extension. However, this model might not correspond to the expected operational semantics. It would be preferable to consider, instead, a model of type theory $\Ccat$ in which it is always the case that $\bigcup_{a\in t} B(c,\{a\})\subseteq  B(c,t)$, which case we can just define $f^*(c,t):=\bigcup_{a\in t}f(c,a)$ (cf. reader monad).

\subsubsection{A Puzzle: Control Operators}
We consider a dCBPV- model for control operators, where we use a continuation monad $T=R^{(R^-)}$  on $\Fam(\Set)$, for some non-trivial set $R$. We let $\Bcat$ be $\Set$ and $\Ccat$ be $\Fam(\Set)$ and we take for $\Dcat$ the Eilenberg-Moore category for $T$. Dependent Kleisli extensions would amount to maps\pagebreak
\begin{diagram}
\Fam(\Set)(\Gamma.A)(1, (R^{(R^{B\{\langle \id_\Gamma,x\mapsto \mathsf{ev}_x \rangle\}})}) &\rTo^{(-)^*}&\Fam(\Set)(\Gamma.( R^{(R^A)}))(1,R^{(R^B)})\\
\Pi_{\langle c,a\rangle \in \Gamma.A}(R^{(R^{B(c,\mathsf{ev}_a)})})&\rTo^{(-)^*}&\Pi_{\langle c,t\rangle \in \Gamma.(R^{(R^A)})}(R^{(R^{B(c,t)})})\\
f & \rMapsto & \lambda_{c,t}?.
\end{diagram}
In order to match the expected operational semantics, it is tempting to try and define, just as in the simply typed case, $f^*(c,t)(k):=t(\lambda_a f(c,a)(k))$. However, this is only well-defined if we have $\forall_{a\in A(c)}R^{B(c,t)}\subseteq R^{B(c,\mathsf{ev}_a)}$. In particular, we would have that $B(c,\mathsf{ev}_A)=B(c,\mathsf{ev}_{A'})$ for all $a,a'\in A(c)$. This suggests a kind of incompatibility between control operators and dependent Kleisli extensions. We would like to further investigate the combination of dCBPV with control operators in future work, especially given the correspondence with classical logic. In the light of \cite{herbelin2005degeneracy}, we already know that the combination of dependent types and control operators can easily lead to logical inconsistency of the system (in the sense that every type becomes inhabited). The paradox above, however, would point to an incompatibility of a different kind, more to do with subject reduction rather than type inhabitation. It remains to be seen if the problem can be avoided by using  more sophisticated models of control.

\subsubsection{A Model: Recursion}
Based on the Scott domains (i.e. bounded complete, directed complete, algebraic cpo) model of dependent type theory with recursion of \cite{palmgren1990domain}, we can construct a model of dCBPV- (with intensional $\Id$-types). We take $\Bcat$ to be the category of Scott predomains (i.e. disjoint unions of Scott domains) and continuous functions. For a preorder-enriched category like $\Bcat$, we call a pair of morphisms $e:A\leftrightarrows B:p$ an embedding-projection pair if $e;p=\id_A$ and $p;e\leq \id_B$. We write $\Bcat_{ep}$ for the lluf subcategory of $\Bcat$ of the embedding-projection pairs.  We can make this into a model $\Ccat$ of DTT in the same way as we can for the category of Scott domains: we define $\Ccat(A)$ to be the category of Scott predomains parametrised over $A$, which we define to be continuous functors from $A$ into $\Bcat_{ep}$. This supports $1$-, $\Sigma$-, $\Sigma_{1\leq i\leq n}$- and (intensional) $\Id$-types. Briefly, $1$ is the one-point predomain, $\Sigma$-types are just the set-theoretic $\Sigma$-types equipped with the product order, $\Sigma_{1\leq i\leq n}$-types are given by disjoint unions and $\Id_A(x,y):=\{z\in A\;|\; z\leq x \;\wedge z\leq y\}$ with the induced order from $A$.

We can define $\Dcat(A)$ to be the category of Scott domains parametrised over $A$ (continuous families of Scott domains) with strict continuous (families of) functions as morphisms. This extends to give an indexed category of parametrised Scott domains indexed over Scott predomains. This category supports $\Pi_{1\leq i\leq n}$- (the ordinary set-theoretic product equipped the product order) and $\Pi$-types (the set of dependent functions which are continuous, equipped with the pointwise order). All constructions and proofs are exactly as in \cite{palmgren1990domain} with the only difference that in some cases we have to replace the word domain with predomain. We have an indexed adjunction between $\Ccat$ and $\Dcat$: the inclusion of $\Dcat$ into $\Ccat$ has a left adjoint which adjoins a bottom element. In fact, we have a model of dCBPV-. The model clearly supports recursion, as we can define our usual fixpoint combinators. This model is easily seen further to support dependent Kleisli extensions: similar to our previous model of divergence, for a dependent function $f$, we define the Kleisli extension $f^*$ to send the new bottom element to bottom and otherwise act as $f$. In fact, this model also supports universes for both value and computation types \cite{palmgren1993information} and the dependent projection products and (intensional) negative $\Id$-types of section  \ref{sec:depprojprod}. We note that the dependent projection product of Scott domains is just the $\Sigma$-type of \cite{palmgren1990domain} (which generalizes the categorical product), while the pattern-matching $\Sigma$-type of Scott domains is its lift. Similarly, the (positive) $\Id$-type of Scott domains is a lifted version of the negative $\Id$-type. We note that the $\Sigma$- and $\Id$-types described in \cite{palmgren1990domain} are the negative versions.

\subsubsection{A Hope: Recursion, Control Operators and Local State}
The game semantics for dependent type theory of \cite{abramsky2015games} is easily seen to extend to a situation where we drop the various conditions on strategies that exclude effects. This gives us a model of CBN-dependent type theory with recursion, (a limited form of) control operators and local references of ground type (and hopefully even general references). The hope is that this extends to a model of dCBPV- similar to the game semantics for simple CBPV \cite{levy2012call}. Seeing that the induced monad would simply be lifting, we would expect this to in fact give a model of dCBPV+.

\clearpage
\subsection{Operational Semantics and Effects}
Using the CK-machine, we can again define an operational semantics on the complex value free computations of dCBPV+. We first note that, again, complex values are redundant in the sense of theorem \ref{thm:complexvaluesred}.

The definition of the operational semantics does not change in presence of dependent Kleisli extensions and is exactly as that described in section \ref{sec:depop}. In particular, figures \ref{fig:ckmachine} and \ref{fig:ckextra} define a CK-machine on which we evaluate the (complex value-free) computations of pure dCBPV+, using the stacks of figure \ref{fig:depstacks}. As before, we can add the effects of figure \ref{fig:effects} together with their operational semantics of figures \ref{fig:opsemdivs} and \ref{fig:opsemprint} and equations of figure \ref{fig:effeqn}.  In any case, we still have the same determinacy and strong normalization results as before, as the essentially untyped proofs remain valid.
\begin{theorem}[Determinacy, Strong Normalization] No transition occurs precisely if we are in a terminal configuration. In absence of erratic choice, at most one transition applies to each configuration. In absence of divergence and recursion, every configuration reduces to a terminal configuration in a finite number of steps.
\end{theorem}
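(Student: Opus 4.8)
The plan is to observe that the operational apparatus of dCBPV+ is \emph{literally} that of dCBPV-: the CK-machine, the stacks of figure \ref{fig:depstacks}, and all transitions (figures \ref{fig:ckmachine}, \ref{fig:ckextra}, and, once effects are added, figures \ref{fig:opsemdivs} and \ref{fig:opsemprint}) are unchanged, and the one new ingredient --- the dependent Kleisli extension rule of figure \ref{fig:depklext} --- introduces no new term former and no new transition, since its subject $\toin{M}{x}{N}$ is already governed by the transitions $\toin{M}{x}{N},K\leadsto M,(\toin{[\cdot]}{x}{N})::K$ and $\return V,(\toin{[\cdot]}{x}{N})::K\leadsto N[V/x],K$. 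Consequently the theorem is proved exactly as in the simply typed case \cite{levy2012call} and as in section \ref{sec:depop}; crucially, every argument is essentially untyped.

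First I would dispatch the first two sentences by a routine case analysis on the head of $M$. By Theorem \ref{thm:complexvaluesred} we may take $M$ complex-value-free, so its head is either an introduction form, a $\beta$-redex (sequencing, force/thunk, a pattern match on an explicit tuple or reflexivity witness, or a projection or application applied to the matching introduction form), or an elimination/pattern-matching form headed by a variable $z$. In the first case a transition applies unless $K=\nil$ and $M$ is one of $\return V$, $\lambda_i M_i$, $\lambda_x M$; in the second case a $\beta$- or push-transition always applies; in the third case we are in one of the listed stuck terminal configurations ($\force z$, a pattern match against $z$, an $\Id$-pattern-match $\idpm{z}{x}{M}$, etc.). This establishes that no transition applies precisely at a terminal configuration. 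For determinacy in absence of erratic choice, observe that the head of $M$, together with the head of $K$ in the $\return V$/projection/application cases, pins down at most one applicable transition; the only construct offering two transitions from one configuration is $\nondet{i}{M_i}$, which is excluded. Throughout, the $\Id$-eliminator is treated exactly like the $\times$-eliminator, as indicated after Theorem \ref{thm:complexvaluesred}.

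For strong normalization in absence of divergence and recursion, I would reuse Levy's reducibility argument verbatim: define, by induction on computation types, a family of strongly normalizing computations closed under the stack-forming operations (a $\top\top$-lifting), then show by induction on typing derivations that every well-typed computation belongs to it. Only two points need checking, and both are immediate: the $\Id$-type eliminator reduces like any other pattern-matching eliminator and so slots into the same clause, and dependent Kleisli extension contributes no new case because it yields a $\toin{M}{x}{N}$, already covered. The potential worry --- that the dependency of types might force a more delicate simultaneous induction on types and terms --- does not materialize, for the reason stressed in section \ref{sec:depop}: types depend only on values, and values are never reduced by the CK-machine, so the reducibility interpretation of a type is insensitive to reduction and the induction closes without circularity. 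The main obstacle is thus purely one of bookkeeping: once this observation is recorded there is nothing genuinely new to prove, the determinacy half being a pure syntactic case split and the normalization half a transcription of the simply typed proof.
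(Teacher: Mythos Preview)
Your proposal is correct and follows essentially the same approach as the paper: the paper's entire justification is the remark that ``the essentially untyped proofs remain valid,'' deferring to the simply typed case and to section \ref{sec:depop}. You have simply spelled out in more detail what that one-line appeal entails --- the case analysis on the head of $M$ for the progress/determinacy claims, and the observation that Levy's reducibility argument transfers verbatim because the dependent Kleisli extension rule introduces no new term former or transition and types depend only on unreduced values.
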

However, the results of section \ref{sec:dcbpvplusmod} are reflected at the level of the operational semantics. While for some effects like divergence, exceptions and recursion, subject reduction is entirely unproblematic to establish. We need to add certain subtyping rules to obtain subject reduction in presence of printing, global state and erratic choice.
\begin{theorem}[Subject Reduction 1] In absence of printing, global state and erratic choice, every transition of the CK-machine from a well-typed configuration, results in a well-typed configuration of the same type. \end{theorem}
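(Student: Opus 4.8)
The plan is to proceed by a case analysis over the single-step transitions of the CK-machine: those of figures~\ref{fig:ckmachine} and~\ref{fig:ckextra}, together with the effectful transitions of figure~\ref{fig:opsemdivs} restricted to divergence, recursion and errors (erratic choice, as well as all of figure~\ref{fig:opsemprint}, being excluded by hypothesis). For each transition $M,K\leadsto M',K'$ I want to show: if $\Gamma\vdash^c M:\ct{B}$ and $\Gamma;\ct{B}\vdash^k K:\ct{C}$, then there is a computation type $\ct{B'}$ with $\Gamma\vdash^c M':\ct{B'}$ and $\Gamma;\ct{B'}\vdash^k K':\ct{C}$, so that the type $\ct{C}$ of the configuration is unchanged. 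The standing equipment is: the admissible weakening and exchange rules, the substitution lemma (substituting a well-typed value for a variable preserves every judgement), the fact that all type, term and context formers respect judgemental equality, and inversion for the typing rules of figures~\ref{fig:vcdepterms} and~\ref{fig:depklext} and the stack rules of figure~\ref{fig:depstacks}. Because every transition is defined on essentially untyped configurations, all of this applies uniformly, and — crucially — types in dCBPV can only depend on \emph{values} and on \emph{thunks}, never on running computations, which keeps the argument light.

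For the $\beta$-transitions — $\mathsf{let}$-$\mathsf{be}$, $\force\thunk$, the three pattern-match transitions (for $\Sigma_{1\leq i\leq n}$-, $1$- and $\Sigma$-types), the transition $\mathsf{pm}\;(\refl V)$ of figure~\ref{fig:ckextra}, and the two $\lambda$-application transitions $\lambda_i M_i,j::K\leadsto M_j,K$ and $\lambda_x M,V::K\leadsto M[V/x],K$ — one inverts the typing derivation of the redex, reads off the typings of its components, applies the substitution lemma, and finds $M'$ at the same type $\ct{B}$ (or, in the $\lambda$-cases, at the type at which the frame popped off $K$ demands it); $K'$ is shorter than $K$ but types at $\ct{C}$ by inversion of the stack. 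For the refocusing transitions ($\toin{M}{x}{N},K\leadsto M,\toin{[\cdot]}{x}{N}::K$, $j\textquoteleft M,K\leadsto M,j::K$, $V\textquoteleft M,K\leadsto M,V::K$) no term is reduced at all; inverting the typing of the redex yields exactly the premises needed to build the new stack frame, anchored at the current scrutinee. The remaining effect transitions are immediate: $\error e$ heads a terminal configuration; $\diverge,K\leadsto\diverge,K$ leaves everything fixed; and for recursion, $\Gamma,z:U\ct{B}\vdash^c M:\ct{B}$ together with $\Gamma\vdash^v\thunk(\mu_z M):U\ct{B}$ yields $\Gamma\vdash^c M[\thunk(\mu_z M)/z]:\ct{B}$ by substitution, with $K$ and hence $\ct{C}$ untouched.

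The only place where dCBPV+ genuinely differs from dCBPV- — and the place where the exclusion of printing, global state and erratic choice is used — is the interaction of the $\toin$ refocusing transition with the dependent Kleisli rule of figure~\ref{fig:depklext}. When $\toin{M_0}{x}{N}$ is typed by that rule its type is $\ct{B}_0[\thunk M_0/z_0]$, which literally mentions $\thunk M_0$, and the tail stack $K$ is typed with precisely that type as its source; after refocusing, the machine reduces the scrutinee through $M_0\leadsto M_1\leadsto\cdots$, and to keep the configuration well-typed at $\ct{C}$ we need $K$ to go on accepting $\ct{B}_0[\thunk M_i/z_0]$, i.e.\ $\ct{B}_0[\thunk M_0/z_0]=\ct{B}_0[\thunk M_i/z_0]$. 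The key observation that makes this work is that every CK-machine transition which actually rewrites a term — every transition other than the refocusing ones, once we stay clear of printing, global state and erratic choice — is a directed instance of an equation of figures~\ref{fig:vceqs}, \ref{fig:vcdepeqs} or~\ref{fig:effeqn}; the recursion transition in particular is exactly the equation $\mu_z M=M[\thunk(\mu_z M)/z]$, which is the sole reason that equation was imposed. Hence $\Gamma\vdash^c M_0=M_i:FA$, so $\thunk M_0=\thunk M_i$ as values, and the two type instances agree because substitution respects judgemental equality. Thus the invariant to carry through the induction is not merely ``the configuration is well-typed of type $\ct{C}$'' but ``\dots and the source type of every $\toin$-frame equals, up to judgemental equality, $\ct{B}_0[\thunk M/z_0]$ for the computation $M$ currently facing that frame''. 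I expect the main obstacle to be exactly this bookkeeping: reading the $\toin$-frame rule of figure~\ref{fig:depstacks} in the mildly generalized ``anchored'' form the Kleisli rule forces, stating the strengthened invariant precisely, checking that refocusing establishes it and that no admissible transition falsifies it, and pinpointing that it is precisely $\nondet{i}{M_i}\leadsto M_j$, $\print{n}{M}\leadsto M$, $\writecell{s'}{M}\leadsto M$ and $\readcell{s'}{M_{s'}}\leadsto M_s$ — none of which are judgemental equalities — that break it, which is exactly why the next theorem has to buy subject reduction back with coercion rules.
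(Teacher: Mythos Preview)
Your proposal is correct and follows essentially the same approach as the paper's proof: both isolate the interaction between the $\mathsf{to}$-frame and the dependent Kleisli rule as the only delicate case, and both resolve it by the observation that every term-rewriting transition (in the absence of the excluded effects) is a directed instance of a judgemental equality, so that the scrutinee stays judgementally equal to the original $M_0$ and hence $\ct{B}[\thunk M_0/z]=\ct{B}[\thunk M_i/z]$. The paper phrases this as ``minimal sequences $M,K\leadsto^* M',K$ are applications of judgemental equalities'' and singles out the $\return V$ pop as the dangerous step, whereas you make the bookkeeping explicit via a strengthened invariant and flag the need to read the $\mathsf{to}$-frame typing rule of figure~\ref{fig:depstacks} in an ``anchored'' form; this extra explicitness is a mild improvement in rigour but not a different argument.
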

\begin{proof}[Proof (sketch)] Due to the rule for dependent Kleisli extensions, we have to take care that every transition still results in a well-typed configuration, if we started with one. In that case, it is clear from the definition of the CK-machine that the type of the configuration does not change during the transition. Suppose we start from a configuration $M,K$ with $\Gamma\vdash^c M:\ct{B}$ and $\Gamma;\ct{B}\vdash^k K:\ct{C}$. What could conceivably happen during a transition is that the resulting configuration $M',K'$ can be given types $\Gamma\vdash^c M':\ct{B}'$  and $\Gamma';\ct{B}''\vdash^k K':\ct{C}$, but not such that $\Gamma\vdash \ct{B}'=\ct{B}''$.

The only transition that can possibly cause this problem is the one for initial computation $\return V$ (see figure \ref{fig:ckmachine}). Indeed, what could have happened is that we started from a configuration $\toin{M}{x}{N},K$, with $\Gamma\vdash^c \toin{M}{x}{N}: \ct{B}[\thunk M/z]$, which transitions to $M,\toin{[\cdot]}{x}{N}::K$. After this, we continue reducing $M$, until we end up in a configuration $\return V, \toin{[\cdot]}{x}{N}::K$. It is now critical when we apply one more transition (the dangerous one) that for the resulting configuration $N[V/x],K$ we again have that $\Gamma\vdash N[V/x]:\ct{B}[\thunk M/z]$, while we know that $\Gamma\vdash N[V/x]:\ct{B}[\tr V/z]$.

This indeed turns out to be the case in pure dCBPV+, as all minimal sequences of transitions of the shape $M,K\leadsto M',K$ just consist of an application of a directed version (left to right) of one of the judgemental equalities of figure \ref{fig:vceqs} or \ref{fig:vcdepeqs}. Further, recall that $V= V'$ implies that $\ct{B}[V/x] = \ct{B}[V'/x]$. In particular, we have that $M= \return V$ so $\ct{B}[\thunk M/z]=\ct{B}[\tr V/z]$ in the critical transition above. Here, we know that $N[V/x]:\ct{B}[\tr V/x]$ and, therefore, by the conversion rule, $N[V/x]:\ct{B}[\thunk M/x]$.

The same argument applies without problems when we add errors, divergence or recursion to dCBPV+. Indeed, after an error, no transition applies so we are safe. $\diverge$ only transitions to $\diverge$ so this does not pose any problems either. The transition for the fixpoint combinator does not break subject reduction as we have included it as a judgemental  equality in figure \ref{fig:effeqn}.
\end{proof}
The proof above shows why subject reduction may fail in presence of printing, global state and erratic choice: the effects $M,K\leadsto M',K$ of their transitions of figures \ref{fig:opsemdivs} and \ref{fig:opsemprint} on computations are not contained in the judgemental equalities we consider (see figure \ref{fig:effeqn}) in the sense that not $M=M'$. In this sense, they differ from the other transitions we have considered. In fact, it is not reasonable to demand such an equality. In particular, in the case of reading global state and erratic choice, that would lead to all computations being equated.

On closer inspection, however, all we really needed to establish subject reduction was an inclusion of computation types, whenever $M,K,m,s\leadsto M',K,m',s'$,\\
\\
\AxiomC{$\Gamma\vdash^c N:\ct{B}[\thunk M'/z]$}
\UnaryInfC{$\Gamma\vdash^c N:\ct{B}[\thunk M/z].$}
\DisplayProof
\\\\
These rules may in fact be reasonable to include in our type system. The idea is that the type of a computation becomes more determined in the computation progresses. We list them in figure \ref{fig:dcbpvplussub}.
\begin{figure}[!ht]
\fbox{
\parbox{\linewidth}{
\begin{tabular}{ll}
\AxiomC{$\Gamma\vdash^c N:\ct{B}[\thunk M/z]$}
\UnaryInfC{$\Gamma\vdash^c N:\ct{B}[\thunk (\print{m}{M})/z]$}
\DisplayProof\hspace{60pt}
&
\AxiomC{$\Gamma\vdash^c N:\ct{B}[\thunk M_{i'}/z]$}
\UnaryInfC{$\Gamma\vdash^c N:\ct{B}[\thunk (\nondet{i}{M_i})/z]$}
\DisplayProof\\
&\\
\AxiomC{$\Gamma\vdash^c N:\ct{B}[\thunk M)/z]$}
\UnaryInfC{$\Gamma\vdash^c N:\ct{B}[\thunk (\writecell{s}{M})/z]$}
\DisplayProof
&
\AxiomC{$\Gamma\vdash^c N:\ct{B}[\thunk M_{s'}/z]$}
\UnaryInfC{$\Gamma\vdash^c N:\ct{B}[\thunk (\readcell{s}{M_s})/z]$}
\DisplayProof
\end{tabular}
}
}
\caption{\label{fig:dcbpvplussub} The rules we add to dCBPV+ to restore subject reduction in presence of printing, global state and erratic choice.}
\end{figure}\quad\\
Specifically, we see that a computation type in which we substitute (the thunk of) a computation with multiple branches (erratic choice, reading state) has to contain each of the types obtained by substituting (the thunk of) any of the branches instead. Similarly, a type in which we substitute (the thunk of) a computation which writes (printing, writing a state) and then performs a computation $M$ has to contain the type obtained by just substituting $M$.

As anticipated, these rules restore the subject reduction property for dCBPV+ with printing, global state and erratic choice.
\begin{theorem}[Subject Reduction 2] If we add the rules of figure \ref{fig:dcbpvplussub} to dCBPV+, each transition of the CK-machine from a well-typed configuration, results in a well-typed configuration with the same type, even if we add printing, global state and erratic choice to the calculus.\end{theorem}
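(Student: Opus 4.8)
The plan is to replay the proof of Subject Reduction 1 and patch exactly the point where it breaks. Recall that there the only delicate transition is the one that pops a sequencing frame, $\return V, \toin{[\cdot]}{x}{N}::K \leadsto N[V/x], K$: the body $N$ is typed, via dependent Kleisli extension (figure \ref{fig:depklext}), against a substitution instance involving $\thunk M_0$, where $M_0$ is the computation that was in focus when the frame $\toin{[\cdot]}{x}{N}::K$ was pushed, so the configuration carries a type obtained by substituting $\thunk M_0$, whereas $N[V/x]$ is readily given the type obtained by substituting $\tr V$ instead. In pure dCBPV+ (and with errors, divergence and recursion) the reduction $M_0 \leadsto^{*} \return V$ responsible for this is a chain of directed judgemental equalities, so $\thunk M_0 = \tr V$ and the conversion rule closes the gap; every other transition leaves the type $\ct{C}$ of the configuration literally unchanged. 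The new feature, with printing, global state and erratic choice, is that some links of that chain are no longer judgemental equalities.

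First I would record the shape of the extra transitions. Each effect transition of figures \ref{fig:opsemdivs} and \ref{fig:opsemprint} has the form $M, K, m, s \leadsto M', K, m', s'$ with $M'$ obtained from $M$ either by discarding a head action ($\print{m}{M'}\leadsto M'$, $\writecell{s'}M'\leadsto M'$) or by selecting a branch ($\nondet{i}{M_i}\leadsto M_j$, $\readcell{s}{M_s}\leadsto M_s$); the stack and the type $\ct{C}$ are untouched. For each of these four patterns, the matching rule of figure \ref{fig:dcbpvplussub} is precisely the assertion that a term of type $\ct{B}[\thunk M'/z]$ is again a term of type $\ct{B}[\thunk M/z]$; the four subtyping rules are in bijection with the four effect transitions, which is exactly the correspondence I want to exploit.

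Then I would redo the frame-pop argument. The reduction $M_0 \leadsto^{*} \return V$ now interleaves directed judgemental equalities with effect transitions, and I must conclude $N[V/x] : \ct{B}[\thunk M_0/z]$ from $N[V/x] : \ct{B}[\tr V/z]$. I would prove this by induction along the reduction, reading it backwards from $\return V$: a judgemental-equality step $M' \leadsto M''$ changes nothing, since $\thunk M' = \thunk M''$; an effect step $M' \leadsto M''$ is absorbed by one application of the corresponding rule of figure \ref{fig:dcbpvplussub}, converting a $\ct{B}[\thunk M''/z]$-typing into a $\ct{B}[\thunk M'/z]$-typing; and a step that pushes or pops an inner sequencing frame $\toin{M'_0}{y}{P}$ is handled by re-instantiating the ambient type family — one applies the inductive coercion for the shorter sub-reduction $M'_0\leadsto^{*}\return V'$ at the family $\ct{B}[\thunk(\toin{\force w}{y}{P})/z]$ in the variable $w$, and uses the judgemental equalities $\force\thunk M = M$ and $\toin{(\return V')}{y}{P}=P[V'/y]$ of figure \ref{fig:vceqs} to move between $\thunk(\toin{M'_0}{y}{P})$ and $\thunk(P[V'/y])$. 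The $\Pi$-projection and $\Pi_{x:A}$-application frames never invoke the Kleisli rule, so they are handled verbatim as in Subject Reduction 1.

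I expect the bookkeeping in the last case — carrying a sufficiently general type family through the induction so that nested sequencing frames can be absorbed via the $\toin$/$\return$/$\force$/$\thunk$ equations — to be the only genuine subtlety; everything else is the Subject Reduction 1 argument augmented by the one-line-per-effect use of figure \ref{fig:dcbpvplussub}. A secondary point to get right is that the induction measure (the length of the reduction from a pushed computation back to its returned value, with the stack restored) is well-founded and faithfully tracks the untyped CK-machine, which follows from determinism; once this is set up, each individual transition is checked routinely and subject reduction follows by the usual induction on configurations.
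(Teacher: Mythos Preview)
Your proposal is correct and follows the same line as the paper, which does not give a separate proof for this theorem but simply observes (in the paragraph preceding it) that the only missing ingredient in the Subject Reduction 1 argument was a coercion from $\ct{B}[\thunk M'/z]$ to $\ct{B}[\thunk M/z]$ for each effect transition $M\leadsto M'$, and that the rules of figure~\ref{fig:dcbpvplussub} supply exactly these. Your explicit treatment of nested sequencing frames via re-instantiation of the ambient family is more detailed than anything the paper spells out; it is the right idea, and your caveat that the bookkeeping there is the only genuine subtlety is accurate---in particular, one must check that the auxiliary family $\ct{B}[\thunk(\toin{\force w}{y}{P})/z]$ is well-formed, which can alternatively be sidestepped by invoking the algebraicity equations of figure~\ref{fig:effeqn} to commute the effect constructor past the inner $\mathsf{to}$.
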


We note that the rules of figure \ref{fig:dcbpvplussub} break uniqueness of typing (which dCBPV+ does otherwise have, presuming that we sufficiently annotate $\diverge$, $\error{e}$ and $\mu_z M$ with types). This may not be desirable. We would suggest to formalize the system using a notion of subtyping \cite{aspinall1996subtyping,luo1999coercive} which should allow us to establish the existence of a minimal type (rather than a unique type) for each well-typed term.

\begin{remark}[Type Checking] Type checking has been studied in systems that combine dependent types with subtyping and has been shows to be decidable in simple cases \cite{aspinall1996subtyping}. Seeing that we are working with a relatively simple form of subtyping, we have hope that we would be able to extend a type checking algorithm for dCBPV- to dCBPV+. An important first step, however, would be the development of a type checker for dCBPV-.
\end{remark}

\clearpage
\section{More Connectives}
In this section, we give a brief outline of how we think one could proceed to extend the dCBPV analysis to more connectives like dependent projection products, dependently typed EEC connectives and inductive type families and universes. Many of the details on these topics remain to be worked out. However, we feel we should at least sketch the situation to the reader as the system of connectives discussed up to this point is not expressive enough for most practical dependently typed programming and theorem proving.

\subsection{Dependent Projection Products and More Negative Connectives}\label{sec:depprojprod}
It was somewhat surprising, perhaps, that while dependent pattern matching products arise so naturally in CBPV, dependent projection products seem less natural. The reader should compare this to the status of additive $\Sigma$-types in linear logic, which are  less straightforward than multiplicative $\Sigma$-types. In principle, we could include the system of rules of figure \ref{fig:addsigma} for them in dCBPV to replace $\Pi_{1\leq i\leq n}$-types.\vspace{-5pt}
\begin{figure}
[!ht]
\fbox{
\parbox{\linewidth}{
\begin{tabular}{ll}
\AxiomC{$\vdash \Gamma,z_1:U\ct{B_1},\ldots,z_n:U\ct{B_n}\ctxt$}
\UnaryInfC{$\Gamma\vdash \Pi_{1\leq i\leq n}^{dep} \ct{B_i}\ctype$}
\DisplayProof
&
\AxiomC{$\{\Gamma\vdash^c M_i:\ct{B_i}[\thunk M_1/z_1,\ldots,\thunk M_{i-1}/z_{i-1}]\}_{1\leq i\leq n}$}
\UnaryInfC{$\Gamma \vdash^c \lambda_i M_i : \Pi_{1\leq i\leq n}^{dep}\ct{B_i}$}
\DisplayProof\\
&\\
&
\AxiomC{$\Gamma\vdash^c M:\Pi_{1\leq i \leq n}^{dep} \ct{B_i}$}
\UnaryInfC{$\Gamma\vdash^c i\textquoteleft M: \ct{B_i}[\thunk 1\textquoteleft M/z_1,\ldots,\thunk (i-1)\textquoteleft M/z_{i-1}]$}
\DisplayProof
\end{tabular}
}
}
\caption{\label{fig:addsigma} Rules for dependent projection products. These should satisfy the obvious $\beta$- and $\eta$-laws.}
\end{figure}\vspace{-5pt}\quad \\
This allows us to define the appropriate CBV- and CBN-translations for dependent projection products in dCBPV, exactly as one defines the translation for simple projection products. This translation re-enforces the idea that the CBV-translation of a type $x_1:A_1,\ldots,x_n:A_n\vdash A\type$ should be $z_1:UFA_1^v,\ldots,z_n:UFA_n^v\vdash A^v\vtype $. We note that we have CBV- and CBN-translations of dependent projection products (which have a dependent/strong elimination principle) even in dCBPV-. Moreover, we can use the usual operational semantics of computations of type $\Pi_{1\leq i\leq n}\ct{B_i}$ for these types.

A curiosity about these type formers is that they do not have a corresponding rule for forming complex stacks, unlike all the other connectives. Hence, although we can formulate a sound and complete categorical semantics for dependent projection products (we demand strong $n$-ary $\Sigma$-types in $\Dcat$ in the sense of objects $\Pi_{1\leq i\leq n}^{dep}\ct{B_i}$ such that $\proj{\Gamma}{U\Pi_{1\leq i\leq n}^{dep}\ct{B_i}}=\proj{\Gamma.U\ct{B_1}.\cdots.U\ct{B_{n-1}}}{U\ct{B_n}};\ldots;\proj{\Gamma}{U\ct{B_1}}$), we cannot include a condition for complex stacks as we have done for the other connectives.

An issue that is perhaps related is the fact that many models in practice fail to support these connectives. In particular, they are hard to obtain in models of linear logic, where they would in particular give additive $\Sigma$-types in the sense of objects $\Sigma_A^\&B$ such that $!\Sigma_A^\&B\cong \Sigma_{!A}^\otimes !B$, and in models of the monadic meta-language, where they would give rise to objects $\Pi_{1\leq i\leq 2}^{dep} A_i$ such that $T\Pi_{1\leq i\leq 2}^{dep} A_i\cong \Sigma_{TA}TB$.

One could add similar negative versions of the other positive connectives like sum types and identity types. Their categorical semantics would correspond to having computation type formers $R(\ct{B_1},\ldots,\ct{B_n})$ that $U$ maps to $R'(U\ct{B_1},\ldots, U\ct{B_n})$ where $R'$ is the corresponding positive type former. One advantage is that one can formulate good CBV- and CBN-translations for these connectives with dependent elimination principles, even in dCBPV-. It remains to be investigated if the obvious operational semantics, where destructors push to the stack and constructors pop the stack and substitute, is well-behaved.

For instance, we could define negative $\Id$-types (which I have called additive $\Id$-types in the case of linear logic \cite{vakar2014syntax}) by the rules of figure \ref{fig:negid}. \hspace{-1pt}Categorically, they correspond to $\Id_{\ct{B}}$ s.t. \mbox{$\hspace{-2pt}\diag{\Gamma}{U\ct{B}}\hspace{-3pt}=\proj{\Gamma.U\ct{B}.U\ct{B}}{U\Id_{\ct{B}}}$.}
\vspace{-10pt}
\begin{figure}
[!ht]
\fbox{\begin{tabular}{ll}
\AxiomC{$\Gamma\vdash^c M_1:\ct{B}$}
\AxiomC{$\Gamma\vdash^c M_2:\ct{B}$}
\BinaryInfC{$\Gamma\vdash \Id_{\ct{B}}(M_1,M_2)\ctype$}
\DisplayProof
&
\AxiomC{$\Gamma\vdash^c M:\ct{B}$}
\UnaryInfC{$\Gamma\vdash^c \mathsf{crefl}\; M : \Id_{\ct{B}}( M, M)$}
\DisplayProof\\
&\\
\AxiomC{$\Gamma\vdash^c M:\Id_{\ct{B}}( M_1, M_2)$}
\AxiomC{$\Gamma,x:U\ct{B}\vdash^c c:C[x/x',\thunk \mathsf{crefl}\; x/p]$}
\BinaryInfC{$\Gamma\vdash^c \mathsf{c}\idpm{M}{x}{c} : C[\thunk M_1/x,\thunk M_2/x',\thunk M/p]$}
\DisplayProof
\end{tabular}
}
\caption{\label{fig:negid} The formation, introduction and elimination rules for negative identity types to which we can add the obvious $\beta$- and $\eta$-rules.}
\end{figure}

\clearpage
\subsection{Dependent EEC and the Relationship with Linear Logic}\label{sec:deec}
As observed by Benton and Wadler \cite{benton1996linear}, linear logic can be seen as the term calculus of stacks for certain commutative effects. The question remained, if more general, possibly non-commutative effects would give rise to a certain kind of generalized, possibly non-commutative linear logic. In particular, the question was if one could define a monoidal-like structure on stacks in a general model of CBPV which generalizes the tensor of linear logic and similarly for the lollipop. A partial positive answer to this was given by the Enriched Effect Calculus (EEC) \cite{egger2009enriching}, telling us that any model of simple CBPV fully and faithfully embeds into a model where we have a binary operation $F(-)\otimes -$ (conventionally, somewhat misleadingly, written $!(-)\otimes -$) which takes a value type and a computation type and produces a computation type and a binary operation $U(-\multimap -)$ (conventionally written $-\multimap -$) which takes two computation types to a value type. Both operations extend to terms in an appropriate way. The notation is chosen to be suggestive as these operations do not generalize the plain linear logic operations $\otimes$ and $\multimap$ but rather the above composite connectives that one can define the LNL-calculus \cite{benton1995mixed}. For instance, we obtain the law that $FA\otimes FB\cong F(A\times B)$, showing us in particular that $F1$ is a right unit for the product operation.

Independently, linear dependent type theory forces a similar operation on us if we wish to extend $\otimes $ to a dependent connective \cite{vakar2014syntax}. Because types are only allowed to depend on cartesian types and not linear ones, the best we can do is a multiplicative $\Sigma$-type $\Sigma_{F(-)}^\otimes-$. Seemingly for two very different reasons, the connective $F(-)\otimes -$ seems to be the more robust notion than $-\otimes -$, if one wants to generalize. We believe this is not a coincidence as the semantics of CBPV already forces various notions of dependent type theory on us.

The question now rises if we can obtain a dependently typed EEC, as both linear dependent types and EEC seem to generalize linear logic in orthogonal but compatible directions. We believe such a calculus would indeed be very natural. We only very briefly note how one could formulate its natural categorical semantics. As we can treat it fibrewise, the connective $U(-\multimap-)$ does not pose much of a problem. Its interpretation just corresponds to having objects $U(\ct B\multimap \ct C)$ such that $\Ccat(\Gamma)(1,U(\ct B\multimap \ct C))\cong\Dcat(\Gamma)(\ct B,\ct C)$. The connective $F(-)\otimes -$, however, generalizes to a dependent connective, which we shall write $\Sigma_{F(-)}^\otimes -$. $\Sigma_{FA}^\otimes -$ is interpreted precisely as a left adjoint to $-\{\proj{\Gamma}{A}\}$ in $\Dcat$, satisfying the left Beck-Chevalley condition. One of its properties is that it satisfies the equation $F(\Sigma_AB)=\Sigma_{FA}^\otimes FB$. In particular, it generalizes $F$ as $A\mapsto \Sigma_{FA}^\otimes F1=FA$.

\clearpage
\subsection{(Co)inductive type families and Universes}
We can easily add strong type forming mechanisms like inductive type families \cite{dybjer1994inductive} and type universes to dCBPV. The question if we can formulate satisfactory CBV- and CBN-translations for the resulting system requires some thought, however. Therefore, for the sake of presentation, we have refrained from adding these mechanisms.

Inductive type families should be included as type formers on value types. Their CBN-translation would apply $U$ to all input types of the construction, apply the inductive type construction and apply $F$ to the result. While the CBV-translation of mere inductive types is straightforward as it can keep the type fixed, one probably has to translate the more complicated inductive families by applying $UF $ to their index types, as we have done for $\Id$-types. This is necessary if we are to substitute CBV terms into the translated types. It remains to be seen if all $\eta$-laws survive this translation as the $\eta$-laws for $\Id$-types seem to have survived in a less robust way than the $\eta$-laws for the other types. Similarly, coinductive types should be treated as computation types. An elegant account of recursive types in CBPV can be found in \cite{levy2012call}. It should be investigated if these can be generalized to include type dependency in the style of Dybjer's inductive families. 

It is not yet clear if one can obtain satisfactory CBV- and CBN-translations for more expressive schemes like induction-recursion, of which type universes (\`a la Tarski) are a special case \cite{dybjer2000general}. As a rule of thumb, inductive-recursive families are more like an inductive than a coinductive construction, hence one would expect them to arise as value types. In the particular case of universes, we would expect separate universes $\mathcal{U}_v$ and $\mathcal{U}_c$ (both of which are value types) to classify value and computation types respectively. It seems most natural to include rules like\\
\\
\AxiomC{$\vdash\Gamma\ctxt$}
\UnaryInfC{$\Gamma\vdash^v 1:\mathcal{U}_v$}
\DisplayProof\\
\\
to build values of the universes which code for types and rules like\\
\\
\AxiomC{}
\UnaryInfC{$x:\mathcal{U}_v\vdash \mathsf{El}_v(x)\vtype$}
\DisplayProof\\
\\
to make types out of codes.

However, this would not be enough to obtain CBV- and CBN-translation for universes, it seems. For that purpose, one would have to replace the latter rules with rules like\\
\\
\AxiomC{}
\UnaryInfC{$x:UF\mathcal{U}_v\vdash \mathsf{El}_v(x)\vtype.$}
\DisplayProof
\\
\\
A notable feature of the resulting system is that the CBV-translation of a code $x_1:A_1, \ldots x_n:A_n\vdash M:\mathcal{U}$ for a type family would be $x_1:A_1, \ldots x_n:A_n\vdash^c M^v :F \mathcal{U}_v$. We can apply a Kleisli extension to this to obtain a code for a value type depending on the context $z_1:UFA_1, \ldots z_n:UFA_n$.

It is not clear that we actually need CBV- and CBN-translations for the whole system. Perhaps it is more useful to work with ordinary universes added to dCBPV on the value side and to avoid effects occurring in types. That is, perhaps the clear separation in CBPV of pure values from effectful computations is a feature that makes universes more tractable than in the usual effectful CBV- and CBN-type theories. Either way, it is clear that a thorough investigation of the issue is warranted. In particular, the enigmatic nature of CBV-type dependency (dependency on $A$ vs $UFA$) deserves to be clarified further. 
\clearpage

\section{dCBPV as a Language for Certified Effectful Programming?}
In this paper, we have discussed two systems which add dependent types to Levy's CBPV: dCBPV- and dCBPV+. The latter extends the former for theoretical reasons with an extra Kleisli extension principle for dependent functions. The reader may wonder what the pros and cons are of both systems, from a practical point of view. One might also wonder what the relationship is between dCBPV and for instance Hoare type theory, which is an existing framework for certified effectful programming based on dependent type theory \cite{nanevski2006polymorphism,nanevski2008ynot}. While a proper investigation of these issues will be the topic of future work, we share some first thoughts on the matter.
 
\subsection{Dependent Kleisli Extensions: a Bug or a Feature?} Let us first say a bit about how we imagine an implementation of either system could be useful in practice. Dependent types are currently used for two closely related\footnote{Both may be said to be instances of certified programming, where the former places the emphasis on the certification and the latter places it on the programming.} but slightly different purposes: theorem proving and functional programming with an expressive type system which can be used to catch a large class of bugs at compile time. For the former purpose, it is important that we disallow computational effects as they may not correspond to sound logical principles under the Curry-Howard correspondence - think for instance of divergence which would give a trivial proof of any proposition\footnote{In fact, even more innocent seeming principles can easily jeopardise the logical consistency (the existence of uninhabited types) of dependent type theory. For instance, the control operator $\mathsf{call/cc}$, which corresponds to the reasonable seeming principle of double negation elimination in the sense of constructive classical logic, degenerates the logical content of dependent type theory with $\Sigma$- and $\Id$-types if it is available at all types \cite{herbelin2005degeneracy}. }. For the latter, it would be very beneficial to have effects at our disposal.

This makes a monadic (or adjunction, in the case of CBPV) encapsulation of effects particularly suited for a dependently typed system, as it allows us to let effects occur in some parts of our code -the parts we want to read as an algorithm - but not other parts - the parts we want to read as a proof. Perhaps, this should be seen as an argument against the practical use of CBV- and CBN-dependent type theories with first class effects which can occur in an unrestricted fashion. Therefore, this might invalidate one of the arguments to prefer dCBPV+ over dCBPV- as we might not be necessary to define CBV- and CBN-translations of effectful dependent type theory in dCBPV. However, it should be emphasized that dependently typed systems with first class recursion are in fact used in for instance Agda and Idris where a termination check to establish logical consistency is performed separately from type checking \cite{norell2007towards,brady2013idris}.\\
\\ 
In a framework for certified effectful programming based on dCBPV, we imagine algorithms are written as (complex value-free) computations which allows us to use various effects in their construction while we use the value judgement for type checking proofs (and we add $\Pi$-constructors for value types). One could also imagine using various different computation judgements, which allow different effects (or even none at all). Importantly, we can form thunks of computations, which allows predicates to refer to algorithms, meaning that we can use the value judgement to prove properties of the algorithms written with the computation judgement.

Recall that most common use cases of the expressive type system of a pure dependently typed language, like Agda, (ignoring features of polymorphism for the moment) rely on types of the form $C = \Sigma_AB$ where $A$ is some type which is expressible in a simply typed language with (co)inductive types, like Haskell, and $B$ is some (perhaps proof relevant) predicate. \cite{bove2009dependent,norell2007towards} Some examples include $C$ a type of vectors of a fixed length, sorted lists, heaps or binary search trees, red-black trees, suitably balanced trees and a type of $\lambda$-terms up to $\beta\eta$-equality. An algorithm representing an inhabitant of such a type can be seen (at least) in one of two ways: either as a single object $c$ or as a pair $\langle a, b\rangle$ of an algorithm which produces an inhabitant $a$ of $A$ and a proof $b$ that property $B$ holds for this inhabitant.

It is the latter point of view which generalizes better to dCBPV. Indeed, we might want to allow effects in the construction of the algorithm a but not in the proof $b$. In other words, we are likely to be interested in (value) inhabitants of types $C = \Sigma_{UFA}B$ where $A$ is a simple Haskell type and $B$ is some predicate, rather than types of the form $UF\Sigma_AB$.\\
\\
As far as implementation goes, it seems plausible that a type checker could be written at least for dCBPV- with as judgemental equality the least congruence relation generated by $\beta\eta$-equality (not including $\eta$ for $\Id$-types) and algebraicity laws for effects (and the fixpoint equation for recursion, if we wish to include it and give up decidability of type checking). Ideally, we would also like to include as many other program equations (e.g. the Plotkin-Power equations for lookup/update algebras \cite{plotkin2002notions}) in judgemental equality to make the type checker handle them in reasoning.

This might not always be possible, however, for instance because we do not want to give up decidability of type checking or simply because it might not be practically feasible to write an efficient type checker which includes all interesting equations. The remaining equalities would have to be postulated as axiomatic values inhabiting certain $\Id$-types (propositional equalities). In proofs, these are the subject of explicit reasoning by the using: they will have to be dealt with either manually or interactively using tactics.
\\
\\
While it may seem that dependent Kleisli extensions are an important feature for the compositionality of the system, it is not, in fact, clear that they are strictly needed for certified effectful programming. Indeed, according to the point of view sketched in the previous paragraph, algorithms are always non-dependent functions, for which we do in fact have a sequencing (or Kleisli extension) principle. It is only proofs of predicates about these algorithms which arise as dependent functions. These, however, should not exhibit effects anyway so there is no need to perform a Kleisli extension on them.

Moreover, it should be noted that dependent Kleisli extensions come at a cost. For instance, we expect type checking to be a significantly more complicated problem in dCBPV+ than in dCBPV-.

Additionally, the subtyping restrictions which are necessary to ensure subject reduction in dCBPV+ might exclude us to express certain useful predicates in our type system that would be unproblematic in dCBPV-. For instance, while it might be possible to express predicates which state that a function prints a specific string (among other things), we cannot express in our type system the property that it does not print a specific string. Similarly, we might be able to express the property that an algorithm exhibits non-determinism, but we cannot express that it has merely two branches.

Furthermore, we are able to obtain suitable CBV- and CBN-translations of effectful dependent type theory into dCBPV-, at least for negative connectives. This means that as long as we are willing to work with negative equivalents of the usual positive connectives in the sense of section \ref{sec:depprojprod}, dependent Kleisli extensions are not necessary for satisfactory  CBV- and CBN-translations.

On the whole, we are currently inclined to view dependent Kleisli extensions as technical devices that were introduced for theoretical reasons, but which may not be very suitable for practical implementations of dCBPV. The extra complexity they introduce into the implementation of a type checker for may not be worth it unless an important new application of dependent Kleisli extensions is found.

\subsection{Comparison with HTT}
Although a proper comparison will be the subject of future work, we would like to make at least a few basic observations on the relationship between dCBPV and an existing successful framework for certified effectful programming which is also based on dependent type theory: Hoare type theory (HTT) \cite{nanevski2006polymorphism} (implemented in Ynot \cite{nanevski2008ynot}).

Regarding the motivation behind both systems, the that should be made is that while HTT seems to have been developed from the start with the syntactic goal in mind of a language for verifying effectful programs, dCBPV arose almost entirely from semantic considerations. In particular, dCBPV was motivated by the study of models of dependent type theory which naturally encode effects, like its domain semantics \cite{palmgren1990domain} and game semantics \cite{abramsky2015games}, the question if linear dependent type theory \cite{vakar2015syntax} can be interpreted as a dependent type theory with commutative effects and the existing categorical semantics of CBPV with almost begs for a dependently typed generalization \cite{levy2005adjunction}.

Regarding their implementation, HTT expresses a property $\phi$ of an  effectful program $V$ of type $A$ by saying that it inhabits a type $T_\phi A$, where $T_\phi$ are monads which are indexed by formulae $\phi$ formed using a(n) (external) separation logic (which can be implemented using the inductive definitions of Coq). dCBPV sticks closer to the Curry-Howard correspondence in its formulation of properties $\phi$ of an effectful program $M$ of type $FA$: they are types $\phi$ depending on the type of thunks $UFA$ and into which we can, in particular, substitute $\thunk M$ and see if we can construct an inhabitant witnessing the truth of $\phi(\thunk M)$.

\clearpage

\section{Future Work}\label{sec:concl}
In this paper, we gave an extension to the realm of dependent types of Levy's CBPV analysis of CBV- and CBN-$\lambda$-calculi. We hope that this one the one hand can shed some light on the theoretical puzzle of how the relationship between effects and dependent types can be understood. On the other hand, we hope it can provide some guidance in the daunting but worthwhile challenge of combining dependent types with effects in a system which is simultaneously practically useful and elegant. To further achieve these two goals, there are several directions in which we would like to take this work.

Firstly, an immediate priority is to describe a type checking algorithm for dCBPV- for a range of effects together with a prototype implementation of the whole system. As discussed, to achieve this, we hope to rely on existing normalization by evaluation methods for pure dependent type theory on the one hand and for simply typed algebraic effects on the other. It remains to be seen which equations for effects can be included as judgemental equalities - hence be handled by the type checker - and which ones should be postulated as propositional equalities (values inhabiting identity types) - hence be dealt with by hand or using tactics.

Secondly, we would like to answer the question what needs to be done to make dCBPV into a practically useful language for certified effectful programming. In particular, is dCBPV- enough for this purpose, or do we have to resort to the more expressive but involved system dCBPV+? Do type forming mechanisms like inductive families and type universes provide enough expressive power? What are the advantages and disadvantages of dCBPV's different way of reasoning about effects compared to Hoare type theory? 

Thirdly, recently, a game semantics was given for dependent type theory in \cite{abramsky2015games}. This is easily seen to give models for dependent type theory with a range of effects like recursion (non-termination), local ground store and control operators. These should be models of a CBN-calculus with dependent projection products (and various other connectives). One of the difficulties in constructing the game semantics for dependent types was that there was no existing body of literature which discusses what exactly qualifies as such an effectful model of dependent type theory. For instance, which equations should we require to hold? We believe the present paper provides a satisfactory answer to these questions. In future work, we hope to demonstrate that this game semantics naturally extends to give a model of dCBPV+ with all connectives we discussed (where the identity types are intensional) which, depending on the variety of strategies we consider, supports recursion, local state and control operators (and perhaps general references).

Fourthly, it should be investigated to which extent the CBV- and CBN-translations into dCBPV+ extend to more expressive type forming mechanisms like inductive families and inductive-recursive definitions like type universes. In particular, we hope this would lead to a better understanding of the rather enigmatic nature of CBV-type dependency.

Finally, it remains to be seen what the status is of subject reduction in dCBPV+ in presence of other effects. We are thinking in particular of control operators, local state and general references. 
\clearpage
\bibliographystyle{splncs}
\bibliography{tau}

\end{document}